
\documentclass[final,1p,times]{elsarticle}




\usepackage{amssymb}


\journal{Computer Methods in Applied Mechanics and Engineering}

\usepackage{amssymb}
\usepackage{amsthm,amsmath}
\usepackage{subcaption}
\usepackage{algorithm,algorithmic}
\usepackage{color}
\usepackage[notrig]{physics}
\usepackage{hyperref}
\usepackage[T1]{fontenc}
\usepackage[german,french,english]{babel}

\newtheorem{thm}{Theorem}[section]
\newtheorem{prop}[thm]{Proposition}
\newtheorem{lem}[thm]{Lemma}

\theoremstyle{definition}

\newtheorem{example}[thm]{Example}

\newcommand\R{\mathbb{R}}
\newcommand\N{\mathbb{N}}
\newcommand\C{\mathbb{C}}
\newcommand\Z{\mathbb{Z}}

    \definecolor{darkgreen}{RGB}{0,127,0}
    

\newcommand{\Dev}{\ensuremath\mathop{\mathrm{Dev}}}

\newcommand{\paramfineh}{\rho_h}
\newcommand{\paramunifh}{t_h}

\begin{document}

\begin{frontmatter}



\title{Data--Driven Games in Computational Mechanics}


\author{
K.~Weinberg${}^a$, L.~Stainier${}^b$, S.~Conti${}^c$ and M.~Ortiz${}^{c,d}$
}
\affiliation{organization={Chair of Solid Mechanics, Department of Mechanical Engineering, Universit\"at Siegen},
            addressline={Paul-Bonatz-Str. 9-11},
            city={Siegen},
            postcode={57076},
            country={Germany}}
\affiliation{organization={Nantes Universit\'e, Ecole Centrale Nantes, CNRS, GeM, UMR 6183},
            city={Nantes},
            postcode={F-44000},
            country={France}}
\affiliation{organization={Institut f{\"u}r Angewandte Mathematik and Hausdorff Center for Mathematics, Universit\"at Bonn}, 
            addressline={Endenicher Allee 60},
            city={Bonn},
            postcode={53115},
            country={Germany}}
\affiliation{organization={California Institute of Technology, Engineering and Applied Science Division},             city={Pasadena},
            postcode={91125},
            state={CA},
            country={USA}}

\begin{abstract}
We resort to game theory in order to formulate Data-Driven methods for solid mechanics in which stress and strain players pursue different objectives. The objective of the stress player is to minimize the discrepancy to a material data set, whereas the objective of the strain player is to ensure the admissibility of the mechanical state, in the sense of compatibility and equilibrium. We show that, unlike the cooperative Data-Driven games proposed in the past, the new non-cooperative Data-Driven games identify an effective material law from the data and reduce to conventional displacement boundary-value problems, which facilitates their practical implementation. However, unlike supervised machine learning methods, the proposed non-cooperative Data-Driven games are unsupervised, {\sl ansatz}--free and parameter--free. In particular, the effective material law is learned from the data directly, without recourse to regression to a parameterized class of functions such as neural networks. We present analysis that elucidates sufficient conditions for convergence of the Data-Driven solutions with respect to the data. We also present selected examples of implementation and application that demonstrate the range and versatility of the approach.
\end{abstract}



\begin{keyword}
data-driven methods \sep
game theory \sep
computational solid mechanics \sep
non-cooperative data-driven games \sep
unsupervised machine learning \sep
effective material law
\end{keyword}

\end{frontmatter}

\section{Introduction}\label{sec:intro}

Game theory concerns itself with scenarios involving several players seeking strategies that strive to minimize their respective costs or, equivalently, maximize their respective payoffs. In general, the cost of one player's strategy depends on the strategies adopted by the remaining players and, in consequence, the optimal strategies of the players are coupled to each other. In one scenario, the players optimize their strategies {\sl cooperatively} by striving to minimize a joint cost computed as a weighted average of all costs, a condition known as {\sl Pareto optimality}. In another scenario, the players proceed {\sl non-cooperatively} by each seeking to minimize its own cost independently. Game theory was pioneered, {\sl inter aliis}, by economists Vilfredo Pareto \cite{Pareto:1909} and John Nash \cite{Nash:1951}, and mathematician John von Neumann \cite{Neumann:1928}, in seminal contributions and has been of foundational importance in a variety of fields, including economics, social sciences, evolutionary biology, computer science, and others.

Despite its phenomenal success in other fields, game theory has been applied sparingly to mechanics or not at all. From a mechanics perspective, deterministic game-theoretical problems may be regarded as instances of {\sl coupled problems} with a particular variational structure (cf.~Section~\ref{JPLu7K} for a brief review). Among these problems, {\sl inf-sup} problems may be regarded as {\sl zero-sum} games. These correspondences and others open up the opportunity of bringing game-theoretical concepts, tools and results to bear on a wide range of problems in mechanics, a potential that remains largely unattained at present.

In this work, we resort to game theory in order to formulate Data-Driven methods for solid mechanics in which {\sl stress} and {\sl strain players} pursue different objectives: the objective of the stress player is to minimize the discrepancy to a material data set, whereas the objective of the strain player is to ensure the admissibility of the mechanical state, in the sense of compatibility and equilibrium. We show that, unlike the cooperative Data-Driven games proposed in the past \cite{Kirchdoerfer:2016, Kirchdoerfer:2017, conti:2018, Prume:2023, ContiHoffmannOrtiz2023}, the new non-cooperative Data-Driven games identify an {\sl effective material law} from the data and reduce to conventional displacement boundary-value problems, which facilitates their practical implementation. In particular, the Data-Driven effective material law can be conveniently implemented as a standard user-supplied material in commercial finite-element software.

This change of mood notwithstanding, it bears emphasis that, unlike supervised machine learning methods, the proposed non-cooperative Data-Driven games are unsupervised, {\sl ansatz}--free and parameter--free. In particular, the effective material law is learned from the data {\sl directly}, without recourse to regression to a parameterized class of functions such as neural networks. In this sense, the new non-cooperative Data-Driven games follow in the vein of prior cooperative Data-Driven games \cite{Kirchdoerfer:2016, Kirchdoerfer:2017, conti:2018, Prume:2023, ContiHoffmannOrtiz2023} by striving to effect a direct, unsupervised and model--free connection between data and prediction.

By identifying stress and strain as players, the proposed non-cooperative Data-Driven games fall within the set-oriented formulation of mechanics problems, briefly reviewed in Section~\ref{M8mXkn0}. The connection between such problems and game theory is introduced in Section~\ref{2bzQRE} in the context of cooperative games, in which stress and strain strive to achieve a common objective of minimizing distance to a material set while satisfying the field equations of compatibility and equilibrium. This cooperative strategy reproduces---and provides a game-theoretical interpretation for---set-oriented Data-Driven methods proposed in \cite{Kirchdoerfer:2016, Kirchdoerfer:2017, conti:2018, Prume:2023, ContiHoffmannOrtiz2023}. The transition from cooperative to non-cooperative moods is presented in Section~\ref{7ATkpe} by regarding stress and strain as adversarial players, each pursuing its own objective. Evidently, this strategy is suboptimal with respect to the cooperative Data-Driven strategy, but it offers the significant practical advantage of reducing to a conventional and well-posed displacement problem, Section~\ref{jtA0LI}, amenable to approximation, Section~\ref{kEV3cF}.

A particularly important case concerns approximations based on empirical point-data sets, Section~\ref{u5RpCF}, e.~g., measured empirically or computed from micromechanics. The central question then concerns the elucidation of conditions on the data that ensure the convergence of the Data-Driven solutions to the solution of the underlying---and unknown---material law. We provide rigorous conditions for convergence with respect to the data for two different scenarios: i) {\sl Uniformly convergent data}, in which the sampling error decreases as data is added to the material-data set in a uniform manner controlled by strict upper bounds, Section~\ref{dQi63n}, and ii) {\sl noisy data with outliers}, in which the data concentrates around the limiting material law in a weak or average sense that allows for the presence of outliers, Section~\ref{kuP0lq}. In this second scenario, convergence requires regularization in the form of local data averages taken over carefully chosen local neighborhoods in order to mitigate the effect of the outliers, Section~\ref{9kYy4s}.

Finally, we present selected examples of implementation and application that demonstrate the range and versatility of the approach, Section~\ref{Nd4BgR}. In particular, the examples illustrate how the approach can be implemented within a standard displacement finite-element framework in any dimension, with and without regularization, and using interative solvers such as dynamic relaxation and Newton-Raphson iteration. The examples also bear out the type of convergence with respect to the data anticipated by the analysis.

\section{Elements of game theory} \label{JPLu7K}

{\sl Game theory} is a well-developed branch of mathematics (cf., e.~g., \cite{Roubicek:2020} for a general modern account), but it has not been extensively applied to solid mechanics and may, therefore, stand a brief review. We specifically collect basic elements of the theory required in subsequent developments.

For present purposes, it suffices to consider two-player, finite dimensional games (cf., e.~g., \cite{Leitmann:1974, Ferguson:2020}). Specifically, we consider two players seeking strategies $u \in \mathbb{R}^m$ and $v \in \mathbb{R}^n$ who strive to minimize their costs $F(u,v)$ and $G(u,v)$, respectively. They can do so {\sl cooperatively}, by minimizing a weighted average of their costs
\begin{equation}\label{Jw51yf}
    J(u,v) = \lambda_F F(u,v) + \lambda_G G(u,v) ,
    \quad
    \lambda_F \geq 0, \; \lambda_G \geq 0, \; \lambda_F + \lambda_G = 1,
\end{equation}
i.~e., by seeking a joint strategy $(u^*,v^*)$ such that
\begin{equation}
    J(u^*,v^*) \leq J(u,v),
    \quad
    \text{for all} \; u \in \mathbb{R}^m, \; v \in \mathbb{R}^n ,
\end{equation}
a condition known as {\sl Pareto optimality}; or they can do so {\sl non-cooperatively}, by each player seeking strategies such that
\begin{subequations}\label{eW0j3a}
\begin{align}
    &
    F(u^*,v^*) \leq F(u,v^*), \quad \text{for all} \; u \in \mathbb{R}^m ,
    \\ &
    G(u^*,v^*) \leq G(u^*,v), \quad \text{for all} \; v \in \mathbb{R}^n ,
\end{align}
\end{subequations}
a condition known as {\sl Nash equilibrium}.

We note that, in both (\ref{Jw51yf}) and  (\ref{eW0j3a}), the cost of each player depends on the strategy of the competitor, which they do not control. The players seek to minimize their own costs either jointly, as in (\ref{Jw51yf}) or without regard for the cost of the competitor, as in (\ref{eW0j3a}).

An important class of non-cooperative games is that of two-player zero--sum games. These are games in which the cost of one player is the negative of the other, i.~e., one player loses what the other player gains. Under these conditions, we have
\begin{equation}
    F(u,v) = L(u,v) ,
    \quad
    G(u,v) = -L(u,v) ,
\end{equation}
for some {\sl Lagrangian} $L(u,v)$, and the Nash equilibrium conditions (\ref{eW0j3a}) become
\begin{equation}
    L(u^*,v) \leq L(u^*,v^*) \leq L(u,v^*) ,
\end{equation}
which defines a {\sl saddle-point} or {\sl $\inf$--$\sup$} problem.

Problems (\ref{Jw51yf}) and (\ref{eW0j3a}) were introduced by economists Vilfredo Pareto \cite{Pareto:1909} and John Nash \cite{Nash:1951} in seminal contributions. From a mechanics perspective, problems (\ref{Jw51yf}) and (\ref{eW0j3a}) are instances of {\sl coupled problems} with a particular variational structure.

\begin{example}[Quadratic cost]
Suppose
\begin{subequations}
\begin{align}
    &
    F(u,v) = \frac{1}{2} A u \cdot u + C v \cdot u - f\cdot u,
    \\ &
    G(u,v) = \frac{1}{2} D v \cdot v + B u \cdot v - g\cdot v,
\end{align}
\end{subequations}
where $A \in \mathbb{R}^{m\times m}$, $C \in \mathbb{R}^{m\times n}$, $B \in \mathbb{R}^{n\times m}$, $D \in \mathbb{R}^{n\times n}$, $A=A^T$, $A>0$, $D=D^T$, $D>0$, $f\in \mathbb{R}^m$, $g \in \mathbb{R}^n$, $(\cdot)$ denotes the dot product and we write $C u\cdot v = (Cu) \cdot v$, {\sl et cetera}, for short. Then, the Nash-equilibrium equations are
\begin{subequations}
\begin{align}
    &
    A u + C v = f,
    \\ &
    B u + D v = g,
\end{align}
\end{subequations}
or, in matrix form,
\begin{equation}\label{dGm517}
    \left(
        \begin{array}{c|c}
            A & C \\ \hline B & D
        \end{array}
    \right)
    \left\{
        \begin{array}{c}
            u \\ \hline v
        \end{array}
    \right\}
    =
    \left\{
        \begin{array}{c}
            f \\ \hline g
        \end{array}
    \right\} ,
\end{equation}
which is a particular type of linear coupled problem characterized by symmetric and positive-definite diagonal blocks. We also note that $C \neq B^T$ in general, with the result that there is no joint minimum principle for both players to appeal to together. Evidently, a unique Nash equilibrium $(u^*,v^*)$ exists if and only if the matrix of the system (\ref{dGm517}) is non-singular.

An alternative form of the problem is
\begin{equation}
    a(y,z) = b(z) ,
    \quad
    \text{for all} \; z \in Z,
\end{equation}
where $a : Z \times Z \to \mathbb{R}$ and $b : Z \to \mathbb{R}$, $Z = \mathbb{R}^m \times \mathbb{R}^n$, defined as
\begin{equation}
    a(y,z)
    =
    (\alpha | \beta)
    \left(
        \begin{array}{c|c}
            A & C \\ \hline B & D
        \end{array}
    \right)
    \left\{
        \begin{array}{c}
            u \\ \hline v
        \end{array}
    \right\} ,
    \quad
    b(z)
    =
    (f | g)
    \left\{
        \begin{array}{c}
            u \\ \hline v
        \end{array}
    \right\} ,
\end{equation}
with $y=(\alpha,\beta)$ and $z=(u,v)$, are non-symmetric bilinear and linear forms, respectively. Then, by the Lax-Milgram theorem \cite{Evans:1998} a unique Nash equilibrium exist if and only if
\begin{equation}
    a(z,z) \geq \lambda \, \| z \|^2 ,
\end{equation}
for some $\lambda > 0$, i.~e., if $a(\cdot,\cdot)$ is {\sl coercive}. Suppose that the cost functions of the players are separately coercive, i.~e., there are $\lambda_A >0$ and $\lambda_D >0$ such that
\begin{equation}
    Au\cdot u \geq \lambda_A \| u \|^2 ,
    \quad
    Dv\cdot v \geq \lambda_D \| v \|^2 ,
\end{equation}
for all $u \in \mathbb{R}^m$ and $v \in \mathbb{R}^n$, respectively. Suppose, in addition, that there is $0 \leq \mu < 1$ such that
\begin{equation}\label{C9O6b6}
    \big| (B+C^T)u\cdot v \big|
    \leq
    \mu \,
    \Big( Au\cdot u + Dv\cdot v \Big) ,
\end{equation}
for all $u \in \mathbb{R}^m$ and $v \in \mathbb{R}^n$. Then,
\begin{equation}
\begin{split}
    a(z,z)
    & =
    Au\cdot u + Bu\cdot v+ Cv\cdot u + Dv\cdot v
    \\ & \geq
    \Big( Au\cdot u + Dv\cdot v\Big)
    -
    \big| (B+C^T)u\cdot v \big|
    \\ & \geq
    (1-\mu) \Big( Au\cdot u + Dv\cdot v\Big)
    \geq
    (1-\mu) \min\{\lambda_A,\lambda_D\} \| z \|^2 .
\end{split}
\end{equation}
Thus, the {\sl coercivity condition} (\ref{C9O6b6}) ensures that $a(\cdot,\cdot)$ be coercive with $\lambda = (1-\mu) \min\{\lambda_A,\lambda_D\}$ and, by Lax-Milgram, it ensures the existence of a unique Nash equilibrium. \hfill$\square$
\end{example}

\section{Set--oriented formulation of problems in mechanics}
\label{M8mXkn0}

We consider finite-dimensional mechanical systems comprising $m$ components, e.~g., structural members, material points, {\sl et similia}, whose state is characterized by two work-conjugate fields $\epsilon \equiv \{\epsilon_e \in \mathbb{R}^d,\ e=1,\dots,m\}$ and $\sigma \equiv \{\sigma_e \in \mathbb{R}^d,\ e=1,\dots,m\}$. We refer to the space of pairs $Z_e = \{z_e \equiv (\epsilon_e, \sigma_e) \in \mathbb{R}^d \times \mathbb{R}^d\}$ as the {\sl local phase space} of component $e$, and $Z = Z_1 \times \cdots \times Z_m = \mathbb{R}^{N} \times \mathbb{R}^{N}$, $N = m d$, as the {\sl global phase space} of the system. We suppose that a suitable norm is defined in $Z$, e.~g.,
\begin{equation}\label{Poyet2}
    \| z \|
    =
    \Big(
    \sum_{e=1}^m
        w_e
        \| z_e \|_e^2
    \Big)^{1/2}
    =
    \Big(
    \sum_{e=1}^m
        w_e
        \big(
            {\mathbb{C}}_e \epsilon_e \cdot \epsilon_e
            +
            {\mathbb{C}}_e^{-1} \sigma_e \cdot \sigma_e
        \big)
    \Big)^{1/2} ,
\end{equation}
where $w_e > 0$ are weights and $\mathbb C_e\in \mathbb R^{d\times d}_{{\rm sym},+}$ are positive-definite symmetric matrices, $e=1,\dots,m$.

\subsection{Classical solutions}

We begin by assuming linearized kinematics and compatibility and equilibrium constraints of the general form
\begin{subequations}\label{9qHWzU}
\begin{align}
    &
    \sum_{e=1}^m w_e B_e^T \sigma_e = f ,
    \\ &
    \epsilon_e = B_e u + g_e , \quad e = 1,\dots m ,
\end{align}
\end{subequations}
where $u \in \mathbb{R}^{n}$ is the array of degrees of freedom of the system, $w_e$ are positive weights, $B_e \in \mathbb{R}^{d \times n}$ is a discrete gradient operator, $B_e^T$ is a discrete divergence operator, $f \in \mathbb{R}^n$ is a force array resulting from distributed sources and Neumann boundary conditions and the arrays $g_e \in \mathbb{R}^{d}$ follow from Dirichlet boundary conditions. In terms of global arrays,
\begin{subequations}\label{nd8X6u}
\begin{align}
    &
     w B^T\sigma = f ,
    \\ &
    \epsilon = B u + g ,
\end{align}
\end{subequations}
where we write $w = {\rm diag}(w_1,\dots,w_m)$, $B = (B_1,\dots,B_m)$, $\epsilon=(\epsilon_1,\dots,\epsilon_m)$, $\sigma=(\sigma_1,\dots,\sigma_m)$ and $g = (g_1,\dots,g_m)$.

Classically, the problem is closed by assuming a material law of the form
\begin{equation}
    \sigma_e = \hat{\sigma}_e(\epsilon_e) ,
\end{equation}
or, in terms of global arrays,
\begin{equation}
    \sigma = \hat{\sigma}(\epsilon) = \{ \hat{\sigma}_1(\epsilon_1) ,\dots, \hat{\sigma}_m(\epsilon_m) \},
\end{equation}
where $\hat{\sigma}_e(\cdot)$ are material-specific functions. Existence and uniqueness of displacement solutions then follows under suitable restrictions on $B_e$ and $\hat{\sigma}_e(\cdot)$, cf.~Prop.~\ref{8CJa9S}.

\subsection{Set--oriented reformulation}
\label{M8mXkn}

An alternative set--oriented representation of the material law $\hat{\sigma}(\epsilon)$ is to view it as a graph $D$ in phase space $Z$, or {\sl material set}. In this representation, the material law is regarded as a material-specific $N$-dimensional manifold, or graph, in the $2N$-dimensional phase space $Z$. Similarly, the constraints (\ref{9qHWzU}) are material independent and define an affine subspace $E$ of $Z$, or {\sl constraint set}. The constraint set $E$ encodes all the data of the problem, including geometry, loading and boundary conditions. From elementary linear algebra considerations, it follows that the constraint set $E$ is an affine subspace of $Z$ of dimension $N$ and co-dimension $N$ \cite{ContiHoffmannOrtiz2023}. The actual states $z = (\epsilon,\sigma)$ of the system in phase space, if they exist, lie in the intersection ${D} \, \cap \, {E}$, i.~e., are the admissible states that are consistent with the material law, or, equivalently, the material states that are compatible and in equilibrium. Evidently, classical solutions exist if the sets $D$ and $E$ have a non-empty intersection, i.~e., if they are {\sl transversal} \cite{conti:2018}.

\section{Cooperative Data--Driven games in mechanics}\label{2bzQRE}

Suppose that, as is often the case in practice, the graph $D$ of the material law is not known in its entirety, but only through an approximating sequence of data sets. For instance, the sets $D$ may consist of increasing collections of points $(\epsilon,\sigma)$ in phase space $Z$ obtained, e.~g., by experimental measurement. In general, the intersection between the admissible set ${E}$ and the material data sets ${D}$ may be empty, in which case no classical approximating solution exists in the sense of Section~\ref{M8mXkn0}. One way to circumvent this excessive rigidity of the classical paradigm is to relax the notion of 'solution' and replace intersection by a regularized optimality criterion \cite{Kirchdoerfer:2016, conti:2018}. We regard the resulting paradigm as {\sl well-posed} if the corresponding approximate Data--Driven solutions converge to the exact solution as the data sets $D$ sample the exact material-law graph $D$ with increasing fidelity. In this section, we appeal to game-theoretical concepts in order to formulate well-posed Data-Driven approaches.

\subsection{Cooperative game-theoretical reformulation}
\label{D1nIRc}

Data--Driven problems of the type proposed in \cite{Kirchdoerfer:2016, Kirchdoerfer:2017, conti:2018, Prume:2023, ContiHoffmannOrtiz2023} can be interpreted as cooperative game problems in the sense of {\sl Pareto optimality}. Thus, suppose that the material behavior is characterized by a material law with graph $D$ in phase space $Z$ and that the equilibrium, compatibility, Dirichlet and Neumann constraints are represented by an affine subspace $E$ of $Z$ of dimension $N$ and co-dimension $N$. Let
\begin{equation}
    I_D(y)
    =
    \left\{
        \begin{array}{ll}
            0, & \text{if} \; y \in D , \\
            +\infty, & \text{otherwise} ,
        \end{array}
    \right.
\end{equation}
and
\begin{equation}\label{8kFSVk}
    I_E(z)
    =
    \left\{
        \begin{array}{ll}
            0, & \text{if} \; z \in E , \\
            +\infty, & \text{otherwise} ,
        \end{array}
    \right.
\end{equation}
be the corresponding {\sl indicator functions}. Suppose, in addition, that we are given a {\sl discrepancy function} $\Phi : Z \times Z \to \mathbb{R}$ with the properties: i) $\Phi$ is convex;  ii) $\Phi$ is non-negative; and iii) $\Phi(y,z) = 0$ iff $y=z$. We may then introduce the cost functions
\begin{subequations}
\begin{align}
    &
    F(y,z) = I_D(y) + \Phi(y,z) ,
    \\ &
    G(y,z) = I_E(z) + \Phi(y,z) .
\end{align}
\end{subequations}
Evidently, for given $z$ the function $F(\cdot,z)$ requires $y$ to be in $D$, whereupon it penalizes its distance to $E$; and for given $y$ the function $G(y,\cdot)$ requires $z$ to be in $E$, whereupon it penalizes its distance to $D$.

The Data--Driven solution $(y^*,z^*)$ is now the minimizer of the function
\begin{equation}\label{dVzu1W0}
    J(y,z)
    =
    w_F \, F(y,z) + w_G \, G(y,z),
\end{equation}
with $w_F > 0$, $w_G > 0$ and $w_F + w_G = 1$, i.~e., $(y^*,z^*)$ is the solution of the cooperative game
\begin{equation}\label{OPz86E}
    J(y^*,z^*) \leq J(y,z), \quad \text{for all} \; y, z \in Z ,
\end{equation}
in the sense of Pareto optimality. In this strategy, $y^*$ is the material state that is closest to being admissible and $z^*$ is the admissible state that is closest to being material, as desired. We note that, owing to the constraints (\ref{Jw51yf}) on the Pareto weights and the invariance of the indicator functions under multiplication by positive constants, the combined cost functional (\ref{dVzu1W0}) evaluates to
\begin{equation}\label{dVzu1W}
    J(y,z)
    =
    \Phi(y,z) + I_D(y) + I_E(z),
\end{equation}
independent of the Pareto weights, and the solution $(y^*,z^*)$ is also in Nash equilibrium.

An appeal to direct methods of the calculus of variations shows that the solutions exist if, in addition to the assumed convexity properties, $\Phi$ satisfies a growth condition of the form \cite{Kirchdoerfer:2016, conti:2018}: $\Phi(y,z) \to +\infty$ if $\|y\|+\|z\| \to +\infty$ with $y \in D$ and $z \in E$. This condition generalizes the intersection notion of transversality between the sets $D$ and $E$ to sets that may be discontinuous, e.~g., point sets, and may lack a proper intersection. Evidently, every classical solution in the intersection $D \cap E$ is also a Data--Driven solution.


\subsection{The long and short games}

We can reduce the problem to the determination of admissible states $z$ by eliminating out the material state variable $y$ in the cooperative game (\ref{OPz86E}). By analogy to the game of golf, we may think of point data sets $D$ as a collection of holes in a links. The player $z$ plays the {\sl long game}, or game of approach to the holes. For a given outcome $z$ of an approach shot, the player $y$ then plays the {\sl short game} of putting the ball as close as possible to the nearest hole.

The {\sl short game} is, therefore, defined by the functional
\begin{equation}\label{DkNM6N}
    H(z)
    =
    \inf_{y\in Z} J(y,z)
    =
    I_E(z) + \varphi_D(z) ,
\end{equation}
where
\begin{equation}\label{eq:phiDinfPhi}
    \varphi_D(z) = \inf_{y\in D} \Phi(y,z)
\end{equation}
measures the deviation of $z$ from the data set $D$. The remaining {\sl long game} is
\begin{equation}\label{71VtKI}
    H(z^*) \leq H(z), \quad \text{for all} \; z \in Z ,
\end{equation}
which endeavors to minimize the deviation $\varphi_D(z)$ from $D$ among all admissible state $z \in E$. We note that the sequence of long and short games is equivalent to (\ref{OPz86E}), since the game is cooperative.

\subsection{Minimum--distance Data--Driven game}\label{y60Z7n}
The Data--Driven paradigm formulated in \cite{Kirchdoerfer:2016, conti:2018} may be rephrased as a cooperative game by setting
\begin{equation}\label{2Cu1UG}
    \Phi(y,z) = \| y - z \|^2 .
\end{equation}
Evidently, $\Phi$ is convex, non-negative and $\Phi(y,z)=0$ if $y=z$, as required. In addition, $\Phi$ satisfies the required growth condition iff $D$ and $E$ are transverse. In addition,
\begin{equation}\label{eq:phiDminDist}
    \varphi_D(z)
    =
    \inf_{y\in D} \|y-z\|^2
    =
    {\rm dist}^2(z,D) ,
\end{equation}
where ${\rm dist}(z,D)$ denotes the distance between $z$ and the set $D$. The resulting long game is, therefore
\begin{equation}
    z^* \in {\rm argmin} \{ {\rm dist}^2(z,D) \, : \, z \in E \} ,
\end{equation}
i.~e., finding the admissible state that is closest to the material set. If, for instance, the distance is computed from the norm (\ref{Poyet2}) and the material is elastic with
\begin{equation}
    D = \{ y = (\epsilon,\sigma) \, : \, \sigma = \mathbb{C} \epsilon \} ,
\end{equation}
then a straightforward calculation gives \cite{Kirchdoerfer:2016}
\begin{equation}\label{6ov8Ze}
    \varphi_D(z)
    =
    \frac{1}{2}
    \sum_{e=1}^m
        w_e
        \mathbb{C}_e^{-1}
        (\sigma_e-\mathbb{C}_e \epsilon_e)
        \cdot
        (\sigma_e-\mathbb{C}_e \epsilon_e)
    \equiv
    \frac{1}{2} \|\sigma - \mathbb{C} \epsilon\|^2 .
\end{equation}
Evidently, $H$ vanishes on $D$ and grows away from it, as required. Thus, $H(z)$ measures the distance from $z$ to the material data set $D$ and the long-game endeavors to minimize that distance for all admissible states $z \in E$.

\subsection{Convex Data--Driven games}\label{u9t9PX}
The preceding example is a particular case of a more general class of convex games. Suppose that the material behavior is characterized by a convex strain energy function $W(\epsilon)$ with well defined dual stress energy function $W^*(\sigma)$. Then, the material set is the graph
\begin{equation}\label{TF00eF}
\begin{split}
    D
    & =
    \{
        y
        =
        (\epsilon,\sigma) \in Z
        \, : \,
        \sigma = DW(\epsilon)
    \}
    \\ & =
    \{
        y
        =
        (\epsilon,\sigma) \in Z
        \, : \,
        \epsilon = DW^*(\sigma)
    \} .
\end{split}
\end{equation}
By convexity and the Fenchel-Taylor theorem \cite{Rockafellar:1997}, we have
\begin{equation}\label{kHp31i}
    f(\epsilon,\sigma)
    \equiv
    W(\epsilon)
    +
    W^*(\sigma)
    -
    \sigma \cdot \epsilon
    \geq 0 ,
\end{equation}
and
\begin{equation}
    f(\epsilon,\sigma)
    =
    0
    \; \text{iff} \;
    (\epsilon,\sigma) \in D .
\end{equation}
Hence, $f(\epsilon,\sigma)$ quantifies the deviation of $(\epsilon,\sigma)$ from $D$. Consider the discrepancy function
\begin{equation}
    \Phi(y,z)
    =
    f(\epsilon-\alpha,\sigma-\beta)
    =
    W(\epsilon-\alpha)
    +
    W^*(\sigma-\beta)
    -
    (\sigma-\beta) \cdot (\epsilon-\alpha),
\end{equation}
with $y=(\alpha,\beta)$ and $z=(\epsilon,\sigma)$. An appeal to duality and the Fenchel-Taylor theorem gives, with $w = (\xi,\eta)$,
\begin{equation}\nonumber
\begin{split}
    &
    \varphi_D(\epsilon,\sigma)
    =
    \inf_{y \in D}
    \Phi(y,z)
    =
    \inf_{y \in D}
    f(\epsilon-\alpha,\sigma-\beta)
    = \\ &
    \lim_{\lambda\to+\infty}
    \inf_{y \in Z}
    \Big(
        f(\epsilon-\alpha,\sigma-\beta)
        +
        \lambda \varphi_D(\alpha,\beta)
    \Big)
    = \\ &
    \lim_{\lambda\to+\infty}
    \inf_{y \in Z}
    \sup_{w\in Z}
    \Big(
        \eta\cdot(\epsilon-\alpha) - W^*(\eta)
        +
        \xi\cdot(\sigma-\beta) - W(\xi)
        - \\ & \qquad\qquad\qquad\qquad\qquad\qquad\qquad
        (\sigma-\beta) \cdot (\epsilon-\alpha)
        +
        \lambda \, \varphi_D(\alpha,\beta)
    \Big)
    =
\end{split}
\end{equation}
\begin{equation}
\begin{split}
    &
    \lim_{\lambda\to+\infty}
    \sup_{w\in Z}
    \inf_{y \in Z}
    \Big\{
        \eta \cdot \epsilon
        -
        W^*(\eta)
        +
        \sigma \cdot \xi
        -
        W(\xi)
        -
        \sigma \cdot \epsilon
        + \\ & \qquad
        \lambda \,
        \Big(
            \lambda^{-1}(\sigma - \eta) \cdot \alpha
            +
            \lambda^{-1}(\epsilon - \xi) \cdot \beta
            +
            W(\alpha) + W^*(\beta) - \beta\cdot\alpha
        \Big)
    \Big\}
    =
\end{split}
\end{equation}
\begin{equation}\nonumber
\begin{split}
    &
    \lim_{\lambda\to+\infty}
    \sup_{w\in Z}
    \Big\{
        \eta \cdot \epsilon
        -
        W^*(\eta)
        +
        \sigma \cdot \xi
        -
        W(\xi)
        -
        \sigma \cdot \epsilon
        + \\ & \qquad\qquad\qquad\qquad\qquad\qquad\qquad
        \lambda \, \varphi_D^*\big(\lambda^{-1}(\eta- \sigma), \lambda^{-1}(\xi - \epsilon) \big)
    \Big\}
    = \\ &
    \sup_{w\in Z}
    \Big\{
        \eta \cdot \epsilon
        -
        W^*(\eta)
        +
        \sigma \cdot \xi
        -
        W(\xi)
        -
        \sigma \cdot \epsilon
        + \\ & \qquad\qquad\qquad\qquad\qquad
        D_\eta \varphi_D^*\big(0,0) (\eta- \sigma)
        +
        D_\xi \varphi_D^*(0,0) (\xi - \epsilon)
    \Big\}
    = \\ &
    \sup_{w\in Z}
    \Big\{
        \eta \cdot \epsilon  + \sigma \cdot \xi
        -
        W^*(\eta)
        -
        W(\xi)
        -
        \sigma \cdot \epsilon
    \Big\}
    = \\ &
    W(\epsilon)
    +
    W^*(\sigma)
    -
    \sigma \cdot \epsilon
    =
    \varphi_D(\epsilon,\sigma) .
\end{split}
\end{equation}
Thus, it follows that the corresponding long game (\ref{71VtKI}) endeavors to minimize the deviation from $D$, as measured by $\varphi_D$, among all admissible states $z \in E$.

\section{Non-cooperative Data--Driven games}
\label{7ATkpe}

We now change moods and propose a new class of non-cooperative Data--Driven games as an alternative to the cooperative Data--Driven games described in the foregoing. We specifically focus on the long game (\ref{71VtKI}). We recall that this long game results from explicitly or implicitly minimizing out the material state $y$ over $D$, so that the resulting cost function $H(z)$ represents the deviation of an admissible state $z\in E$ from the material set $D$. The long game that remains is then to minimize such deviation among all admissible states.

\subsection{Data--Driven game with adversarial stresses and strains}

In the cooperative mood adopted in Section~\ref{2bzQRE}, the minimization of the cost $H(z)$, with $z=(\epsilon,\sigma)$, is pursued jointly in $\epsilon$ and $\sigma$. The compatibility constraint can be enforced constructively by setting $\epsilon = B u$, with $u$ a displacement field. In addition, the equilibrium constraint on $\sigma$ can be enforced by means of Lagrange multipliers. This implementation of the game results in two standard linear problems for the displacements $u$ and the Lagrange multipliers $v$, regardless of the nature of the material data \cite{Kirchdoerfer:2016}.

As an alternative, here we explore a reformulation of Data-Driven game in which the players $(\epsilon,\sigma)$ are adversarial: The objective of the stress player $\sigma$ is to minimize the discrepancy to the data set $D$ for fixed $\epsilon$; the objective of the strain player $\epsilon$ is to ensure the admissibility of the state $z$ for fixed $\sigma$. The corresponding non-cooperative game is
\begin{subequations}\label{eq:phiDphiE}
\begin{align}
    &
    \varphi_D(\epsilon^*,\sigma^*)
    \leq
    \varphi_D(\epsilon^*,\sigma),
    \quad \text{for all} \; \sigma \in \mathbb{R}^N ,
    \\ &
    \varphi_E(\epsilon^*,\sigma^*)
    \leq
    \varphi_E(\epsilon,\sigma^*),
    \quad \text{for all} \; \epsilon \in \mathbb{R}^N ,
\end{align}
\end{subequations}
where the function $\varphi_E$ penalizes deviations from $E$. For instance, proceeding as in the cooperative game (\ref{OPz86E}), we have $\varphi_E(\epsilon,\sigma) = I_E(z)$ with $z=(\epsilon,\sigma)$, or explicitly in stress-strain variables,
\begin{equation}\label{5nBaMU}
    \varphi_E(\epsilon,\sigma)
    =
    \left\{
        \begin{array}{ll}
            0, & \text{if } \epsilon = B u,\quad w B^T \sigma = f ,\\
            +\infty, & \text{otherwise} .
        \end{array}
    \right.
\end{equation}
The corresponding Nash-equilibrium conditions are
\begin{subequations}\label{NP77aA}
\begin{align}
    & \label{D6kzmk}
    \frac{\partial\varphi_D}{\partial\sigma} (\epsilon,\sigma) = 0 ,
    \\ & \label{Gy5P7U}
    \epsilon = B u , \qquad w B^T \sigma = f ,
\end{align}
\end{subequations}
and the optimal strategies $(\epsilon^*,\sigma^*)$ are the solutions of these equations, if any. Indeed, for $\varphi_E(\epsilon,\sigma)$ as in (\ref{5nBaMU}) to be finite we must have $z \in E$, whence (\ref{Gy5P7U}) follows.

An alternative means of deriving (\ref{Gy5P7U}) is by regularization of the indicator function and a subsequent passage to the limit. Thus, we may define the sequence of regularized functionals
\begin{equation}\label{eq:nonCoop:phiC:regularisiert:Stab}
    \varphi_{E,\delta}(\epsilon,\sigma)
    =
    \left\{
        \begin{array}{ll}
            \frac{\delta}{2}
            (B^Tw\mathbb{C}B) u \cdot u
            +
            w\sigma\cdot B u
            -
            f \cdot u ,
            & \text{if } \epsilon = B u, \\
            +\infty, & \text{otherwise} ,
        \end{array}
    \right.
\end{equation}
with $\delta \downarrow 0$ and the stiffness matrix $(B^Tw\mathbb{C}B)$ introduced for dimensional consistency. We note that compatibility can be enforced constructively, leading to the reduced functional
\begin{equation}
    \varphi_{E,\delta}(u,\sigma)
    =
    \frac{\delta}{2}
    (B^Tw\mathbb{C}B) u \cdot u
    +
    w\sigma\cdot B u - f \cdot u .
\end{equation}
The corresponding Nash equilibrium condition is
\begin{equation}
    \frac{\partial\varphi_{E,\delta}}{\partial u}(u,\sigma)
    =
    \delta
    (B^Tw\mathbb{C}B) u
    +
    w B^T\sigma
    -
    f
    =
    0 ,
\end{equation}
and (\ref{Gy5P7U}) is recovered by passing to the limit $\delta \to 0$.

\begin{example}[Distance deviation function]
With the deviation function (\ref{6ov8Ze}), the Nash equilibrium condition (\ref{D6kzmk}) specializes to
\begin{equation}
    2 w_e
    \mathbb{C}_e^{-1}
    (\sigma_e-\mathbb{C}_e \epsilon_e)
    =
    0 ,
    \quad e=1,\dots,m ,
\end{equation}
and the effective constitutive relation is given locally by Hooke's law
\begin{equation}
    \hat{\sigma}_e(\epsilon_e) = \mathbb{C}_e \epsilon_e ,
    \quad e=1,\dots,m ,
\end{equation}
as expected. In addition, we note that the Nash equilibrium conditions (\ref{NP77aA}) can be expressed jointly as
\begin{equation}
    \left(
        \begin{array}{c|c}
            0 & w B^T \\ \hline - w B & w \mathbb{C}^{-1}
        \end{array}
    \right)
    \left\{
        \begin{array}{c}
            u \\ \hline \sigma
        \end{array}
    \right\}
    =
    \left\{
        \begin{array}{c}
            f \\ \hline 0
        \end{array}
    \right\} ,
\end{equation}
which is of the form (\ref{dGm517}). These equations follow jointly as Euler-Lagrange equations of the Lagrangian
\begin{equation}
    L(u,\sigma)
    =
    w \sigma \cdot B u
    -
    f \cdot u
    -
    \frac{1}{2} w \mathbb{C}^{-1} \sigma \cdot \sigma ,
\end{equation}
corresponding to the $\inf$--$\sup$ problem
\begin{equation}
    \inf_u \sup_\sigma L(u,\sigma)
    =
    \inf_u \Big( \frac{1}{2} w\mathbb{C} Bu \cdot Bu - f \cdot u \Big) ,
\end{equation}
which identifies the effective problem as a zero--sum game.
\hfill$\square$
\end{example}

\begin{example}[Convex deviation function]
Suppose now that the deviation function is given by (\ref{kHp31i}) as in Example~\ref{u9t9PX}, the Nash equilibrium condition (\ref{D6kzmk}) specializes to
\begin{equation}
    2 w_e
    (DW^*_e(\sigma_e)-\epsilon_e)
    =
    0 ,
    \quad e=1,\dots,m ,
\end{equation}
and the effective constitutive relation is given locally by
\begin{equation}
    \hat{\sigma}_e(\epsilon_e) = DW_e(\epsilon_e) ,
    \quad e=1,\dots,m ,
\end{equation}
as expected. As in the preceding problem, the Nash equilibrium conditions (\ref{NP77aA}) follow jointly from the Lagrangian
\begin{equation}\label{0hVPq1}
    L(u,\sigma)
    =
    \sum_{e=1}^m  \left(w_e \sigma_e \cdot B_e u
    -
    W_e^*(\sigma_e) \right)
    -
    f \cdot u
 \end{equation}
corresponding to the $\inf$--$\sup$ problem
\begin{equation}
    \inf_u \sup_\sigma L(u,\sigma)
    =
    \inf_u \Big(
    \sum_{e=1}^m w_e W_e(B_e u) - f \cdot u \Big) ,
\end{equation}
which again identifies the effective problem as a zero--sum game.
\hfill$\square$
\end{example}

\subsection{Existence}\label{jtA0LI}

Solving (\ref{D6kzmk}) for the stresses, we obtain a relation of the form
\begin{equation}\label{nFb0r5}
    \sigma = \hat{\sigma}(\epsilon) ,
\end{equation}
which defines an {\sl effective}, or {\sl learned}, constitutive law. Conditions on $\varphi_D(\epsilon,\sigma)$ for the function $\hat{\sigma}(\epsilon)$ to exist and be continuous are given by the implicit function theorem \cite{Rudin:1976}. Eq.~(\ref{Gy5P7U}) then requires
%
\begin{equation}\label{wpL8eA}
    w B^T \hat{\sigma}(B u) = f  ,
\end{equation}
which defines an effective displacement problem. A first fundamental question is whether the problem (\ref{wpL8eA}) is well-posed in the sense of existence and uniqueness of solutions.

We note that problem (\ref{wpL8eA}) is not required to have a variational structure, e.~g., to derive from a minimum energy principle. Nevertheless, existence of solutions follows if the local material laws $\hat{\sigma}_e(\epsilon_e)$ are {\sl coercive}, in the sense of material stability, and the structure is likewise stable, in the sense of not allowing zero-strain mechanisms. Uniqueness of the solution follows if, in addition, the material law is {\sl strictly monotone}.

A general framework for existence and convergence of approximations is set forth by the following propositions, which are adapted from \cite{Evans:1998}, \S9.1,  to the present setting.

\begin{lem}[Zeros of a vector field] \label{T9WhYq} Suppose that a continuous function $v : \mathbb{R}^n \to \mathbb{R}^n$ satisfies
\begin{equation}
    v(u) \cdot u \geq 0,
    \quad
    \text{if} \;\; \|u\| = r,
\end{equation}
for some $r > 0$. Then, there exists a point $u^* \in \mathbb{R}^n$, $\|u^*\| \leq r$, such that $v(u^*) = 0$.
\end{lem}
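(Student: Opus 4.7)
The plan is to argue by contradiction, reducing the claim to Brouwer's fixed point theorem applied to a continuous self-map of the closed ball $\bar{B}_r = \{u \in \mathbb{R}^n : \|u\| \leq r\}$.

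First I would assume, towards a contradiction, that $v(u) \neq 0$ for every $u$ with $\|u\| \leq r$. Under this assumption $\|v(u)\|$ is continuous and strictly positive on $\bar{B}_r$, so the mapping
\begin{equation}
    w(u) = - r \, \frac{v(u)}{\|v(u)\|}
\end{equation}
is a well-defined continuous map from $\bar{B}_r$ into itself (in fact, onto the sphere $\partial B_r$). Brouwer's fixed point theorem then furnishes a point $u^* \in \bar{B}_r$ with $w(u^*) = u^*$.

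Next I would extract the geometric consequence of this fixed point. By construction $\|w(u)\| = r$ for all admissible $u$, so $\|u^*\| = r$, placing $u^*$ on the boundary sphere where the hypothesis applies. Taking the inner product of the fixed-point identity with $u^*$ gives
\begin{equation}
    \|u^*\|^2 = u^* \cdot w(u^*) = - \frac{r}{\|v(u^*)\|} \, v(u^*) \cdot u^* ,
\end{equation}
so that
\begin{equation}
    v(u^*) \cdot u^* = - \frac{\|v(u^*)\|}{r} \|u^*\|^2 = - r \, \|v(u^*)\| < 0 ,
\end{equation}
since $v(u^*) \neq 0$ by assumption. This contradicts the hypothesis $v(u) \cdot u \geq 0$ on $\|u\| = r$, completing the argument.

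The only substantive ingredient is Brouwer's theorem; the rest is bookkeeping. The one point I would be careful about is verifying that the normalization $v(u)/\|v(u)\|$ is admissible, which requires the contradiction hypothesis to hold on the \emph{entire} closed ball (not merely on the sphere) so that $w$ is continuous on $\bar{B}_r$. If instead we had only assumed nonvanishing on the sphere, a compactness and continuous-extension argument would be needed; but since the conclusion we are negating is the existence of a zero in $\bar{B}_r$, the stronger nonvanishing assumption is precisely what we get for free.
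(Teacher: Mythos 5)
Your proof is correct and is essentially the argument the paper relies on: the paper does not reproduce a proof but refers to Evans, \S 9.1, where precisely this contradiction argument via the normalized map $w(u) = -r\,v(u)/\|v(u)\|$ and Brouwer's fixed point theorem is given. Your closing remark correctly identifies the one delicate point, namely that the negation of the conclusion supplies nonvanishing of $v$ on the whole closed ball, which is exactly what makes $w$ continuous there.
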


The proof of the lemma is based on Brouwer's fixed point theorem and can be found in \cite{Evans:1998}, \S9.1.

\begin{prop}[Existence of solutions]\label{8CJa9S}
Let $w_e > 0$ and $\hat{\sigma}_e : \mathbb{R}^d \to \mathbb{R}^d$ continuous functions, $e=1,\dots,m$. Let $f \in \mathbb{R}^n$ and $B : \mathbb{R}^n \to \mathbb{R}^{m d}$.
Suppose that:
\begin{itemize}
    \item[i)]{Material stability}. There are $a > 0$, $b \geq 0$ such that, for all $\epsilon_e \in \mathbb{R}^d$,
\begin{equation}\label{eqsigmastable}
    \hat{\sigma}_e(\epsilon_e) \cdot \epsilon_e
    \geq
    a \| \epsilon_e \|_e^2 - b .
\end{equation}
    \item[ii)]{Structural stability}. There is $c > 0$ such that, for all $u \in \mathbb{R}^n$,
\begin{equation}\label{eqBstructurstable}
    \| u \|^2 \leq c \| Bu \|^2 .
\end{equation}
\end{itemize}
Then, problem (\ref{wpL8eA}) has a solution $u^* \in \mathbb{R}^n$ that satisfies the bound
\begin{equation}\label{kO17na}
    \| u^* \|^2
    \leq
    \Big(\frac{a}{c} - \epsilon \Big)^{-1}
    \Big( \frac{1}{4\epsilon} \| f \|^2 + b V \Big),
\end{equation}
for all $\epsilon < a/c $ and $V= \sum_{e=1}^m w_e$. Suppose, in addition:
\begin{itemize}
\item[iii)]{Strict monotonicity}. There is $\theta > 0$ such that, for all $\epsilon_e'$, $\epsilon_e'' \in \mathbb{R}^d$,
\begin{equation}\label{eqsigmastrictmonotone}
    \big( \hat{\sigma}_e(\epsilon_e') - \hat{\sigma}_e(\epsilon_e'') \big)
    \cdot
    ( \epsilon_e' - \epsilon_e'' )
    \geq
    \theta \, \| \epsilon_e' - \epsilon_e'' \|_e^2 .
\end{equation}
\end{itemize}
Then, the solution is unique.
\end{prop}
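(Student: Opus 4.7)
The plan is to apply Lemma~\ref{T9WhYq} to the continuous vector field $v : \mathbb{R}^n \to \mathbb{R}^n$ defined by $v(u) := w B^T \hat{\sigma}(Bu) - f = \sum_{e=1}^m w_e B_e^T \hat{\sigma}_e(B_e u) - f$, whose zeros are exactly the solutions of (\ref{wpL8eA}). Continuity of $v$ is immediate from continuity of each $\hat{\sigma}_e$ and linearity of $B$, so the task reduces to exhibiting a radius $r>0$ such that $v(u)\cdot u \geq 0$ whenever $\|u\|=r$; the lemma will then furnish $u^*$ with $\|u^*\|\leq r$, and a judicious choice of $r$ will match the bound (\ref{kO17na}).

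To produce the coercivity estimate I would first rewrite $v(u)\cdot u = \sum_e w_e \hat{\sigma}_e(B_e u)\cdot (B_e u) - f\cdot u$, then insert material stability (i) term-by-term to arrive at $v(u)\cdot u \geq a\|Bu\|^2 - b V - \|f\|\,\|u\|$, where $\|Bu\|^2 = \sum_e w_e \|B_e u\|_e^2$ is the norm induced by (\ref{Poyet2}) and $V = \sum_e w_e$. Next I would apply structural stability (ii) in the form $\|Bu\|^2 \geq c^{-1}\|u\|^2$, and absorb the linear forcing term by Young's inequality $\|f\|\,\|u\| \leq \epsilon\|u\|^2 + (4\epsilon)^{-1}\|f\|^2$ for arbitrary $\epsilon\in(0,a/c)$. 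This gives $v(u)\cdot u \geq (a/c - \epsilon)\|u\|^2 - bV - (4\epsilon)^{-1}\|f\|^2$, which is nonnegative on the sphere $\|u\|=r$ exactly when $r^2 = (a/c-\epsilon)^{-1}\bigl(bV + (4\epsilon)^{-1}\|f\|^2\bigr)$, reproducing (\ref{kO17na}).

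For uniqueness under strict monotonicity (iii), I would take two solutions $u_1, u_2$, subtract the identities $wB^T\hat{\sigma}(Bu_i) = f$, pair the result against $u_1 - u_2$, and transpose $B_e^T$ to rewrite the sum as $\sum_e w_e (\hat{\sigma}_e(B_e u_1) - \hat{\sigma}_e(B_e u_2))\cdot (B_e u_1 - B_e u_2)$. Strict monotonicity bounds this below by $\theta\|B(u_1-u_2)\|^2$, and structural stability then yields $0 \geq (\theta/c)\|u_1-u_2\|^2$, so $u_1 = u_2$.

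I do not anticipate a genuine obstacle: the proof is a direct coercivity estimate feeding into the zero-field lemma, followed by a one-line monotonicity argument. The only point requiring care is keeping the bookkeeping of the weighted global norm $\|Bu\|^2 = \sum_e w_e \|B_e u\|_e^2$ and the local norms $\|\cdot\|_e$ consistent, so that material and structural stability chain cleanly into an estimate that is quadratic in $\|u\|$ with a recoverable positive leading coefficient.
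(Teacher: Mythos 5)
Your proposal is correct and follows essentially the same route as the paper's proof: the coercivity estimate $v(u)\cdot u \geq (a/c-\epsilon)\|u\|^2 - bV - (4\epsilon)^{-1}\|f\|^2$ obtained from material stability, structural stability and Young's inequality, fed into the Brouwer-based Lemma~\ref{T9WhYq}, followed by the standard monotonicity argument for uniqueness. The bookkeeping with the weighted norms is handled as in the paper, so there is nothing to add.
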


We note that condition (i) stipulates coercivity of the material laws, whereas condition (ii) requires the absence of mechanisms, i.~e., displacements that occur at zero strain. The proof of the proposition is illustrative of the roles played by material and structural stability and is therefore included next in full.

\begin{proof}
Let $(e_1,\dots,e_n)$ be the standard orthonormal basis of $\mathbb{R}^n$. Define the continuous function $v : \mathbb{R}^n \to \mathbb{R}^n$ by setting
\begin{equation}\label{EQkN5s}
    v_i(u)
    =
    w \hat{\sigma}(Bu) \cdot Be_i - f \cdot e_i ,
\end{equation}
for every $u \in \mathbb{R}^n$. From (i), we find
\begin{equation}
    v(u) \cdot u
    =
    w \hat{\sigma}(Bu) \cdot Bu - f \cdot u
    \geq
    a \| Bu \|^2 - b V - f \cdot u ,
\end{equation}
and by (ii),
\begin{equation}\label{ke1vBR}
    v(u) \cdot u + b V + f \cdot u
    \geq
    \frac{a}{c} \| u \|^2 .
\end{equation}
For every $\epsilon > 0$, we have the estimate
\begin{equation}
    | f \cdot u |
    \leq
    \epsilon \| u \|^2 + \frac{1}{4\epsilon} \| f \|^2 ,
\end{equation}
which, inserted into (\ref{ke1vBR}), gives
\begin{equation}
    v(u) \cdot u + b V + \epsilon \| u \|^2 + \frac{1}{4\epsilon} \| f \|^2
    \geq
    \frac{a}{c} \| u \|^2 .
\end{equation}
Rearranging terms,
\begin{equation}\label{EKk6Xo}
    v(u) \cdot u
    \geq
    \Big(\frac{a}{c} - \epsilon \Big)  \| u \|^2 - b V - \frac{1}{4\epsilon} \| f \|^2 ,
\end{equation}
Hence, $v(u) \cdot u \geq 0$ if $\| u \| = r$, provided that we select $\epsilon$ small enough and $r > 0$ sufficiently large. By Lemma~\ref{T9WhYq}, there is $u^* \in \mathbb{R}^n$ such that $v(u^*) = 0$, i.~e., $u^*$ is a solution of (\ref{wpL8eA}). In addition, from (\ref{EKk6Xo}) the solution satisfies the bound
\begin{equation}
    \Big(\frac{a}{c} - \epsilon \Big)  \| u^* \|^2
    \leq
    \frac{1}{4\epsilon} \| f \|^2 + b V ,
\end{equation}
which implies (\ref{kO17na}), as advertised. Assume, in addition, that (iii) holds. Suppose that there are two solutions $u'$, $u'' \in \mathbb{R}^n$. Then, for all $v \in \mathbb{R}^n$,
\begin{equation}
    w \hat{\sigma}(B u') \cdot Bv
    =
    w \hat{\sigma}(B u'') \cdot Bv
    =
    f \cdot v ,
\end{equation}
whence
\begin{equation}
    w
    \big(
        \hat{\sigma}(B u')
        -
        \hat{\sigma}(B u'')
    \big)
    \cdot Bv
    =
    0 .
\end{equation}
Setting $v = u' - u''$ and using (ii) and (iii)
\begin{equation}
    0
    =
    w
    \big(
        \hat{\sigma}(B u')
        -
        \hat{\sigma}(B u'')
    \big)
    \cdot
    B(u' - u'')
    \geq
    \theta \|Bu' - Bu''\|^2
    \geq
    \frac{\theta}{c^2} \| u' - u''\|^2 ,
\end{equation}
which requires $\| u' - u''\| = 0$ and $u' = u''$, as advertised.
\end{proof}

\subsection{Approximation}\label{kEV3cF}

Suppose now that the local material laws $\hat{\sigma}_e(\epsilon_e)$ in problem (\ref{wpL8eA}) are not known exactly but only approximately through a convergent sequence of approximate local material laws $\hat{\sigma}_{e,h} (\epsilon_e)$. The approximate local material laws set forth a sequence of approximating problems
\begin{equation}\label{FrW19q}
    w B^T \hat{\sigma}_h(B u) = f  ,
\end{equation}
where we write $\hat{\sigma}_h(\epsilon) = (\hat{\sigma}_{e,h}(\epsilon_e))_{e=1}^m$. We wish to ascertain conditions under which the solutions $u^*_h$ of problems (\ref{FrW19q}) converge to the solution $u$ of the limiting problem (\ref{wpL8eA}).

The following proposition sets forth conditions under which approximation of the local material laws results in convergent approximate solutions.

\begin{prop}[Approximation]\label{abAx6h}
Let $w_e > 0$ and $\hat{\sigma}_{e,h} : \mathbb{R}^d \to \mathbb{R}^d$ continuous functions, $e=1,\dots,m$, $h \in \mathbb{N}$. Let $f \in \mathbb{R}^n$ and $B : \mathbb{R}^n \to \mathbb{R}^{m d}$. Suppose that:
\begin{itemize}
\item[i)] The sequence $(\hat{\sigma}_{e,h})$ is uniformly stable in the sense of Prop~\ref{8CJa9S}(i), i.~e., there are $a > 0$, $b \geq 0$ independent of $h$ such that, for all $\epsilon_e \in \mathbb{R}^d$,
\begin{equation}
    \hat{\sigma}_{e,h}(\epsilon_e) \cdot \epsilon_e
    \geq
    a \| \epsilon_e \|_e^2 - b .
\end{equation}
\item[ii)] $B$ is stable in the sense of Prop~\ref{8CJa9S}(ii).
\item[iii)] The local material laws $\hat{\sigma}_{e,h}$ are continuous and converge to local material laws $\hat{\sigma}_{e}$ uniformly on compact sets.
\end{itemize}
Then, the solutions $u^*_h$ of problem (\ref{FrW19q}) converge, up to subsequences, to a solution of the limiting problem (\ref{wpL8eA}).
\end{prop}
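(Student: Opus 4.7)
The plan is to combine the uniform a priori bound provided by Proposition~\ref{8CJa9S} with a compactness argument in $\mathbb{R}^n$ and then pass to the limit in equation (\ref{FrW19q}) using the uniform convergence of the approximating material laws.

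First I would apply Proposition~\ref{8CJa9S} to each approximate problem (\ref{FrW19q}). Hypothesis (i) guarantees that each $\hat{\sigma}_h$ satisfies the material-stability estimate (\ref{eqsigmastable}) with the \emph{same} constants $a,b$ independent of $h$, and hypothesis (ii) supplies the structural-stability constant $c$. Hence, for every $h$, problem (\ref{FrW19q}) admits a solution $u_h^* \in \mathbb{R}^n$, and the bound (\ref{kO17na}) yields
\begin{equation}
    \|u_h^*\|^2
    \leq
    \Bigl(\frac{a}{c}-\epsilon\Bigr)^{-1}
    \Bigl(\frac{1}{4\epsilon}\|f\|^2 + bV\Bigr),
\end{equation}
with a right-hand side independent of $h$. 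Thus $(u_h^*)$ is a bounded sequence in the finite-dimensional space $\mathbb{R}^n$, and by Bolzano--Weierstrass there exist a subsequence (not relabelled) and a limit $u^* \in \mathbb{R}^n$ with $u_h^* \to u^*$ and $B u_h^* \to B u^*$.

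The core step is then to pass to the limit in $w B^T \hat{\sigma}_h(B u_h^*) = f$. For each component $e$, I would split
\begin{equation}
    \hat{\sigma}_{e,h}(B_e u_h^*) - \hat{\sigma}_e(B_e u^*)
    =
    \bigl[\hat{\sigma}_{e,h}(B_e u_h^*) - \hat{\sigma}_e(B_e u_h^*)\bigr]
    +
    \bigl[\hat{\sigma}_e(B_e u_h^*) - \hat{\sigma}_e(B_e u^*)\bigr].
\end{equation}
Since $(B_e u_h^*)$ is bounded, the set $K_e = \{B_e u_h^* : h \in \mathbb{N}\} \cup \{B_e u^*\}$ is contained in a compact subset of $\mathbb{R}^d$. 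By hypothesis (iii), $\hat{\sigma}_{e,h} \to \hat{\sigma}_e$ uniformly on $K_e$, so the first bracket tends to zero; the second bracket tends to zero by continuity of $\hat{\sigma}_e$ together with $B_e u_h^* \to B_e u^*$. Consequently $\hat{\sigma}_h(B u_h^*) \to \hat{\sigma}(B u^*)$, and passing to the limit in (\ref{FrW19q}) yields $w B^T \hat{\sigma}(B u^*) = f$, i.e., $u^*$ solves the limiting problem (\ref{wpL8eA}).

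The main subtlety I expect is the passage to the limit of $\hat{\sigma}_{e,h}(B_e u_h^*)$, because both the function and its argument depend on $h$; the uniform-on-compacts convergence in hypothesis~(iii) is precisely what makes the splitting above work, and this is why the uniform stability in hypothesis~(i) is indispensable: it converts a mere existence bound into an $h$-independent one, which is what delivers the compact set $K_e$ on which uniform convergence can be invoked. Everything else is bookkeeping in finite dimensions.
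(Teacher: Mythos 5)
Your proof is correct and follows essentially the same route as the paper: uniform a priori bound from Proposition~\ref{8CJa9S}, Bolzano--Weierstrass in $\mathbb{R}^n$, and passage to the limit via locally uniform convergence. The only difference is that you spell out the two-term splitting behind the limit $\hat{\sigma}_h(Bu_h^*)\to\hat{\sigma}(Bu^*)$, which the paper compresses into a single line; this is a welcome elaboration, not a deviation.
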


We recall that, by the Arzel\`a-Ascoli theorem, (iii) is ensured if the sequences $(\hat{\sigma}_{e,h})$ are uniformly bounded and equi-continuous, in which case the limiting laws $\hat{\sigma}_{e}$ are also continuous. If the approximate local material laws $(\hat{\sigma}_{e,h})$ are differentiable, then equi-continuity is ensured if the derivatives of $(\hat{\sigma}_{e,h})$ are uniformly bounded.

\begin{proof}
By Prop.~\ref{8CJa9S}, the approximating problems (\ref{FrW19q}) have solutions $u^*_h$, not necessarily unique, satisfying $v_h(u^*_h)=0$, with $v_h$ defined from $\hat{\sigma}_{e,h}$ as in (\ref{EQkN5s}), and uniformly bounded as in (\ref{kO17na}). By this latter property, there is a subsequence (not renamed) that converges to some $u^* \in \mathbb{R}^n$. By uniform convergence,
\begin{equation}
    v(u^*) = \lim_{h\to\infty} v_h(u^*_h) = 0 ,
\end{equation}
with $v$ defined from $\hat{\sigma}_e$ as in (\ref{EQkN5s}), which shows that $u^*$ is in fact a solution of the limiting problem (\ref{wpL8eA}).
\end{proof}

The main conclusion afforded by the preceding propositions is that {\sl existence and convergence can be elucidated, for any stable structure and applied loading, locally at the material point level}, simply by examining the properties and convergence of the local material laws. We also remark that, by applying Cauchy's test, convergence of the approximating sequences of local material laws can be established without knowing the limiting local material laws explicitly. Such inferred convergence then guarantees that the approximating solutions in turn converge to the solutions of the (unknown) limiting problem.

Examples of approximation and convergence are presented in Section~\ref{cbb7Jh} for the important case of approximation by empirical point data.

\section{Approximation by empirical point data}
\label{u5RpCF}

An important property of the set--oriented Data--Driven games defined in the foregoing is that they remain applicable when the exact material data set, e.~g., the graph of the underlying constitutive relation, is replaced by an approximation $D$ thereof in the form of a point set, e.~g., measured empirically or computed from micromechanics. A central question is then to ascertain conditions ensuring the convergence of the Data-Driven solutions to the solution of the (unknown) limiting material law. In this section, we specifically consider two different scenarios: i) {\sl Uniformly convergent data}, in which the sampling error decreases as data is added to the material-data set in a uniform manner controlled by strict upper bounds, and ii) {\sl noisy data with outliers}, in which the data concentrates around the limiting material law in a weak or average sense that allows for the presence of outliers.

\subsection{Distance-based Data-Driven game}\label{M921rQ}

Suppose that the material behavior is local, i.~e., $D=D_{1} \times \cdots \times D_{m}$, with local material data sets $D_{e} = \{y_{e,i} \in Z_e \, : \, i=1,\dots,N_{e}\}$. Then,
\begin{equation}\label{eq:nonCoop:phiD:ele}
    \varphi_{D}(z)
    =
    \sum_{e=1}^m \varphi_{D,e}(z_e)
\end{equation}
with
\begin{equation}\label{z6bESY}
    \varphi_{D,e}(z_e)
    =
    \inf\{\Phi_e(y_{e},z_e) \, : \, y_{e} \in D_{e}\} ,
\end{equation}
and the Nash equilibrium condition for the stress player reduces to
\begin{equation}\label{kP26d1}
    \hat{\sigma}_e(\epsilon_e) = \eta_e,
    \quad \text{with} \;
    (\xi_e, \eta_e) \in
    {\rm argmin} \, \{\Phi_e(y_e, z_e) \,:\, y_e \in D_e \} .
\end{equation}
In the special case of a distance-based discrepancy function, (\ref{kP26d1}) reduces to
\begin{equation}\label{kP26d1b}
    \hat{\sigma}_e(\epsilon_e) = \eta_e,
    \quad \text{with} \;
    (\xi_e, \eta_e) \in D_{e}
    \; \text{and} \; \xi_e \; \text{closest to} \; \epsilon_e  .
\end{equation}
Thus, the effective constitutive law $\hat{\sigma}_e(\epsilon_e)$ {\sl learned} by the Data--Driven game consists of looking in the data set $D_{e}$ for the point $(\xi_e, \eta_e)$ such that $\xi_e$ is closest to $\epsilon_e$ and taking the corresponding stress $\eta_e$ as the value of $\hat{\sigma}_e(\epsilon_e)$.

\subsection{Max-ent regularization, clustering and smoothing}
\label{9kYy4s}

Remarkably, the game just defined for point-data results in a 'learned' stress-strain relation (\ref{kP26d1}), or (\ref{kP26d1b}), that is discontinuous and multiply-valued, e.~g., if the query strain is equidistant from more than one strain in the data set. This lack of continuity places the effective stress-strain relation (\ref{kP26d1}) out of scope of Prop.~\ref{abAx6h}, which requires continuity, and necessitates the use of specialized solvers such as dynamic relaxation, cf.~Sections~\ref{Nedi3P} and \ref{RP4nsY}, or smoothing over a carefully chosen strain range, cf.~Section~\ref{secuniformapprox}. An additional difficulty arises from outliers in the data, i.~e., points that deviate markedly from the general trend in the data. Indeed, uncontrolled outliers can bias the Data-Driven solution and forestall convergence.

These difficulties can be overcome by means of a {\sl max-ent} regularization of the problem \cite{Kirchdoerfer:2017}. As before, we suppose that the data set is a collection of local point--data sets, $D=D_{1} \times \cdots \times D_{m}$, with $D_{e} = \{y_{e,i} \in Z_e \, : \, i=1,\dots,N_{e}\}$. We begin by noting that the local deviation function (\ref{z6bESY}) can equivalently be expressed as
\begin{equation}\label{4vCTOE}
\begin{split}
    &
    \varphi_{D,e}(z_e)
    = \\ &
    \inf
    \Big\{
        \sum_{i=1}^{N_e} p_{e,i} \Phi_e(y_{e,i},z_e)
        \, : \,
        \sum_{i=1}^{N_e} p_{e,i} = 1 ,
        \; p_{e,i} \geq 0, \; i = 1,\dots,N_e
    \Big\} .
\end{split}
\end{equation}
The weights $\{p_{e,i}\}_{i=1}^{N_e}$ quantify how well a point $z_e$ of local phase space $Z_e$ is represented by a point $y_{e,i}$ in the corresponding local material data set $D_e$, or, conversely, the relevance of a point $y_{e,i}$ in the local material data set to a given point $z_e$ in local phase space. Evidently, the minimizing weights concentrate on the material points that minimize their deviation from $z_e$, whereupon (\ref{z6bESY}) is recovered.

The measure-theoretical representation (\ref{4vCTOE}) can now be conveniently regularized through the addition of a small entropy term, with the result
\begin{equation}\label{O8yq5k}
\begin{split}
    &
    \varphi_{D,e}(z_e;\beta_e)
    = \\ &
    \inf
    \Big\{
        \sum_{i=1}^{N_e}
        p_{e,i}
        \Big(
            \Phi_e(y_{e,i},z_e)
            +
            \beta_e^{-1}
            \log p_{e,i}
        \Big)
        \, : \,
        \sum_{i=1}^{N_e} p_{e,i} = 1
    \Big\} ,
\end{split}
\end{equation}
with $\beta_e\to +\infty$. The regularization term can indeed be interpreted as the negative of Shannon's information-theoretical entropy \cite{Shannon:1948}, hence the term {\sl maximum-entropy}, or 'max-ent', regularization. The entropy term ensures that the distribution of weights is as unbiased as possible. The regularized functional thus has the structure of a free energy, with the first term representing the internal energy of the system. In this interpretation, the problem (\ref{O8yq5k}) expresses the principle of minimum free energy, $\beta_e$ plays the role of a local {\sl reciprocal temperature} and the limit $\beta_e\to+\infty$ is the corresponding zero-temperature, or {\sl athermal}, limit.

We additionally recall that the entropy term measures the {\sl Kullback-Leibler discrepancy} \cite{Kullback:1951, Kullback:1959} between the empirical and weighted measures
\begin{equation}\label{wg58FX}
    \mu_{D,e} = \sum_{i=1}^{N_e} \delta_{y_{e,i}}
    \quad \text{and} \quad
    \nu_{D,e} = \sum_{i=1}^{N_e} p_{e,i} \delta_{y_{e,i}} .
\end{equation}
By this interpretation, the entropy term in (\ref{O8yq5k}) aims to minimize the discrepancy between the two measures (\ref{wg58FX}), i.~e., to render the weights $p_{e,i}$ as uniform as possible. The local deviation function $\Phi_e(y_{e,i},z_e)$ in turn ensures that points $y_{e,i}$ in the local data set $D_e$ that are distant from the query point $z_e$ are accorded less weight than nearby points. Evidently, the optimal weights now follow as the result of a competition between the deviation function and entropy, with $\beta_e^{-1}$ playing the role of a local Pareto weight.

Conveniently, the minimizing weights and the minimum of the functional (\ref{O8yq5k}) follow explicitly as
\begin{subequations}
\begin{align}
    & \label{xKSL3U}
    p_{e,i}^*(z_e;\beta_e)
    =
    \frac{1}{Z_e(z_e;\beta_e)} {\rm e}^{- \beta_e \Phi_e(y_{e,i},z_e)},
    \\ &
    Z_e(z_e;\beta_e)
    =
    \sum_{i=1}^{N_e}
    {\rm e}^{- \beta_e \Phi_e(y_{e,i},z_e)} ,
    \\ &
    \varphi_{D,e}^*(z_e;\beta_e)
    =
    -
    \beta_e^{-1} \log Z_e(z_e;\beta_e) ,
\end{align}
\end{subequations}
which are a discrete Gibbs distribution, partition function and equilibrium free energy, respectively. In addition, the Nash equilibrium condition (\ref{D6kzmk}) corresponds to minimizing the local {\sl free energy} to with respect to the stress $\sigma_e$ at fixed strain $\epsilon_e$. Appealing to the optimality of the weights, we compute
\begin{equation}\label{0VlMsK}
    \frac{\partial\varphi_{D,e}^*}{\partial\sigma_e}(z_e;\beta_e)
    =
    \sum_{i=1}^{N_e}
    p_{e,i}^*(z_e;\beta_e)
    \frac{\partial\Phi_{D,e}}{\partial\sigma_e}(y_{e,i},z_e)
    =
    0 ,
\end{equation}
where we write $z_e = (\epsilon_e,\sigma_e)$.

It follows from (\ref{xKSL3U}) that points in the data set $D_e$ that deviate from $z_e$ much more than ${\beta_e^{-1/2}}$ have negligible weight, whereas, conversely, the local behavior at $z_e$ is dominated by the local cluster of data points in the ${\beta_e^{-1/2}}$--neighborhood of $z_e$, hence the term {\sl clustering}. In particular, outliers, or points outside that neighborhood, have negligible weight.

\begin{example}[Minimum--distance deviation]\label{y60Z7n2}
For a local discrepancy function of the form
\begin{equation}\label{2t16Lj}
    \Phi_e(y_e,z_e)
    =
    \|y_e-z_e\|_e^2 ,
\end{equation}
corresponding to a global discrepancy function of the form (\ref{2Cu1UG}), the Nash equilibrium condition (\ref{0VlMsK}) reduces to
\begin{equation}\label{aBn2Dp}
    \sigma_e
    =
    \sum_{i=1}^{N_e}
    p_{e,i}^*(z_e;\beta_e) \sigma_{e,i} ,
\end{equation}
where we write $y_{e,i} = (\epsilon_{e,i},\sigma_{e,i}) \in D_e$. Evidently, condition (\ref{aBn2Dp}) stipulates that the local stress $\sigma_e$ be the mean of its local cluster in the data set, hence the connection to $k$-means clustering \cite{Du:1999}. It should be noted that condition (\ref{aBn2Dp}) is {\sl implicit} by virtue of the dependence of the optimal weights $p_{e,i}^*$ on $\sigma_e$. However, we can render the stress computation {\sl explicit} if we specifically assume local discrepancy functions of the form
\begin{equation}\label{Zl4CeS}
    \Phi_e(y_e,z_e)
    =
    \| \xi_e - \epsilon_e \|_e^2 ,
\end{equation}
where we write $y_e=(\xi_e,\eta_e)$ and $z_e=(\epsilon_e,\sigma_e)$. Then, (\ref{aBn2Dp}) specializes to
\begin{equation}\label{npUr0e}
    \hat{\sigma}_{e}(\epsilon_e)
    =
    \sum_{i=1}^{N_{e}}
    p_{e,i}^*(\epsilon_e;\beta_e) \sigma_{e,i} ,
\end{equation}
with
\begin{equation}
    p_{e,i}^*(\epsilon_e;\beta_e)
    =
    \frac{1}{Z_{e}(\epsilon_e;\beta_e)} {\rm e}^{- \beta_e \|\epsilon_{e,i}-\epsilon_e\|_e^2},
\end{equation}
and
\begin{equation}
\label{eqdefzeh}
    Z_{e}(\epsilon_e;\beta)
    =
    \sum_{i=1}^{N_{e}}
     {\rm e}^{- \beta \|\epsilon_{e,i}-\epsilon_e\|_e^2},
\end{equation}
which is now explicit. We note that the stress $\sigma_e=\hat{\sigma}_{e}(\epsilon_e)$ is given by (\ref{npUr0e}) as an average of the stresses $\sigma_{e,i}$ sampled at strains $\epsilon_{e,i}$ in a neighborhood of $\epsilon_e$ of size $\sim \beta_e^{-1/2}$, with Gaussian weights depending on the distance $\|\epsilon_{e,i}-\epsilon_e\|_e$.
\hfill$\square$
\end{example}

\subsection{Analysis of convergence}
\label{BR7thb}

\begin{figure}[ht]
\begin{center}
	\begin{subfigure}{0.45\textwidth}\caption{} \includegraphics[width=0.99\linewidth]{./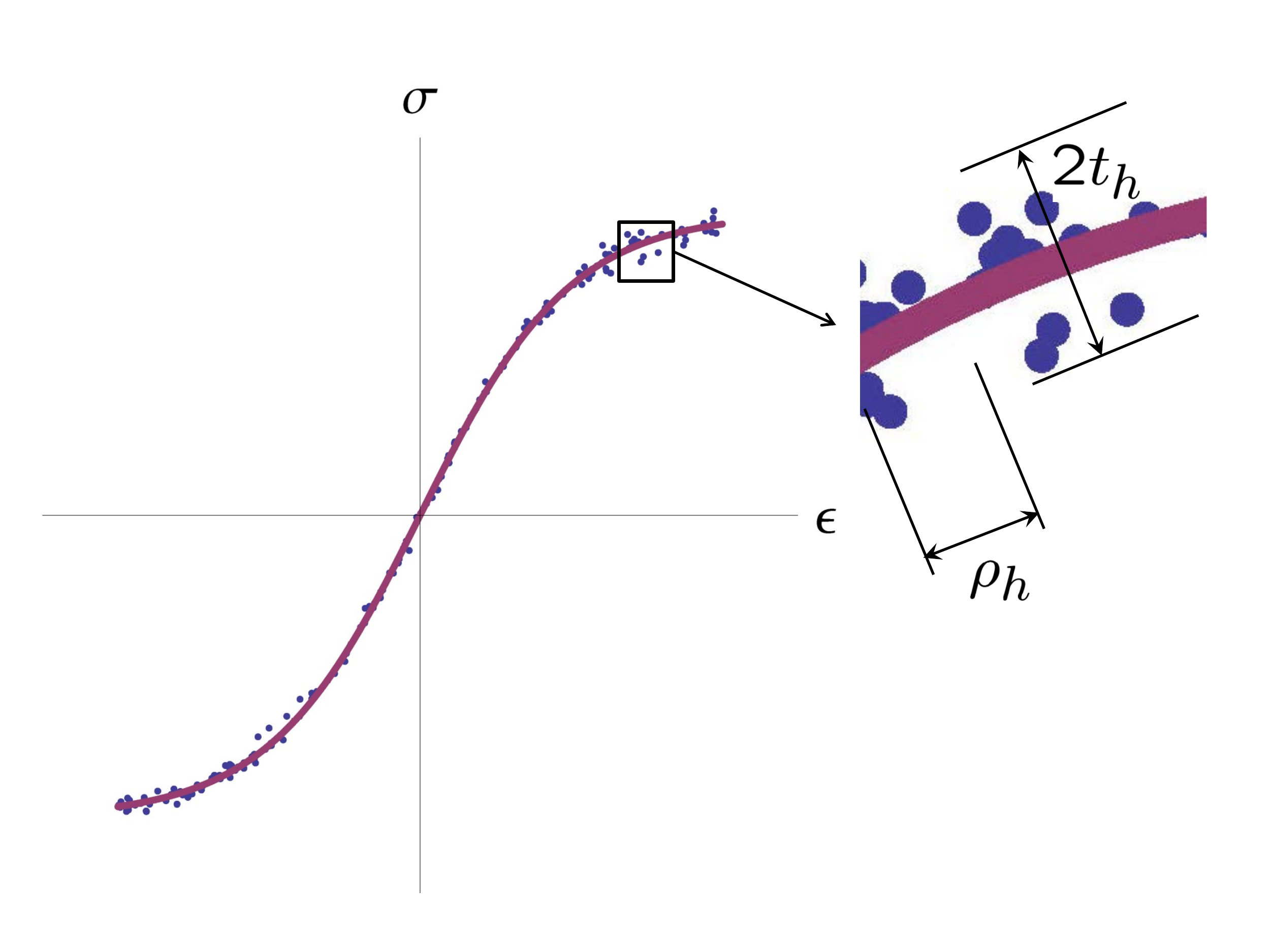}
	\end{subfigure}
	\begin{subfigure}{0.45\textwidth}\caption{} \includegraphics[width=0.99\linewidth]{./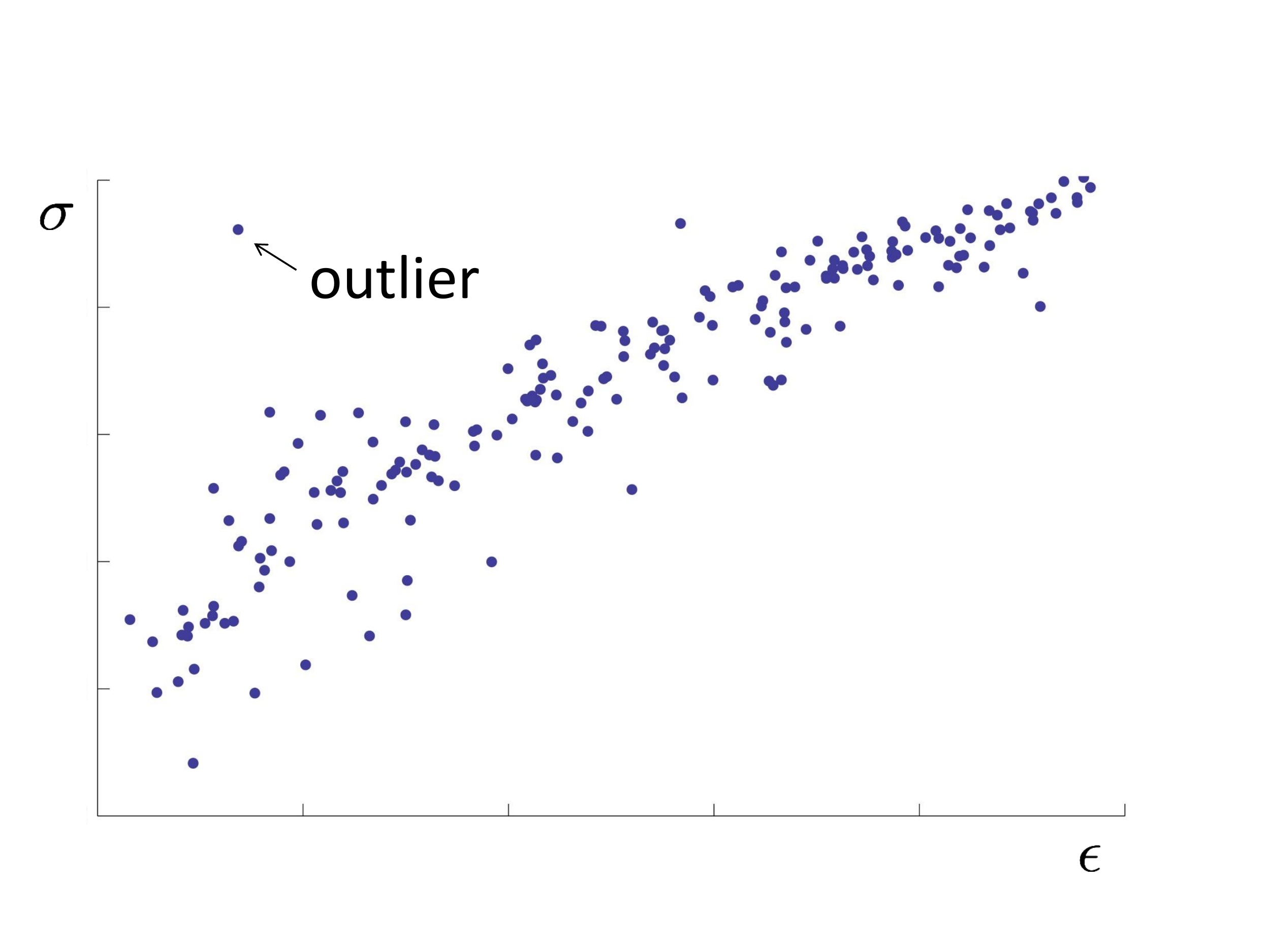}
	\end{subfigure}
    \caption{Two data-convergence scenarios. a) Uniform convergence: The data is contained within an increasingly narrow error band around a limiting stress-strain graph and samples the band uniformly and with increasing density. b) Noisy data with outliers: The data concentrates increasingly around a limiting stress-strain graph but exhibits scatter and outliers.} \label{rx4YN6}
\end{center}
\end{figure}

The convergence analysis presented in Section~\ref{jtA0LI} can be adapted to the case of approximation by point data sets in order to ascertain general conditions for convergence with respect to the data. For definiteness, we contemplate two data-convergence scenarios, shown schematically in Fig~\ref{rx4YN6}. The first scenario, which we refer to as {\sl uniform convergence}, arises when the data is contained within an increasingly narrow error band around the limiting stress-strain graph and samples the band uniformly and with increasing density over an increasingly larger data-coverage region of stress-strain space, Fig~\ref{rx4YN6}a. The second scenario concerns the case of noisy data with outliers, where the data concentrates increasingly around the limiting stress-strain graph but exhibits random scatter with outliers. For each of these two data-convergence scenarios, we proceed to discuss heuristically conditions on the data under which the Data-Driven solutions converge to a solution of the limiting problem. Rigorous mathematical statements and proofs are given in Appendix A. 

\subsubsection{Uniformly converging data} \label{dQi63n}

We assume that $\hat{\sigma}_e : \mathbb{R}^d \to \mathbb{R}^d$ are continuous functions which obey material stability, in the sense of \eqref{eqsigmastable}. We denote by
\begin{equation}\label{eqdefgraph}
    G_e
    :=
    \{
        (\epsilon_e,\hat\sigma_e(\epsilon_e)): \epsilon_e\in \R^d
    \}
    \subseteq
    Z_e
\end{equation}
the graph of $\hat\sigma_e$. For every $h\in\N$, we consider a discrete set of points $D_{e,h}\subseteq Z_e$, which can be seen as a discrete approximation of $G_e$.  Starting from these data, we define an approximate stress function $\hat\sigma_{e,h}$ without regularization as in Section~\ref{M921rQ}. It bears emphasis that the functions $\hat\sigma_{e,h}$ thus defined are {\sl ansatz}-free and follow directly from the data.

Conditions on the data ensuring convergence of the Data-Driven solutions are set forth in the following theorem, which we enunciate heuristically. A rigorous statement and proof is given in Appendix~\ref{secuniformapprox}.

\begin{prop}[Uniform approximation, heuristics]\label{JB8qX5} Let $w_e > 0$ and let $\hat{\sigma}_e : \mathbb{R}^d \to \mathbb{R}^d$ be continuous functions which obey material stability, in the sense of \eqref{eqsigmastable}, and which are sufficiently regular. Let $B : \mathbb{R}^n \to \mathbb{R}^{m d}$ obey structural stability, in the sense of \eqref{eqBstructurstable}. Assume that the sequence $(D_{e,h})$ defines a locally uniform approximation, in the sense that:
\begin{enumerate}
    \item The data sets $D_{e,h}$ are contained within an increasingly narrow error-band around the limiting graph $G_e$.
    \item There is a bounded region covered by the data set $D_{e,h}$ such that, for every point on the limiting graph $G_e$, there is an increasingly close point in the data set $D_{e,h}$.
    \item The regions covered by the data sets increase in size.
\end{enumerate}
Then, there are Data-Driven approximate solutions $u_h$ that converge to a solution $u$ of the continuous problem \eqref{wpL8eA}. If the functions $\hat\sigma_e$ are strictly monotone, then the solution $u$ of the limiting problem is unique.
\end{prop}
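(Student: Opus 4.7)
The plan is to reduce the proposition to Proposition~\ref{abAx6h} by constructing, from the point-data sets $D_{e,h}$, a sequence of \emph{continuous} effective material laws $\hat\sigma_{e,h}$ that (a) are materially stable with constants uniform in $h$ and (b) converge to $\hat\sigma_e$ uniformly on compact sets. Structural stability of $B$ is given, so once (a) and (b) are established, Proposition~\ref{abAx6h} yields approximate solutions $u_h$ of \eqref{FrW19q} that converge, up to subsequences, to a solution $u$ of \eqref{wpL8eA}. Uniqueness under the additional strict-monotonicity assumption is then a direct consequence of Proposition~\ref{8CJa9S}(iii).

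For the construction, I would start from the raw nearest-point rule \eqref{kP26d1b}, which is discontinuous on the Voronoi skeleton of the data in strain space. I would replace the discontinuous assignment by a continuous one, e.~g.~by smoothing the characteristic functions of the Voronoi cells into a partition of unity at a scale $\ell_h$ smaller than the local data spacing, thereby obtaining
\begin{equation*}
    \hat\sigma_{e,h}(\epsilon_e) = \sum_{i=1}^{N_{e,h}} p_{e,h,i}(\epsilon_e) \, \eta_{e,i},
\end{equation*}
with continuous nonnegative weights summing to one and effectively supported near the strain component of the closest data point; this is a deterministic analogue of the max-ent construction of Section~\ref{9kYy4s}. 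Outside the data-coverage region I would extrapolate continuously with controlled growth, e.~g.~by matching $\hat\sigma_e$ itself (or an affine surrogate) in order to preserve the coercivity \eqref{eqsigmastable} globally.

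Next I would verify the three hypotheses of Proposition~\ref{abAx6h}. Condition (ii) is assumed. Condition (iii), uniform convergence on compacts, follows from the three data hypotheses: for any compact $K \subset \R^d$, hypothesis~3 implies $K$ is eventually contained in the coverage region of $D_{e,h}$; hypothesis~2 together with the modulus of continuity of $\hat\sigma_e$ (provided by the ``sufficiently regular'' clause) controls the interpolation error on $K$; and hypothesis~1 controls the band-error, giving
\begin{equation*}
    \sup_{\epsilon_e \in K} \|\hat\sigma_{e,h}(\epsilon_e) - \hat\sigma_e(\epsilon_e)\|_e \to 0 .
\end{equation*}
Condition (i), uniform material stability, follows by writing $\hat\sigma_{e,h}(\epsilon_e)\cdot\epsilon_e = \hat\sigma_e(\epsilon_e)\cdot\epsilon_e + (\hat\sigma_{e,h}-\hat\sigma_e)(\epsilon_e)\cdot\epsilon_e$ and absorbing the error term into the coercive contribution, provided the perturbation is uniformly sublinear in $\epsilon_e$; this is precisely where the controlled extrapolation outside the coverage region is essential. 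Once (i)--(iii) hold, Proposition~\ref{abAx6h} delivers the claimed convergence, and strict monotonicity of $\hat\sigma_e$ elevates this to uniqueness via Proposition~\ref{8CJa9S}(iii).

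The main obstacle I anticipate is not any single verification above but their interplay with the extrapolation outside the data-coverage region. Uniform material stability is a \emph{global} condition on $\R^d$, whereas the data provides information only on a bounded (if expanding) coverage region; one must extrapolate in a way that is simultaneously continuous, uniformly coercive in $h$, and still compatible with uniform convergence on every fixed compact set. Making this construction fully rigorous, and in particular showing that the a~priori bound \eqref{kO17na} on $u^*_h$ is truly uniform in $h$ so that a convergent subsequence can be extracted whose limit still lies in the coverage region used for the convergence estimate, is the delicate part of the argument; the remaining pieces are then essentially routine applications of Propositions~\ref{8CJa9S} and~\ref{abAx6h}.
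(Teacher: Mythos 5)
Your proposal follows essentially the same route as the paper's rigorous version (Proposition~\ref{propuniform} in Appendix~\ref{secuniformapprox}): regularize the discontinuous nearest-point rule \eqref{kP26d1b} at a small scale $\delta_h$ (the paper mollifies by convolution, $\hat\sigma_{e,h}=\psi_{\delta_h}\ast s_{e,h}$, where you smooth a Voronoi partition of unity), extrapolate with controlled linear growth outside the coverage region $\{\|\epsilon_e\|_e<R_h\}$ to secure uniform global coercivity, and then invoke Proposition~\ref{abAx6h}, with uniqueness from strict monotonicity as in Proposition~\ref{8CJa9S}. The only caveat is that the extrapolation cannot ``match $\hat\sigma_e$ itself,'' since that law is unknown; the paper instead uses an arbitrary positive-definite linear law $\mathbb{C}_e\epsilon_e$ and shows the limit is independent of this choice, which is precisely your ``affine surrogate'' alternative.
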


Note that the functions $\hat\sigma_{e,h}$ defined in Section~\ref{M921rQ} are not continuous. Therefore the approximate problems do not, strictly speaking, have solutions. However, this difficulty can be overcome by recourse to mollification, or smoothing, as discussed in Appendix~\ref{secuniformapprox}.

We can make the heuristic assumptions in Prop.~\ref{JB8qX5} quantitative by introducing a sequence $\paramunifh\to0$ such that the distance of any point in $D_{e,h}$ to $G_e$ is no larger than $\paramunifh$, cf.~Fig.~\ref{rx4YN6}a. In addition, we introduce sequences $\paramfineh\to0$ and $R_h\to\infty$ such that, for any strain $\epsilon_e$ not larger than $R_h$, there is at least one data point which approximates the pair $(\epsilon_e, \hat\sigma_e(\epsilon_e))$ with an error no larger than $\paramfineh$, cf.~Fig.~\ref{rx4YN6}a. These assumptions are similar to the fine and uniform approximation properties formulated in \cite[Th.~3.3]{conti:2018}. We show in Proposition~\ref{propuniform} that, if $\hat\sigma_e$ is sufficiently regular and suitable relations between the sequences $\paramunifh$, $\paramfineh$ and $R_h$ hold, then there are approximate solutions to the problem with stress function $\hat\sigma_{e,h}$, constructed solely from the data $D_{e,h}$, that converge to solutions of the limiting problem with stress function $\hat\sigma_e$. Numerical examples illustrating this mode of convergence are presented in Section~\ref{cbb7Jh}.

\subsubsection{Noisy data with outliers}
\label{kuP0lq}

As above, we still assume that $\hat{\sigma}_e : \mathbb{R}^d \to \mathbb{R}^d$ are continuous functions which obey material stability, in the sense of \eqref{eqsigmastable}, and denote by   $G_e$ the graph of $\hat\sigma_e$, in the sense of \eqref{eqdefgraph}. For every $h\in\N$, we consider a discrete set of data points $D_{e,h}\subseteq Z_e$, which can again be seen as a discrete approximation to $G_e$.  Starting from these data, we define an approximate stress function $\hat\sigma_{e,h}$ with regularization as in (\ref{npUr0e}).  We remark that, since $D_{e,h}$ is a nonempty finite set the weights are all positive and depend continuously on $\epsilon_e$. In particular, the functions $\sigma_{e,h}$ defined as in (\ref{npUr0e}) are continuous.

The following proposition ensures that, if $D_{e,h}$ is a good approximation of the graph of $\hat\sigma_e$, then the solutions of the problems defined by the functions $\hat\sigma_{e,h}$ converge to solutions of the limiting problem defined by the functions $\hat\sigma_{e}$. A rigorous statement and proof are given in Appendix~\ref{secappappoutl}.

\begin{prop}[Noisy data with outliers, heuristics]\label{G2zOb4}
Let $w_e > 0$ and let $\hat{\sigma}_e : \mathbb{R}^d \to \mathbb{R}^d$ be continuous functions which obey material stability, in the sense of \eqref{eqsigmastable}, and are sufficiently regular. Let $B : \mathbb{R}^n \to \mathbb{R}^{m d}$ obey structural stability, in the sense of \eqref{eqBstructurstable}. For every $h\in\N$, consider a finite set of data points $D_{e,h}\subseteq Z_e$. Assume:
\begin{enumerate}
    \item There are sufficiently few outliers.
    \item The data do not cluster in strain space and the covering of strain space is approximately uniform.
    \item The region of stress-strain space covered by the data is bounded and increasingly large.
    \item For the sampled strains, there are incresingly many data points in stress-strain space that are close to the limit stress-strain graph.
\end{enumerate}
Then, the Data-Driven solutions $u_h$ converge to a solution $u$ of the continuous problem \eqref{wpL8eA}. If the functions $\hat\sigma_e$ are strictly monotone, then the solution $u$ of the limiting problem is unique.
\end{prop}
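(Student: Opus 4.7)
The plan is to derive Proposition~\ref{G2zOb4} from Proposition~\ref{abAx6h} by verifying its three hypotheses for the max-ent regularized stress functions $\hat\sigma_{e,h}$ defined in \eqref{npUr0e}, for an appropriate choice of a sequence of reciprocal temperatures $\beta_{e,h}\to+\infty$ tied to the data. Continuity of each $\hat\sigma_{e,h}$ is immediate from its explicit Gibbs representation, since the weights $p^*_{e,i,h}(\epsilon_e;\beta_{e,h})$ are smooth positive functions of $\epsilon_e$ and the sum is finite. Once the three hypotheses are verified, Proposition~\ref{abAx6h} yields a subsequence of Data-Driven solutions $u_h$ converging to a solution of \eqref{wpL8eA}, and the strict monotonicity hypothesis on $\hat\sigma_e$ gives uniqueness via Proposition~\ref{8CJa9S}; the \emph{a~priori} bound \eqref{kO17na} provides the compactness needed to extract the convergent subsequence.

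For uniform material stability, the idea is to show that $\hat\sigma_{e,h}(\epsilon_e)\cdot\epsilon_e$ inherits the coercive bound \eqref{eqsigmastable} satisfied by $\hat\sigma_e$, up to a vanishing error. Writing
\[
\hat\sigma_{e,h}(\epsilon_e)\cdot\epsilon_e
=
\sum_{i} p^*_{e,i,h}(\epsilon_e;\beta_{e,h})\,\sigma_{e,i,h}\cdot\epsilon_e
\]
and comparing each $\sigma_{e,i,h}\cdot\epsilon_e$ with $\hat\sigma_e(\epsilon_{e,i,h})\cdot\epsilon_{e,i,h}$, the discrepancy is controlled by the band width implicit in assumption~(iv), the concentration radius $\beta_{e,h}^{-1/2}$ of the Gibbs weights, and a local regularity bound on $\hat\sigma_e$. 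The contribution of outliers to this expectation is bounded uniformly by assumption~(i) and absorbed into the constant $b$ of \eqref{eqsigmastable}. This secures hypothesis~(i) of Proposition~\ref{abAx6h} uniformly in $h$, provided $\beta_{e,h}$ is chosen so that its concentration scale vanishes no faster than the typical spacing of the good data points but no slower than the band width.

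The principal obstacle is hypothesis~(iii), local uniform convergence $\hat\sigma_{e,h}\to\hat\sigma_e$ on compact sets. Fix a compact $K\subset\mathbb{R}^d$ and a query strain $\epsilon_e\in K$. The plan is to partition $D_{e,h}$ into three disjoint subsets: a set of \emph{good near} points, whose strain lies in a ball of radius $r_h\sim \beta_{e,h}^{-1/2}\sqrt{\log\beta_{e,h}}$ around $\epsilon_e$ and whose stress lies within the narrow band around the graph of $\hat\sigma_e$; a set of \emph{outliers} in the same ball; and a set of \emph{far} points outside it. Assumption~(ii) (quasi-uniform covering in strain space) guarantees at least one good near point, which contributes an $O(1)$ mass to the partition function $Z_{e,h}(\epsilon_e;\beta_{e,h})$ in \eqref{eqdefzeh}; by comparison, the Gaussian tail bound yields
\[
\sum_{\text{far } i} p^*_{e,i,h}(\epsilon_e;\beta_{e,h}) \to 0
\]
uniformly in $\epsilon_e\in K$. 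The relative weight of outliers inside the ball tends to zero by assumption~(i). On the good near points, continuity of $\hat\sigma_e$ together with the smallness of $r_h$ and of the band width gives $\sigma_{e,i,h}\approx \hat\sigma_e(\epsilon_{e,i,h})\approx \hat\sigma_e(\epsilon_e)$, so the weighted average in \eqref{npUr0e} converges to $\hat\sigma_e(\epsilon_e)$ uniformly on $K$. The delicate point, and the one that must be made precise in Appendix~\ref{secappappoutl}, is the simultaneous balancing of three scales: $\beta_{e,h}^{-1/2}$ must be small compared with the modulus of continuity of $\hat\sigma_e$ on $K$, yet large compared with the typical spacing of the good data points, while the local outlier fraction within distance $r_h$ must vanish. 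Making these three scaling regimes compatible constitutes the quantitative heart of the proof.
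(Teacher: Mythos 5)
Your strategy coincides with the paper's: Appendix~\ref{secappappoutl} proves the rigorous version (Prop.~\ref{propoutliers}) precisely by showing that the regularized stress functions satisfy uniform material stability and converge locally uniformly to $\hat\sigma_e$, and then invoking Prop.~\ref{abAx6h}. The paper's splitting of the error into the band width $t_h$, an outlier term $\mathrm{I}$ and a Lipschitz term $\mathrm{II}$, with the partition function bounded above and below by Gaussian integrals via the per-cube counts $C_h$ and $N_h$, matches your good-near/outlier/far trichotomy in substance; the paper uses the soft bound $t\mathrm{e}^{-t^2}\le \mathrm{e}^{-t^2/2}$ and a cube decomposition of strain space rather than a hard cutoff at radius $r_h\sim\beta^{-1/2}\sqrt{\log\beta}$, but this is a technical variant, not a different route. (Note also that your far-point tail estimate tacitly needs the no-clustering upper bound on the local data count, not only the lower bound you invoke for the partition function.)

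There is, however, one genuine gap. The raw max-ent average \eqref{npUr0e} is a convex combination of the finitely many stresses in $D_{e,h}$ and is therefore a \emph{bounded} function of $\epsilon_e$; consequently $\hat\sigma_{e,h}(\epsilon_e)\cdot\epsilon_e$ grows at most linearly in $\|\epsilon_e\|_e$ and cannot satisfy the quadratic coercivity \eqref{eqsigmastable} with $h$-independent constants, no matter how $\beta_{e,h}$ is chosen. Your comparison of $\sigma_{e,i,h}\cdot\epsilon_e$ with $\hat\sigma_e(\epsilon_{e,i,h})\cdot\epsilon_{e,i,h}$ is only effective where the query strain lies inside the region covered by the data; for $\|\epsilon_e\|_e\gg R_h$ all data points are far away and the Gibbs weights still sum to one over a bounded stress set. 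Since Prop.~\ref{8CJa9S} --- and hence the existence of the approximate solutions $u_h$ and the a priori bound \eqref{kO17na} that supplies your compactness --- requires coercivity for \emph{all} strains, the construction must be modified outside the data-covered region. The paper does this in \eqref{eqdefhatsigmahweigts} by blending $\sigma^*_{e,h}$ with a fixed positive-definite linear law $\mathbb{C}_e\epsilon_e$ through a cutoff at scale $R_h$, and then checks that the limit does not depend on this extension. Your proof needs this (or an equivalent) device before hypothesis (i) of Prop.~\ref{abAx6h} can be claimed; the remainder of your outline, including the three-scale balancing, is consistent with the quantitative conditions imposed in Prop.~\ref{propoutliers}.
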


As before, we can make the heuristic assumptions in Prop.~\ref{G2zOb4} quantitative. First, as $D_{e,h}$  is finite, it is necessarily bounded with bound $R_h\to\infty$. We denote by $\paramfineh$ the resolution in strain space. We assume that there are many data points that correctly characterize strains which are not too large. Specifically, we assume  that the number of data points with strain in a neighborhood of size $\paramfineh$ of every strain $\epsilon_e$,  with $\|\epsilon_e\|_e\le R_h$, is contained between $N_h$ and $C_h$, with $1\ll N_h\ll C_h$. In addition, we require that there not be too many outliers. We say that a pair $(\xi_e,\eta_e)\in D_{e,h}$ is an {\sl outlier} if the measured stress $\eta_e$ differs from the `true' stress $\hat\sigma(\xi_e)$ corresponding to the measured strain $\xi_e$ by more than $\paramunifh$. Then, we assume that the number of outliers in the vicinity of every strain $\epsilon_e$ is bounded by $M_h$. The proposition then ensures that, if appropriate relations between these quantities hold, there is a sequence $\beta_{e,h}\to\infty$ such that the data-driven solution converges to the `true' solution.

\subsubsection{Numerical examples}\label{cbb7Jh}

\begin{figure}[ht]
\begin{center}
\includegraphics[width=0.65\linewidth]{./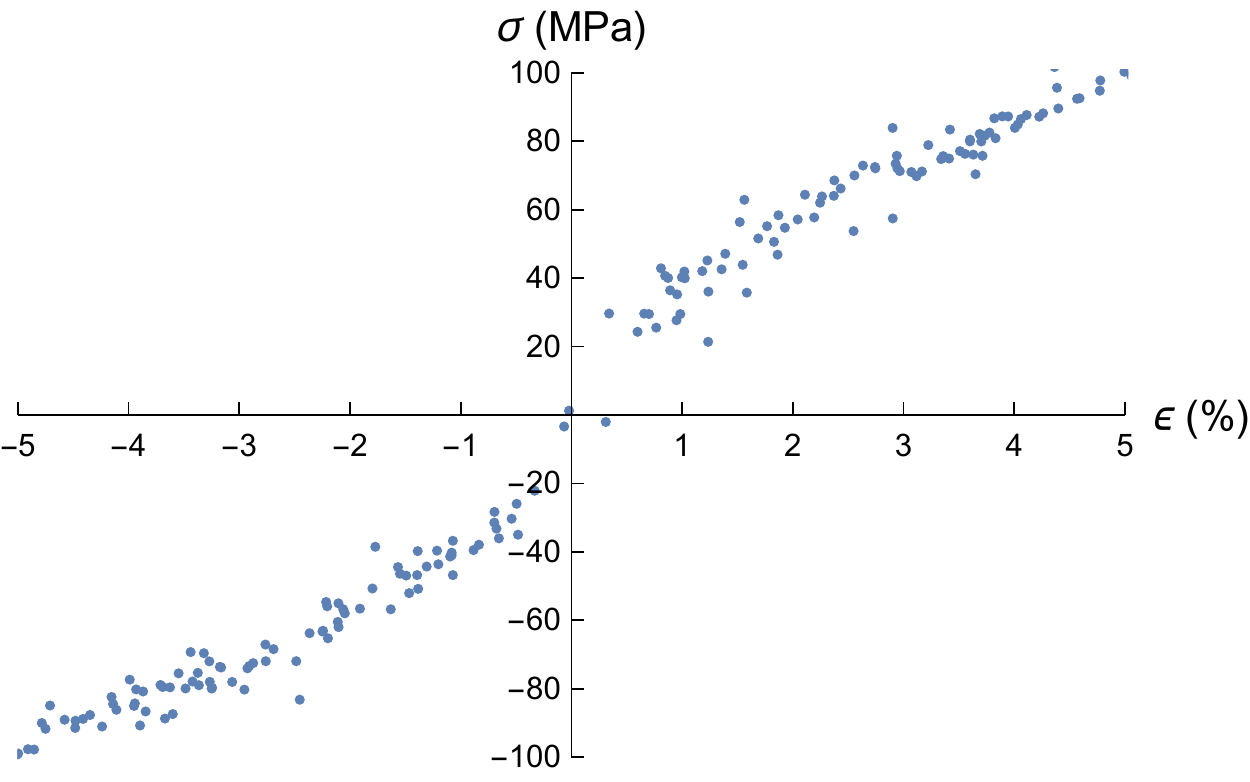}
    \caption{Two-hundred data-point set. The strains are uniformly distributed and the data points are obtained by adding random Gaussian noise to the limiting stress-strain curve. } \label{1Z6rln}
\end{center}
\end{figure}

The examples that follow illustrate the type of convergence that can be expected when the effective material laws are inferred from point material data, including max-ent regularization. We specifically consider the simple case of a uniaxial material obeying the stress-strain relation
\begin{equation}\label{J9D7pa}
    \epsilon
    =
    \frac{\sigma}{\mathbb{C}_0} +
    \Big( \frac{\sigma}{\mathbb{C}_1} \Big)^2 .
\end{equation}
This material behavior is tested and a sample point-data set of $N$ points is generated. The test apparatus is imprecise and the data exhibits additive Gaussian noise evaluated using the Box-Muller transformation \cite{BoxMuller1958}. A representative sample corresponding to $\mathbb{C}_0 = 200$MPa, $\mathbb{C}_1 = 44.7214$MPa, strain standard deviation $0.05$\%, stress standard deviation $1$Mpa and $N = 200$, is shown in Fig.~\ref{1Z6rln}.

\begin{figure}[ht]
\begin{center}
	\begin{subfigure}{0.45\textwidth}\caption{} \includegraphics[width=0.99\linewidth]{./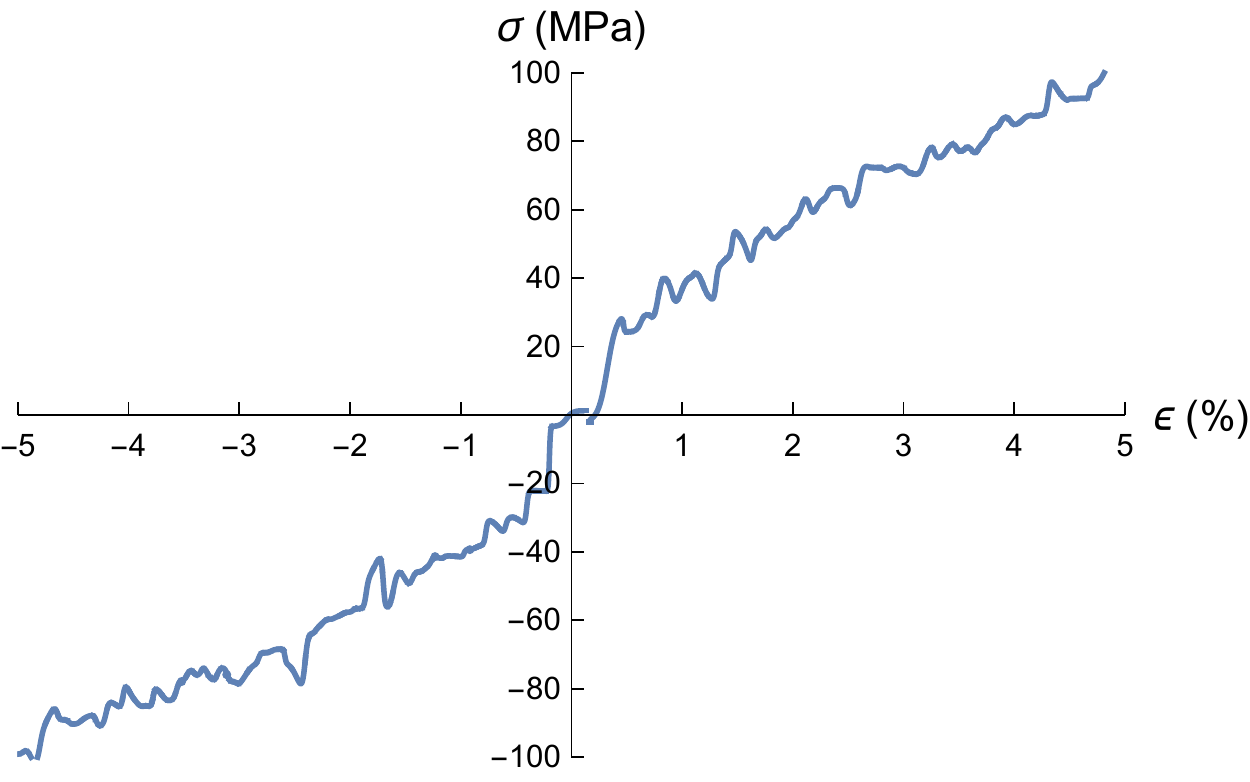}
	\end{subfigure}
	\begin{subfigure}{0.45\textwidth}\caption{} \includegraphics[width=0.99\linewidth]{./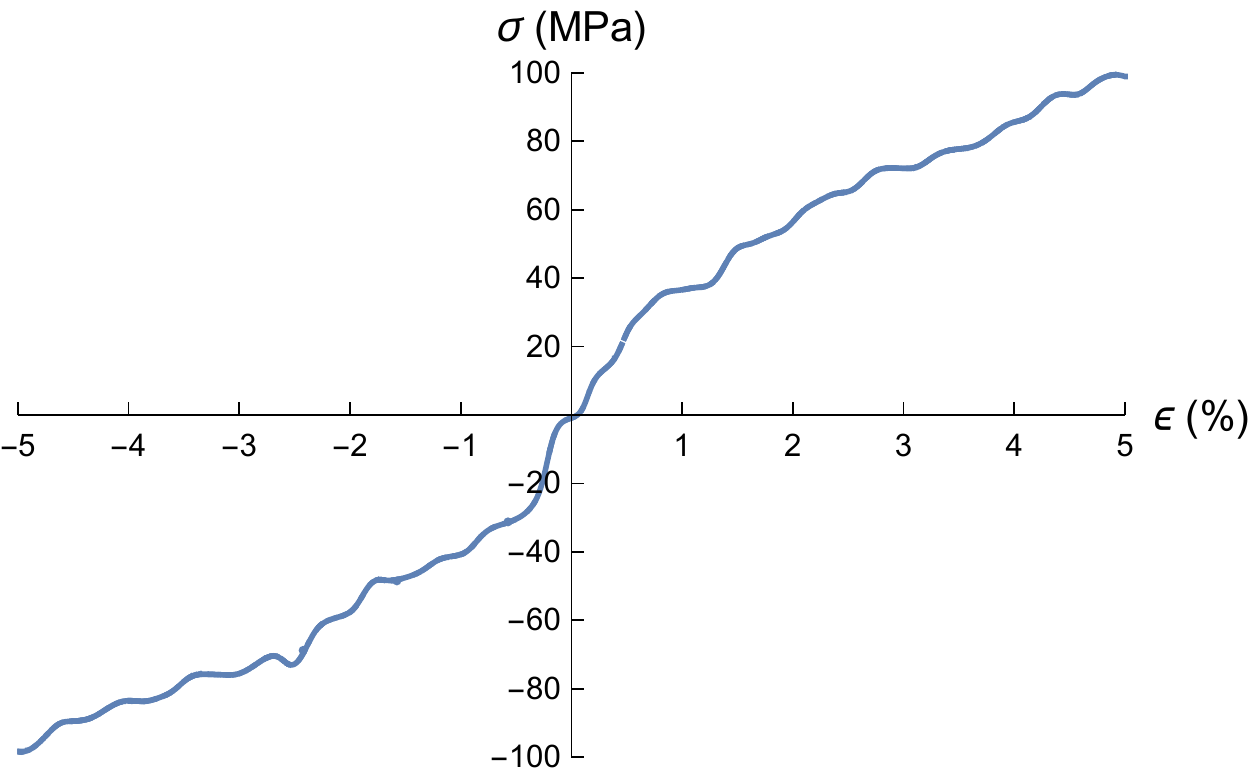}
	\end{subfigure}
	\begin{subfigure}{0.45\textwidth}\caption{} \includegraphics[width=0.99\linewidth]{./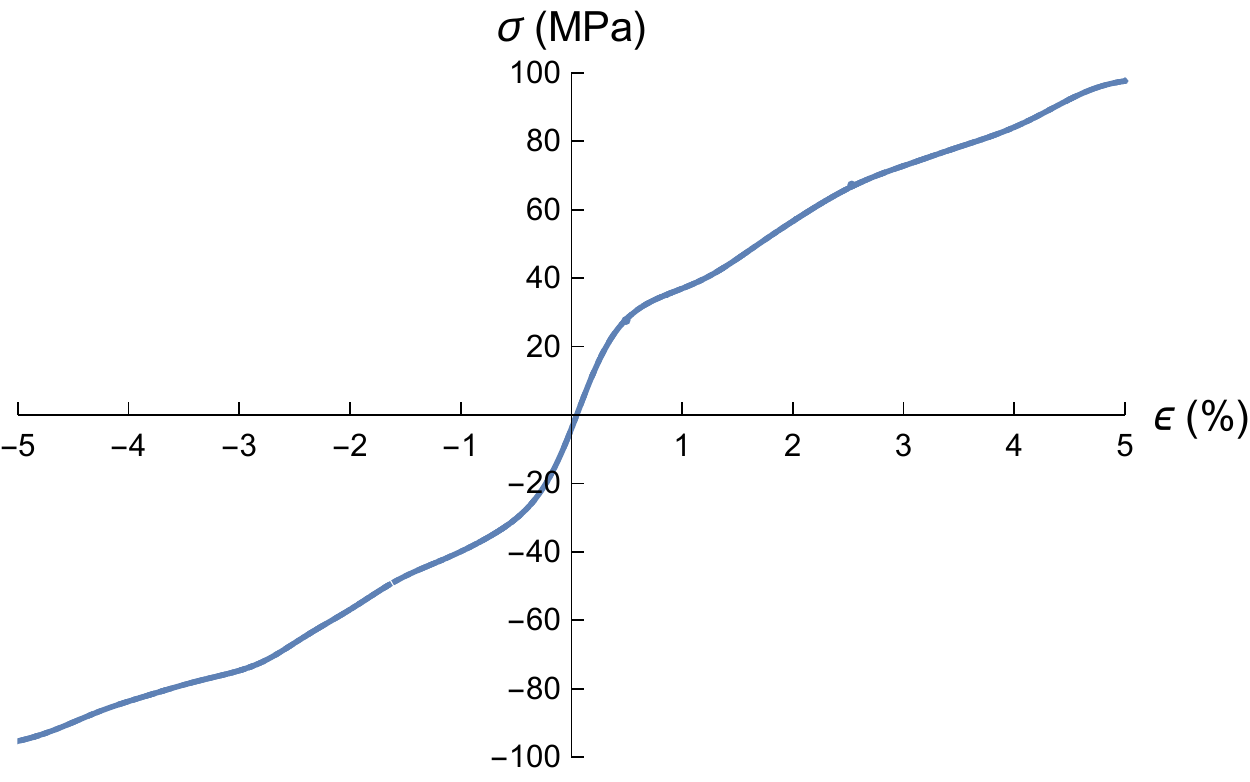}
	\end{subfigure}
	\begin{subfigure}{0.45\textwidth}\caption{} \includegraphics[width=0.99\linewidth]{./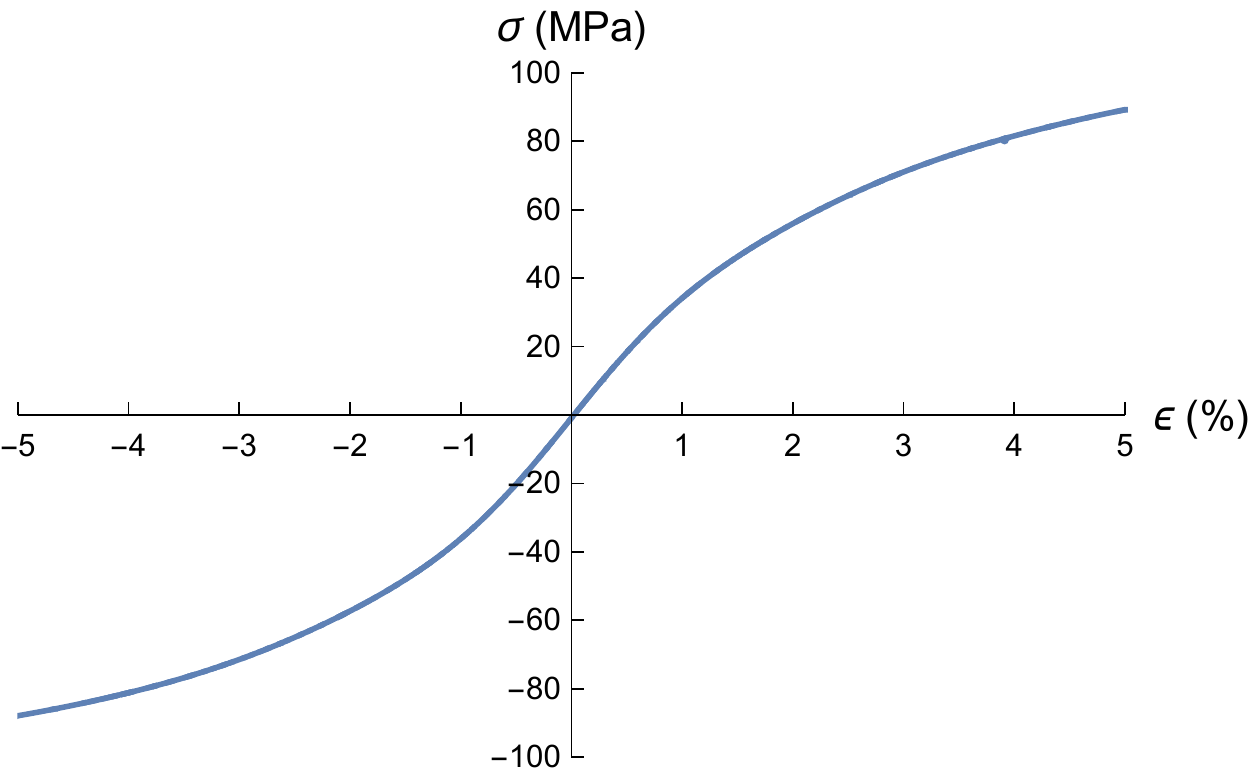}
	\end{subfigure}
    \caption{Regularized effective stress-strain curves. a) $\beta = 1600$; b) $\beta = 160$; c) $\beta = 16$; and d) $\beta = 1.6$.} \label{atMH3z}
\end{center}
\end{figure}

For simplicity, we choose a strain-controlled discrepancy function (\ref{Zl4CeS}) and compute stresses as in (\ref{npUr0e}). Fig.~\ref{atMH3z} shows four regularized effective stress-strain curves obtained from the data of Fig.~\ref{1Z6rln} with $\beta = 1600$, $160$, $16$ and $1.6$. As may be seen from the figures, the max-ent regularization has the effect of increasing the smoothness of the effective stress-strain curve, which becomes monotonically increasing for sufficiently small $\beta$.

\begin{figure}[htp]
\begin{center}
	\begin{subfigure}{0.45\textwidth}\caption{} \includegraphics[width=0.8\linewidth]{./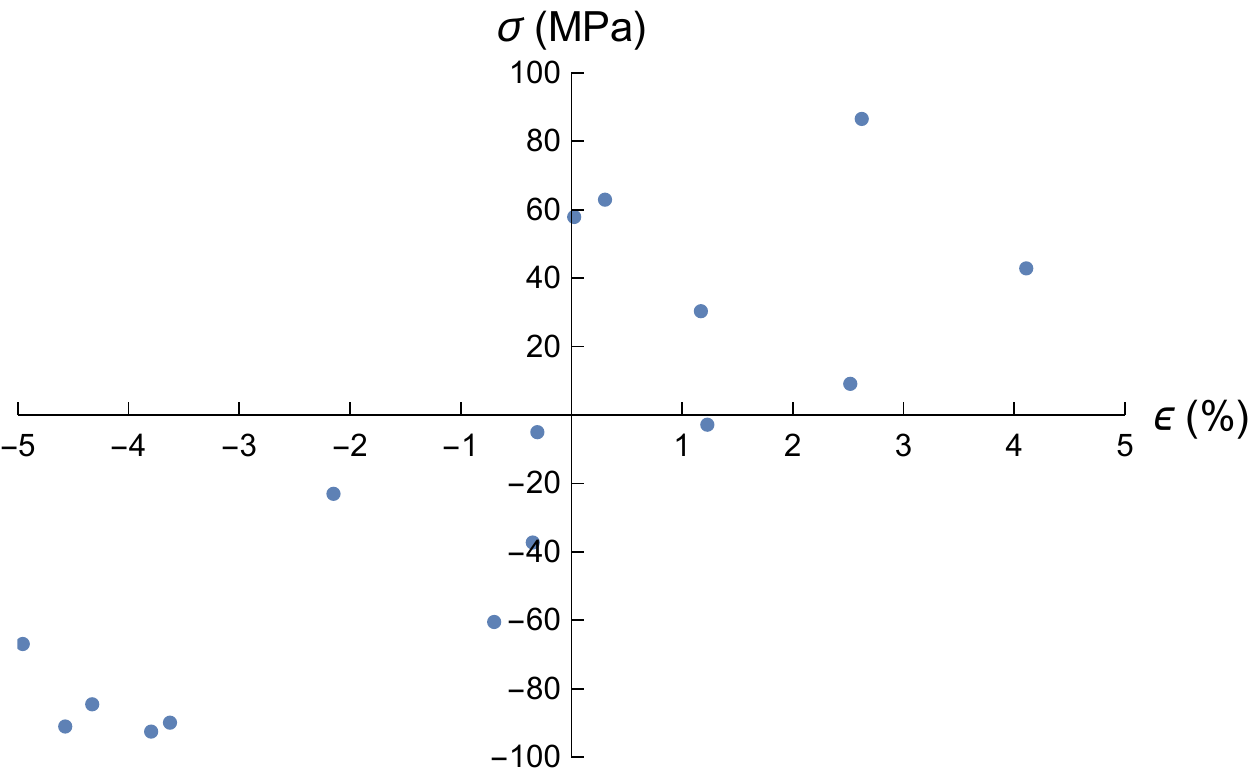}
	\end{subfigure}
	\begin{subfigure}{0.45\textwidth}\caption{} \includegraphics[width=0.8\linewidth]{./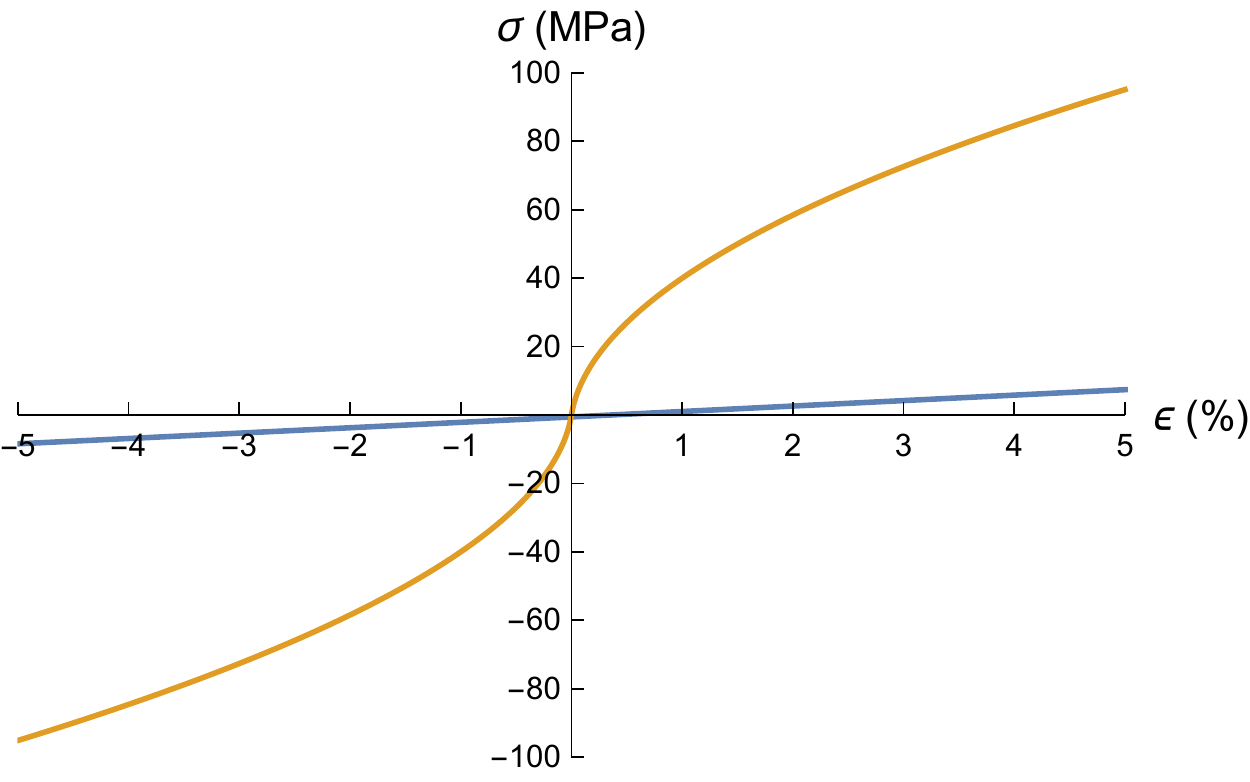}
	\end{subfigure}
	\begin{subfigure}{0.45\textwidth}\caption{} \includegraphics[width=0.8\linewidth]{./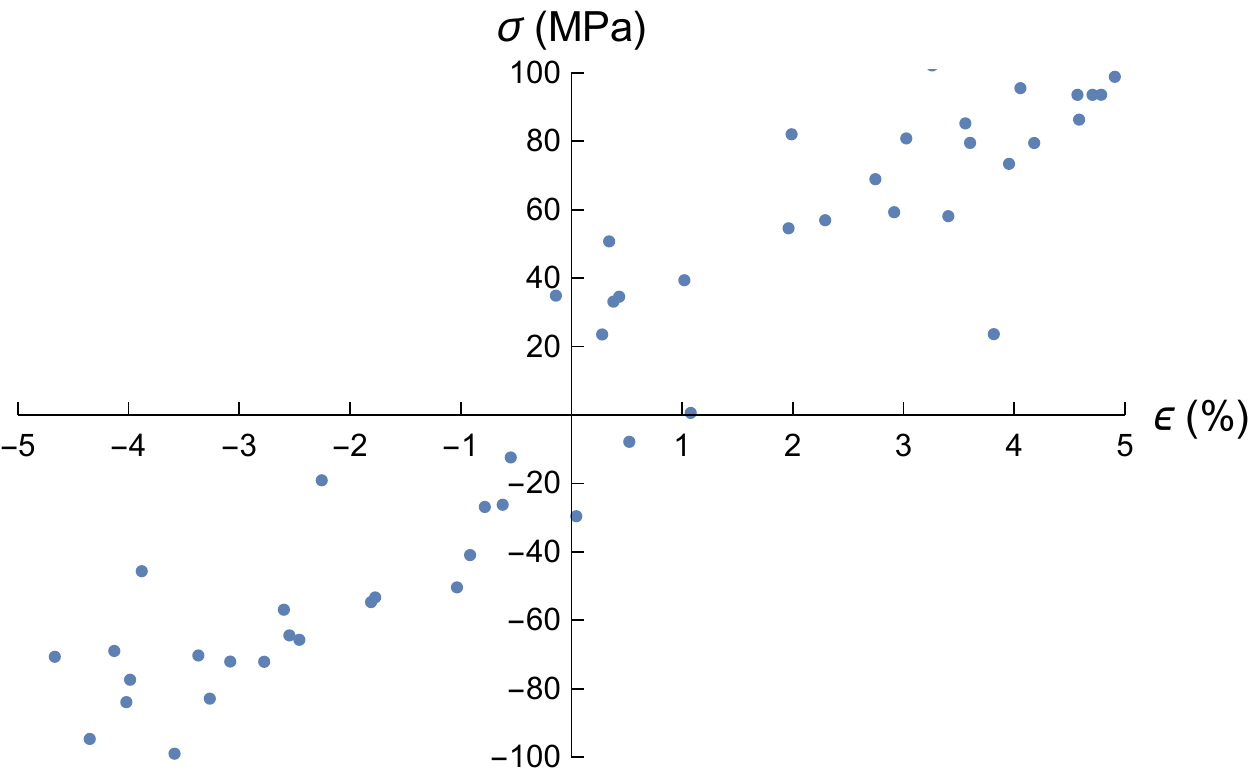}
	\end{subfigure}
	\begin{subfigure}{0.45\textwidth}\caption{} \includegraphics[width=0.8\linewidth]{./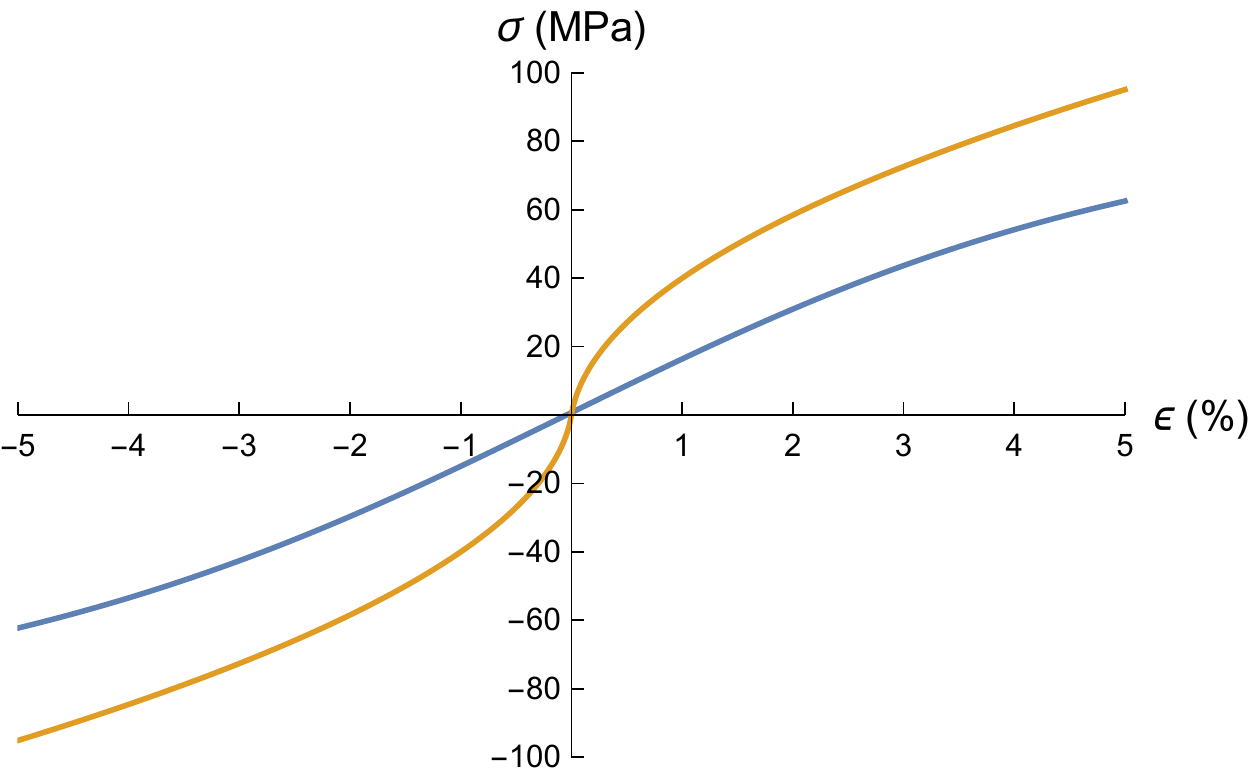}
	\end{subfigure}
	\begin{subfigure}{0.45\textwidth}\caption{} \includegraphics[width=0.8\linewidth]{./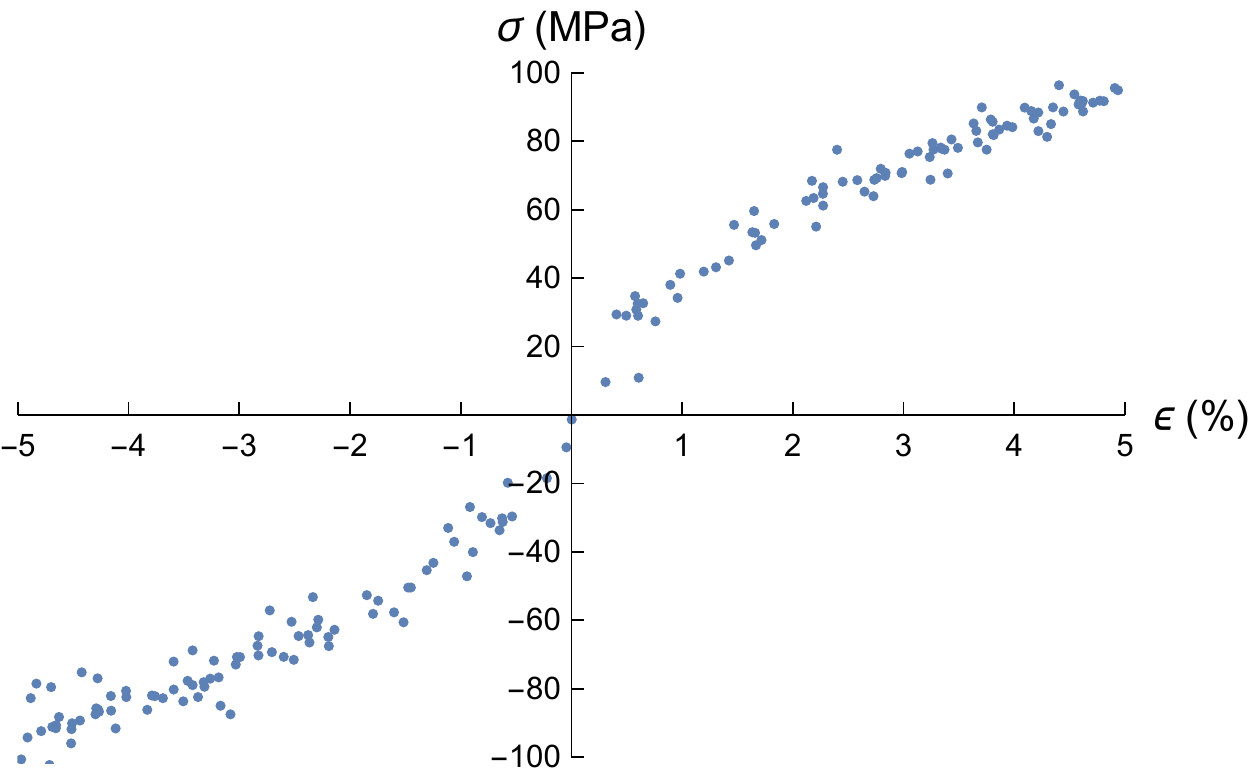}
	\end{subfigure}
	\begin{subfigure}{0.45\textwidth}\caption{} \includegraphics[width=0.8\linewidth]{./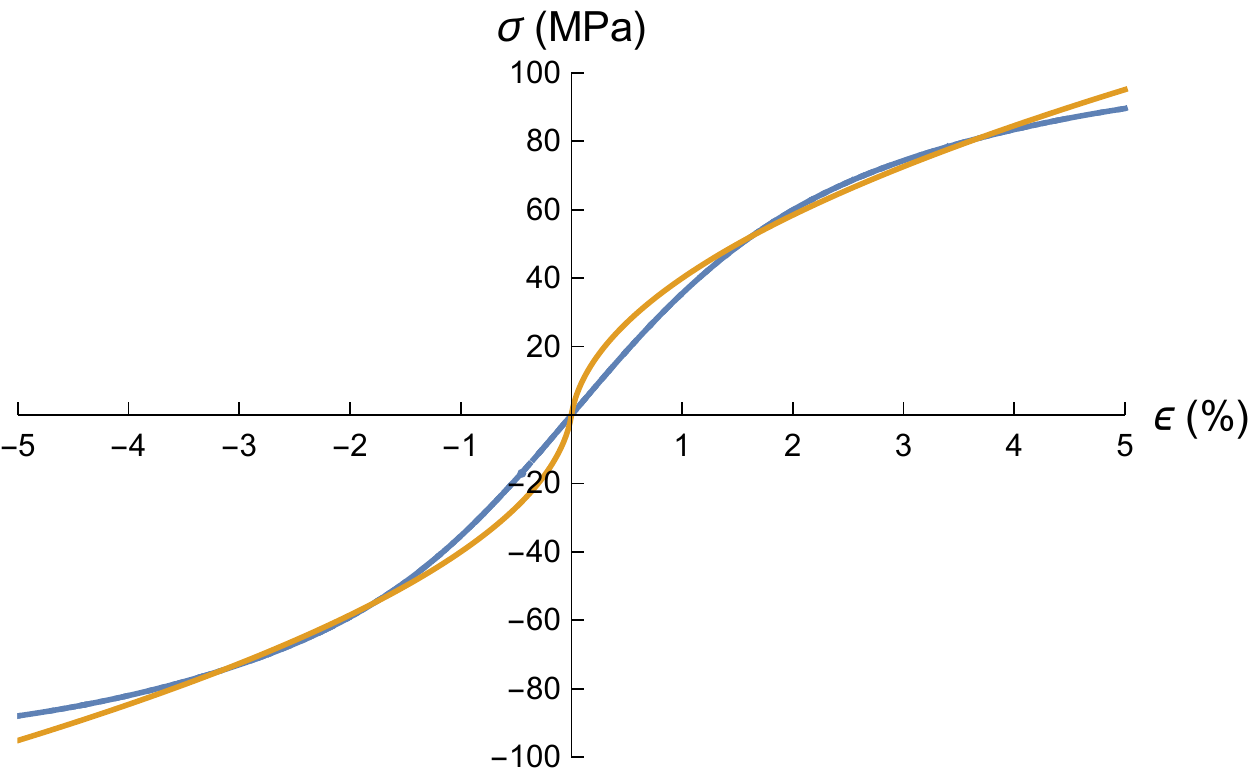}
	\end{subfigure}
	\begin{subfigure}{0.45\textwidth}\caption{} \includegraphics[width=0.8\linewidth]{./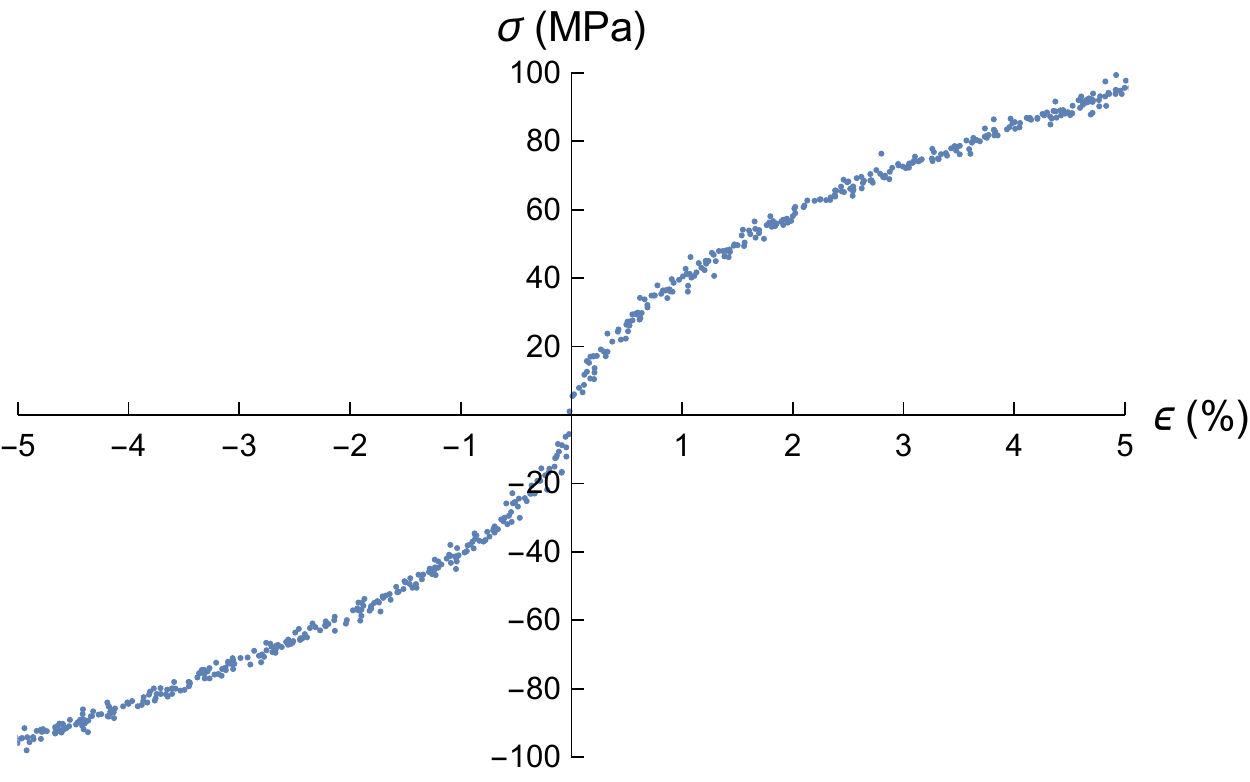}
	\end{subfigure}
	\begin{subfigure}{0.45\textwidth}\caption{} \includegraphics[width=0.8\linewidth]{./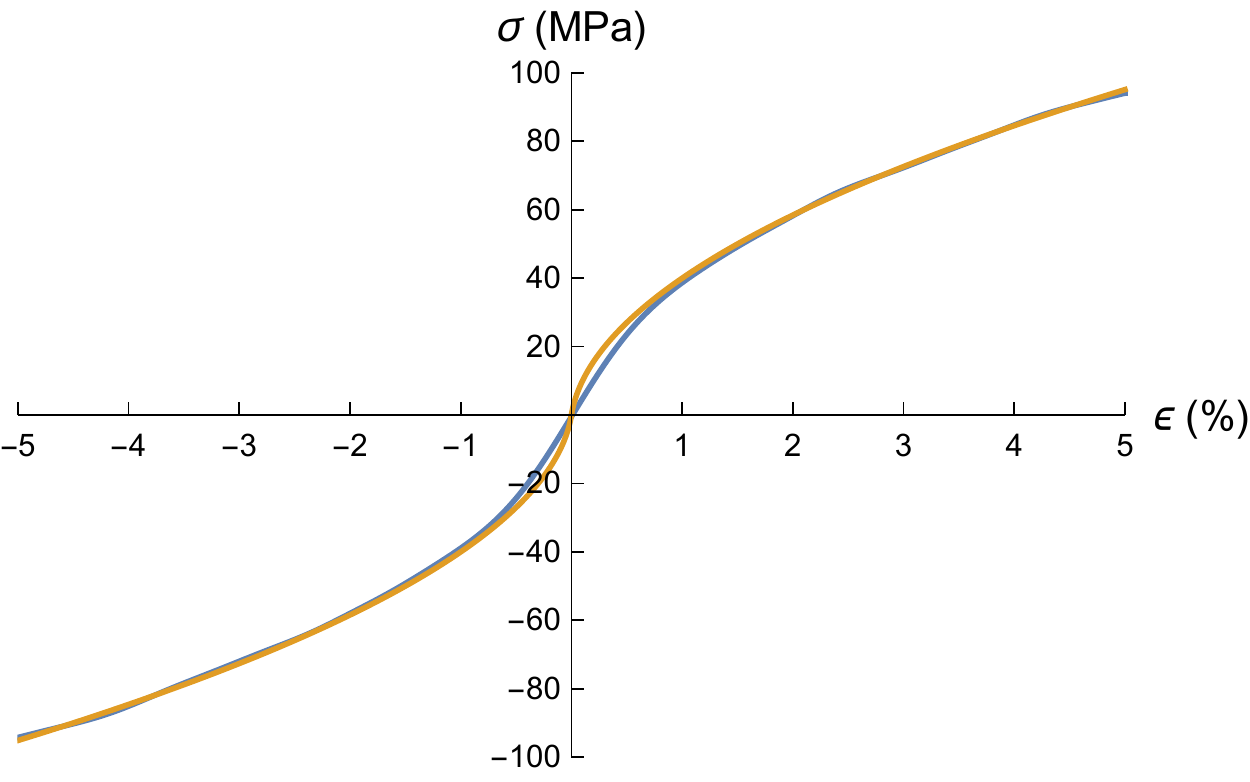}
	\end{subfigure}
	\begin{subfigure}{0.45\textwidth}\caption{} \includegraphics[width=0.8\linewidth]{./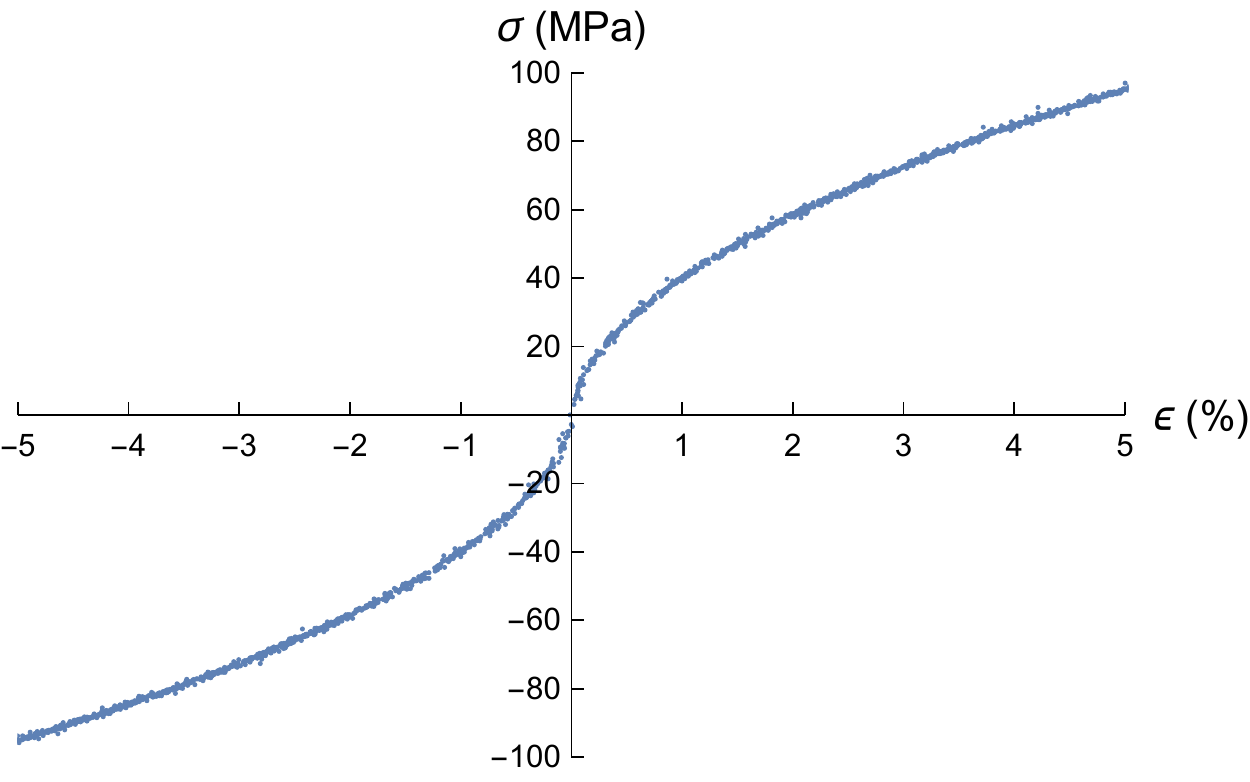}
	\end{subfigure}
	\begin{subfigure}{0.45\textwidth}\caption{} \includegraphics[width=0.8\linewidth]{./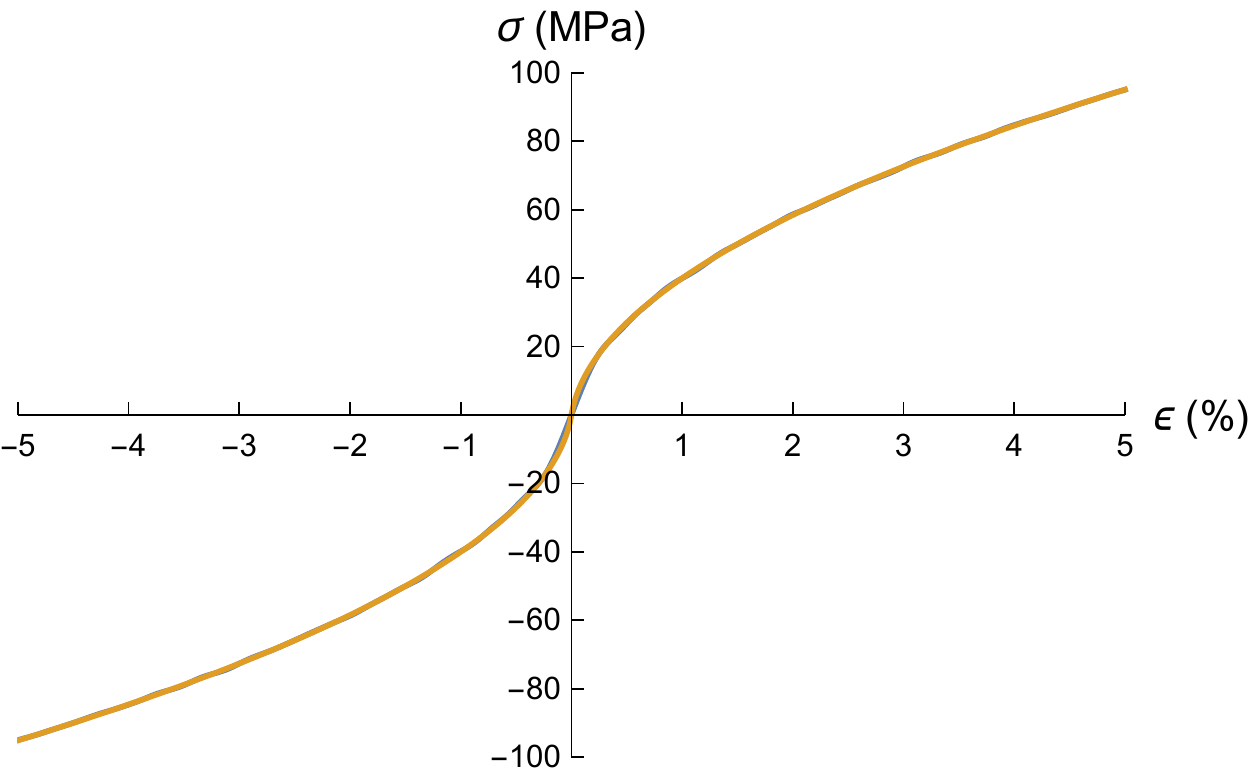}
	\end{subfigure}
    \caption{Example of convergence of the effective material laws with respect to noisy data. Left column: Sampled material data sets; Right column: Regularized effective material laws (blue) and exact material law (orange). a,b) $N=10$; c,d) $N=32$; e,f) $N=100$; g,h) $N=316$; i,j) $N=1000$. } \label{AXp6f3}
\end{center}
\end{figure}

Fig.~\ref{AXp6f3} illustrates convergence of the effective material law with respect to the data. Guided by Prop.~\ref{G2zOb4}, we consider data sets that improve steadily by virtue of an increasing sample size and a simultaneous decrease in the size of the noise. This type of data convergence may be expected to result, e.~g., from experimental campaigns in which both the size of the samples and the precision of the measurements increase steadily. Specifically, in the example shown in Fig.~\ref{AXp6f3} the strain standard deviation is assumed to decrease with the number $N$ of data points as $(10/N)$\%, and the stress standard deviation as $(200/N)$ MPa, along the sequence of material data sets. The annealing sequence is chosen so that $1/\sqrt{\beta}$ is commensurate with the strain standard deviation. Specifically, we choose $\beta = 10^3 (N/5)^2$. As sequence of material data samples of increasing size $N = 10$, $32$, $100$, $316$ and $1000$ are shown in the left column of Fig.~\ref{AXp6f3}. The right column shows the corresponding effective material laws and the limiting material law (\ref{J9D7pa}), or 'ground truth', for comparison. The expected uniform convergence of the effective material laws to the limiting material law over the strain interval under consideration is evident from the figure.

\section{Implementation and examples of application}
\label{Nd4BgR}

We conclude with examples of application that illustrate the suitability of Data-Driven game-theoretical approaches for implementation and application within a standard finite element framework. This property notwithstanding, we emphasize that all the calculations that follow are carried out directly from material data and that at no time the data is modelled in the sense of fitting to {\sl ad hoc} predetermined classes of functions, such as neural networks.

\subsection{Evaluation of effective material laws}
\label{T5RuNG}

We recall that unregularized Data-Driven games result in effective local material laws of the form (\ref{kP26d1}). The local evaluation of the effective constitutive relation is summarized in Algorithm~\ref{p4ZuLW}. The effective problem (\ref{wpL8eA}) can then be solved for the displacements, e.~g., by means of a matrix-free iterative solver such as dynamic relaxation \cite{OakleyKnight1995}, cf.~\S~\ref{Nedi3P} and ~\ref{RP4nsY}.

\begin{algorithm}[H]
\caption{Game-theoretical Data--Driven material law -- Unregularized}
\label{p4ZuLW}
\begin{algorithmic}
\REQUIRE
\STATE i) Local stress-strain point-data set $D_{e}$.
\STATE ii) Local strain $\epsilon_e$.
\STATE {\bf Then}:
\STATE i) Find $(\xi_e,\eta_e)$ in $D_e$ such that $\xi_e$ is closest to $\epsilon_e$.
\STATE ii) Return $\sigma_e = \eta_e$.
\end{algorithmic}
\end{algorithm}

Regularized Data-Driven games, by contrast, result in effective local material laws of the general implicit form (\ref{0VlMsK}), for a general discrepancy function. For the particular choice of a distance discrepancy function,  (\ref{0VlMsK}) reduces to the implicit form (\ref{aBn2Dp}), if the discrepancy function combines stress and strain states, or to the explicit form (\ref{npUr0e}) if the discrepancy function refers to strain states only.

We recall that, when the system is regularized properly, the effective local material laws $\hat{\sigma}_e(\epsilon_e)$ are smooth and have well-defined material tangents $D\hat{\sigma}_e(\epsilon_e)$. An expression for the tangents can be derived by differentiation of (\ref{0VlMsK}) taking into account the implicit dependence of the weights $p^*_{e,i}$ on the local state $z_e=(\epsilon_e,\sigma_e)$, with the result
\begin{equation}\label{zA9siX}
    \frac{\partial p^*_{e,i}}{\partial\epsilon_e} =
    -\beta_e p^*_{e,i}\Big( \frac{\partial\Phi_{e,i}}{\partial\epsilon_e}
          -\sum_{j=1}^{N_e} p^*_{e,j} \frac{\partial\Phi_{e,j}}{\partial\epsilon_e} \Big) ,
\end{equation}
where we write $\Phi_{e,i}=\Phi_e(y_{e,i},z_e)$. In the particular case of a distance local discrepancy function (\ref{2t16Lj}), the tangents follow by differentiation of (\ref{aBn2Dp}) and (\ref{zA9siX}) specializes to
\begin{equation}
    \frac{\partial p^*_{e,i}}{\partial\epsilon_e}
    =
    -
    {2\beta_e} p^*_{e,i}
    \Big[
        \mathbb{C}_e:(\bar{\epsilon}_e-\epsilon_{e,i})
        +
        (\sigma_e-\sigma_{e,i}):\mathbb{C}_e^{-1}:\frac{\partial\sigma_e}{\partial\epsilon_e}
    \Big] ,
\end{equation}
with $\bar{\epsilon}_e = \sum_{i=1}^{N_e} p_{e,i}^*\epsilon_{e,i}$, whereupon the tangents follow as
\begin{multline}\label{61XLcU}
    \frac{\partial\sigma_e}{\partial\epsilon_e}
    =
    \Big[
        \mathbb{I}
        -
        \sum_{i=1}^{N_e}{2\beta_e} p_{e,i}^*
        \big(
            \sigma_{e,i}\otimes[\mathbb{C}_e^{-1}:(\sigma_{e,i}-\sigma_e)]
        \big)
    \Big]^{-1} \\
    \Big[
        \sum_{j=1}^{N_e} {2\beta_e} p_{e,j}^* \sigma_{e,j}
        \otimes
        \big(
            \mathbb{C}_e:(\epsilon_{e,j}-\bar{\epsilon}_e)
        \big)
    \Big] ,
\end{multline}
where $\mathbb{I}$ denotes the $4$th order identity tensor. Finally, for discrepancy functions of the strain-distance type (\ref{Zl4CeS}), the tangents follow by differentiation of (\ref{npUr0e}, whereupon (\ref{61XLcU}) further specializes to the simple form
\begin{equation}\label{fEtr8X}
    \frac{\partial\sigma_e}{\partial\epsilon_e}
    =
    \sum_{j=1}^{N_e}
    {
        {2\beta_e} p_{e,j}^*  \sigma_{e,j}\otimes
        \big(
            \mathbb{C}_e:(\epsilon_{e,j}-\bar{\epsilon}_e)
        \big)
    } ,
\end{equation}
which can be evaluated explicity from the strains.

We note that, unlike the cooperative case, the tangent moduli $D\hat{\sigma}_e(\epsilon_e)$ resulting from the non-cooperative Data-Driven games are non-symmetric in general, which attests, to the non-variational global structure of said games. In the cooperative mood, simple estimates additionally show that the tangent moduli are positive definite for ${\beta_e}$ large enough \cite{Kirchdoerfer:2017}. In the non-cooperative mood, the effective material law $\hat{\sigma}_e(\epsilon_e)$ is expected to be monotonic for sufficiently large $\beta_e$, cf.~Prop.~\ref{propoutliers}, which in turn requires the symmetric part of $D\hat{\sigma}_e(\epsilon_e)$ to be positive definite.

\begin{algorithm}[H]
\caption{Game-theoretical Data--Driven material law -- Regularized}
\label{r2NBY6}
\begin{algorithmic}
\REQUIRE
\STATE i) Local regularization parameter $\beta_{e}$.
\STATE ii) Local stress-strain point-data set $D_{e}$.
\STATE iii) Local strain $\epsilon_e$.
\STATE {\bf Then}:
\STATE i) Solve (\ref{0VlMsK}), or (\ref{aBn2Dp}), for stress $\sigma_e$; or evaluate $\sigma_e$ from (\ref{npUr0e}).
\STATE ii) Evaluate tangents $D\sigma_e$ from (\ref{zA9siX}), or (\ref{61XLcU}), or (\ref{fEtr8X}).
\STATE iii) Return $\sigma_e$, $D\sigma_e$.
\end{algorithmic}
\end{algorithm}

The evaluation of the effective regularized constitutive laws and attendant tangents is summarized in Algorithm~\ref{r2NBY6}. The effective problem (\ref{wpL8eA}) can then be solved for the displacements, e.~g., by means of standard solvers such as Newton-Raphson or nonlinear Krylov-GMRES, cf.~\S~\ref{ssec:cube}.

\subsection{Illustrative examples}

We conclude this section with three simple examples of application intended to demonstrate how the proposed non-cooperative game-theoretical Data-Driven paradigm can be implemented and deployed, with or without regularization, within a standard finite-element framework. We note that, a detailed convergence analysis of the global results is not necessary since convergence can be ascertained entirely at the local level, cf.~Section~\ref{cbb7Jh}.

\subsubsection{Rotating rod}
\label{Nedi3P}

\begin{figure}[H]
\begin{center}
\begin{subfigure}{0.49\textwidth}\caption{}
\includegraphics[width=0.99\linewidth]{./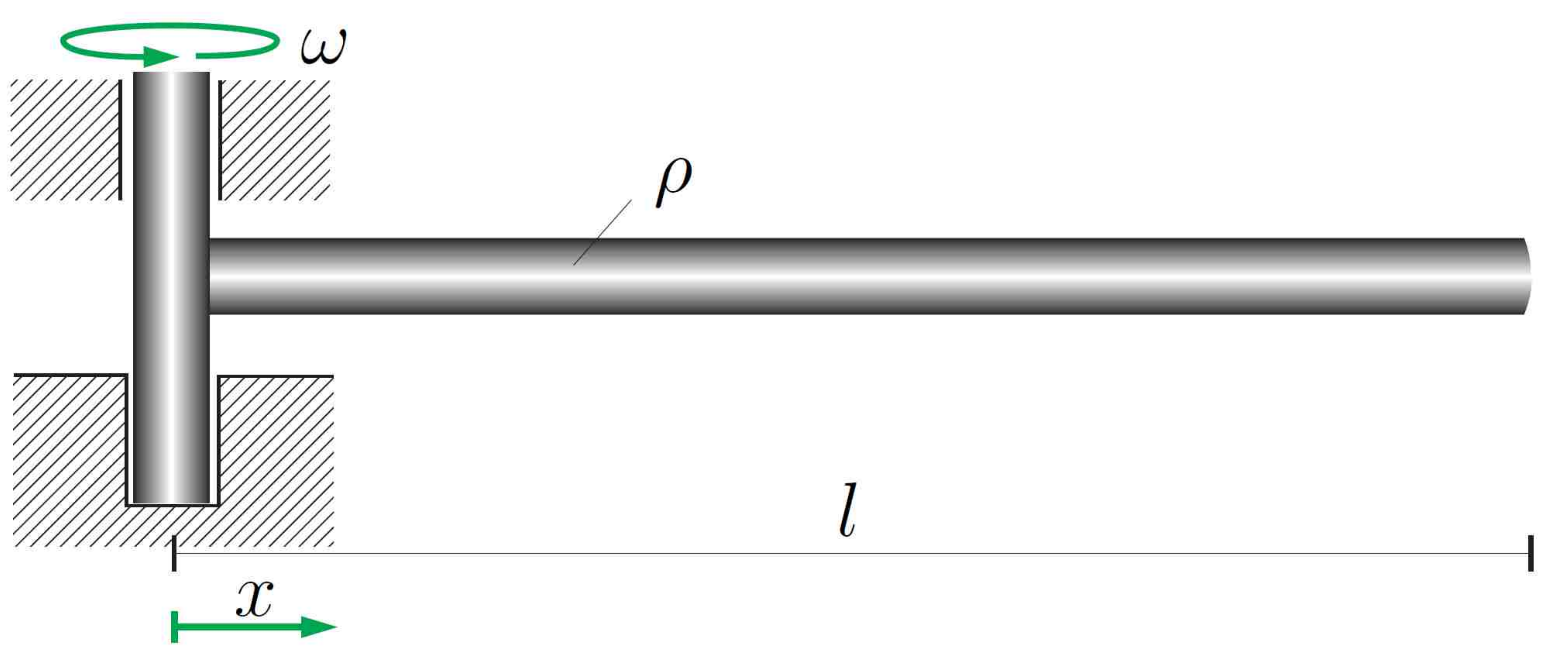}
\bigskip\smallskip
\end{subfigure}
\begin{subfigure}{0.49\textwidth}\caption{}
\includegraphics[width=0.99\linewidth]{./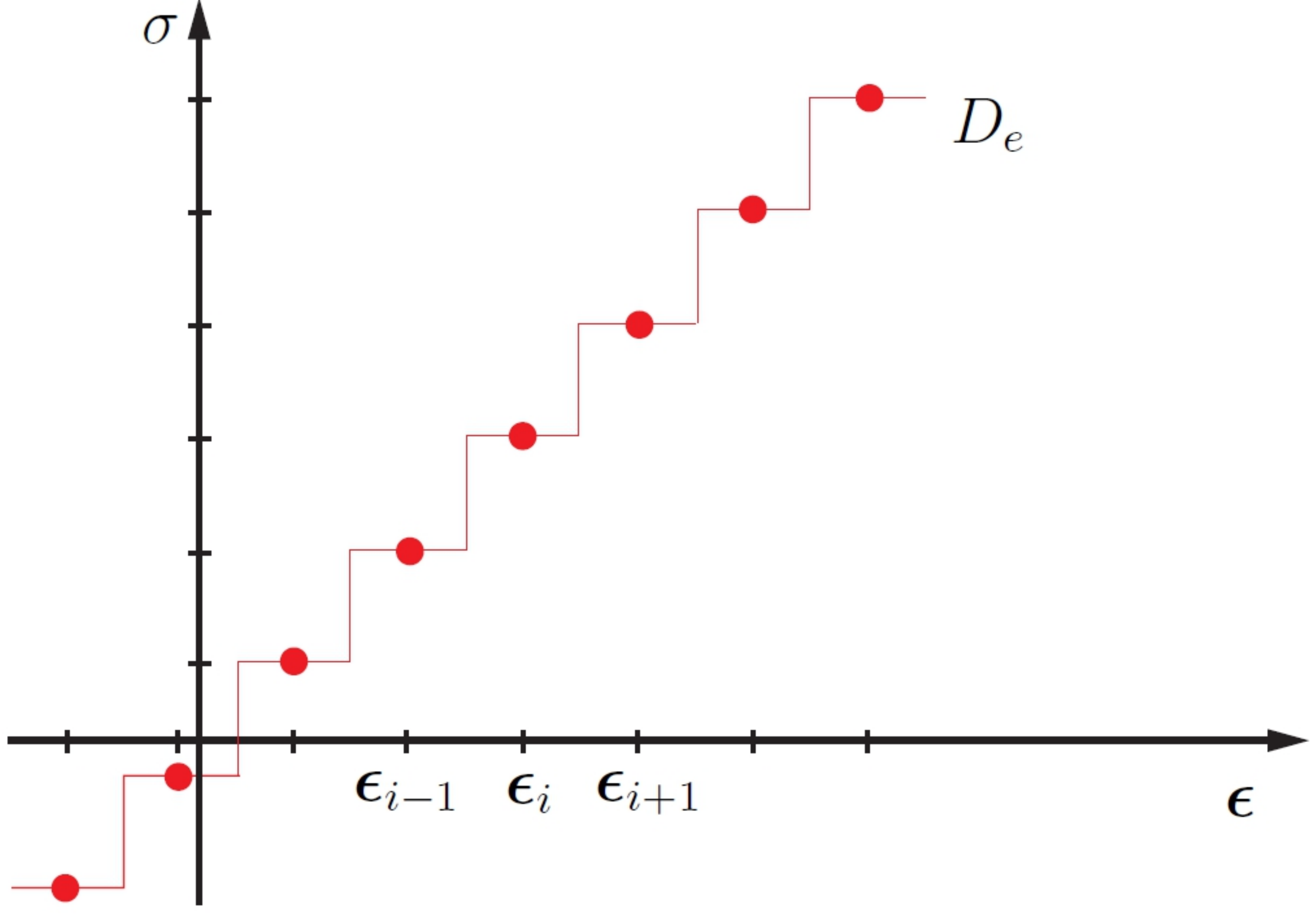}
\end{subfigure}
\caption{A) Linear-elastic rod
rotating with angular velocity $\omega$. B) Local material point-data set $D_e$ obtained by sampling Hooke's law at equidistant strains. }\label{fig:spinningRod}
\end{center}
\end{figure}

A first simple illustration of the non-cooperative Data-Driven paradigm is provided by the problem of a rod rotating around one end with angular velocity of $\omega$, Fig.~\ref{fig:spinningRod}a. The spinning motion is simply accounted for through a constant centrifugal body force of magnitude $\rho \omega^2$. The rod behaves in uniaxial stress and the material obeys Hooke's law. In calculations, the rod is discretized into equal linear finite elements along its length.

The local material point--data sets $D_e$ are obtained by sampling without noise the material law at equidistant strains over the expected range covered by the solution, Fig.~\ref{fig:spinningRod}b. The figure also shows the effective piecewise constant, or stepwise, local material law that results from a minimum strain distance game with no regularization, cf.~Section~\ref{M921rQ} and Example~\ref{y60Z7n2}. In order to render the effective stress-strain relation single-valued, at strains equidistant from the sampling strains we choose the smallest, in absolute value, of the two possible stresses. Owing to the lack of smoothness of the resulting material law, solutions must be obtained using matrix-free solvers. In calculations, we specifically use dynamic relaxation \cite{OakleyKnight1995}.

\begin{figure}
\begin{center}
\includegraphics[width=0.79\linewidth]{./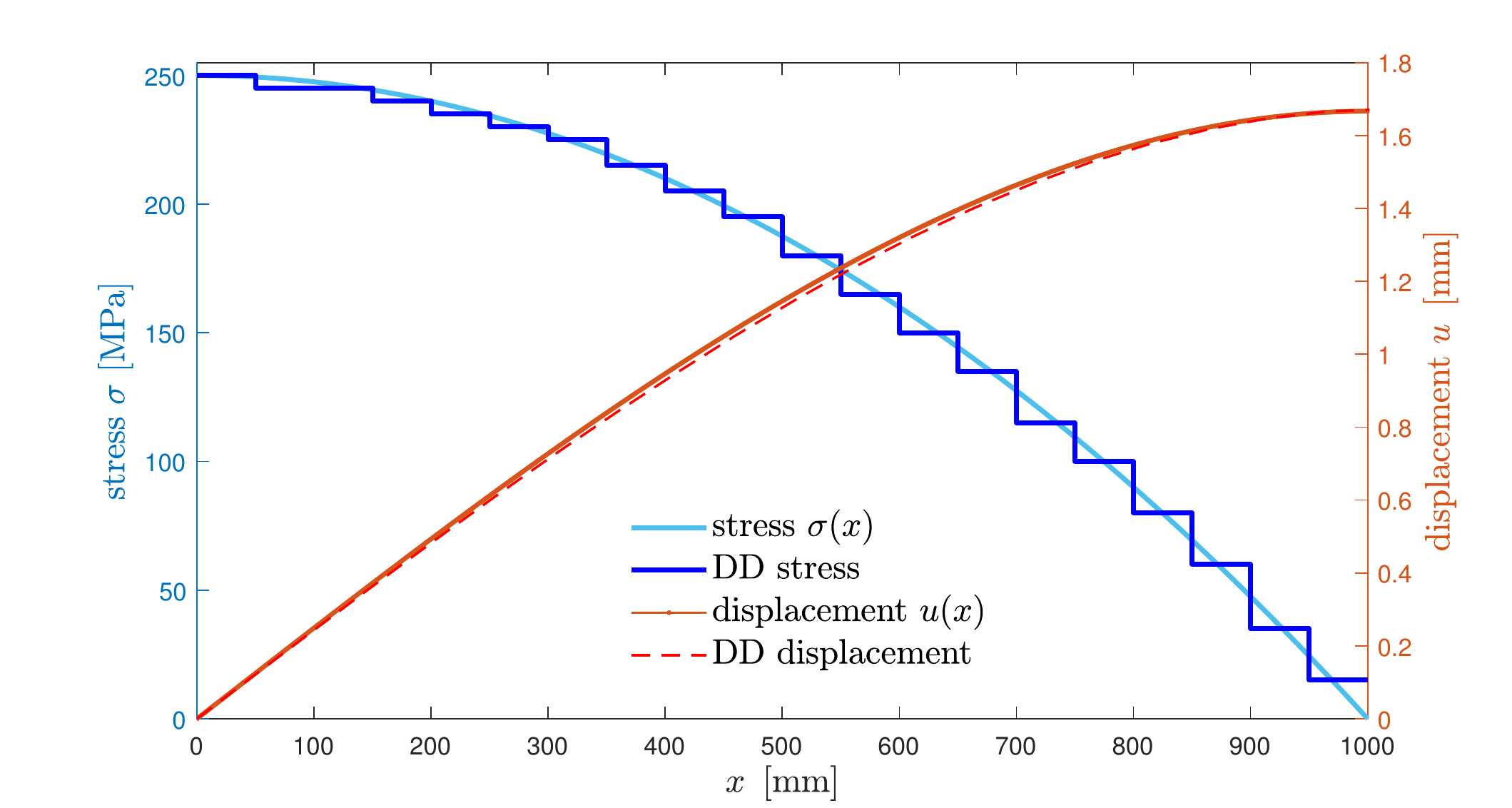}
\caption{Rotating rod problem discretized into $20$ equal linear finite elements. Non-cooperative Data-Driven stress and displacements for
$l=1\,\mathrm{m}$, $A=1$ $\mathrm{mm^2}$, $\mathbb{C}=100\,000\,$MPa, $\rho=20\,\mathrm{g/cm^3}$,
$\omega=$ $0.5 \mathrm{s^{-1}}$, and a local material point-data set $D_e$ covering the range $0$--$250$ MPa with $51$ data points.}\label{fig:stabFliehkraft_sigma_u}
\end{center}
\end{figure}

\begin{figure}[h]
\begin{center}
\includegraphics[width=0.75\linewidth]{./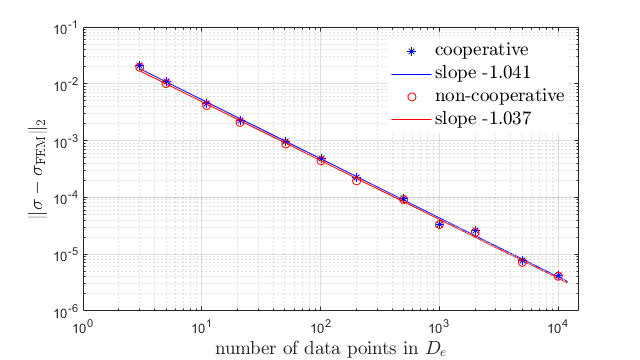}
\caption{Rotating rod problem discretized into $100$ equal linear finite elements. Convergence of the $L^2$--error in the stress field with increasing number of points in the local material data set $D_e$.}\label{fig:rodSpinning_error}
\end{center}
\end{figure}

Figure \ref{fig:stabFliehkraft_sigma_u} shows stress and displacement fields computed for a rod of length $l=1\,\mathrm{m}$, cross section $A=1$ $\mathrm{mm^2}$, Hooke's modulus $\mathbb{C}=100\,000$ MPa, and mass density $\rho=$ $20\,\mathrm{g/cm^3}$ rotating with angular velocity of $\omega=$ $0.5 \mathrm{s^{-1}}$, cf.~Fig.~\ref{fig:spinningRod}a, and a local material point-data set $D_e$ covering the range $0$--$250$ MPa with $51$ data points, cf.~Fig.~\ref{fig:spinningRod}b. In the calculations, the rod is discretized into $20$ equal linear finite elements. As may be seen from the figure, the Data-Driven solution approximates the limiting displacement and stress fields closely, despite the coarseness of the discretization and material sampling. The goodness of the approximation is all the more remarkable given the lack of regularization in the calculations and the attendant discontinuous character of the effective material law, and owes partly to the stabilizing effect of inertia \cite{Bulin:2022}.

Figure \ref{fig:rodSpinning_error} finally shows the $L^2$--error of the cooperative and non-cooperative Data-Driven stress fields as a function of the number of material data points. The results are computed using $100$ equal linear finite elements in the discretization of the rod. Remarkably, both the cooperative and non-cooperative Data-Driven games afford a nearly identical linear range of convergence with respect to material data. Evidently, the convergence of the discrete rod problem with respect to the data observed in the calculations is expected in view of Prop.~\ref{propuniform}, but the calculations additionally supply a precise convergence rate that renders the analysis of convergence quantitative.

\subsubsection{Perforated plate}
\label{RP4nsY}

\begin{figure}[h]
\begin{center}
\includegraphics[width=\linewidth]{./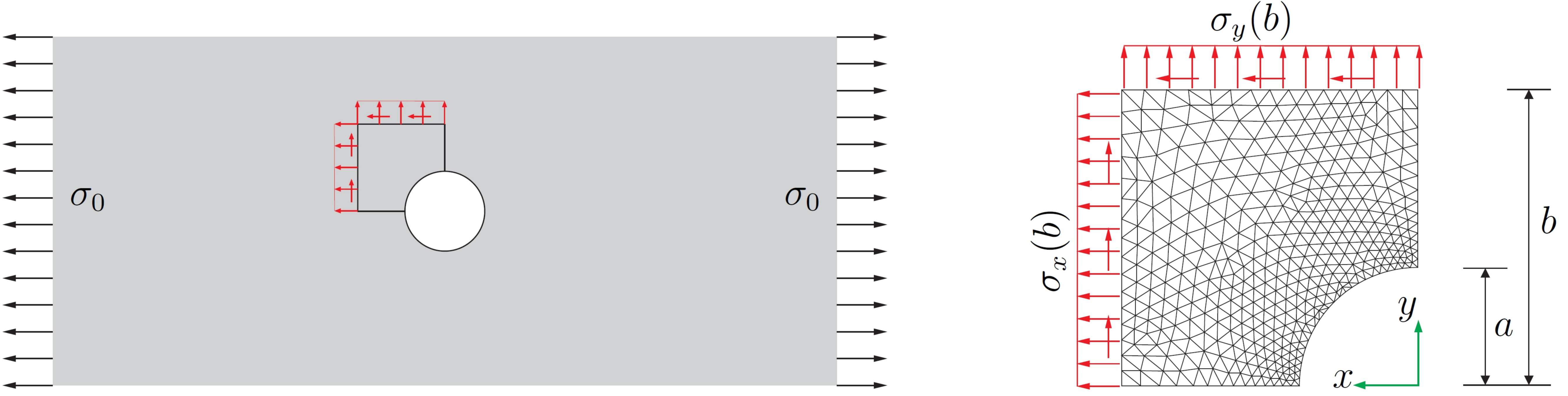}
\caption{a) Plate with circular perforation of radius $a$ under uniform remote tension $\sigma_0$. b) Computational finite element cell of size $b$ loaded by tractions computed from the exact analytical and discretized into linear triangular elements.}\label{fig:lochscheibeModell}
\end{center}
\end{figure}

\begin{figure}
\begin{center}
\includegraphics[width=0.75\linewidth]{./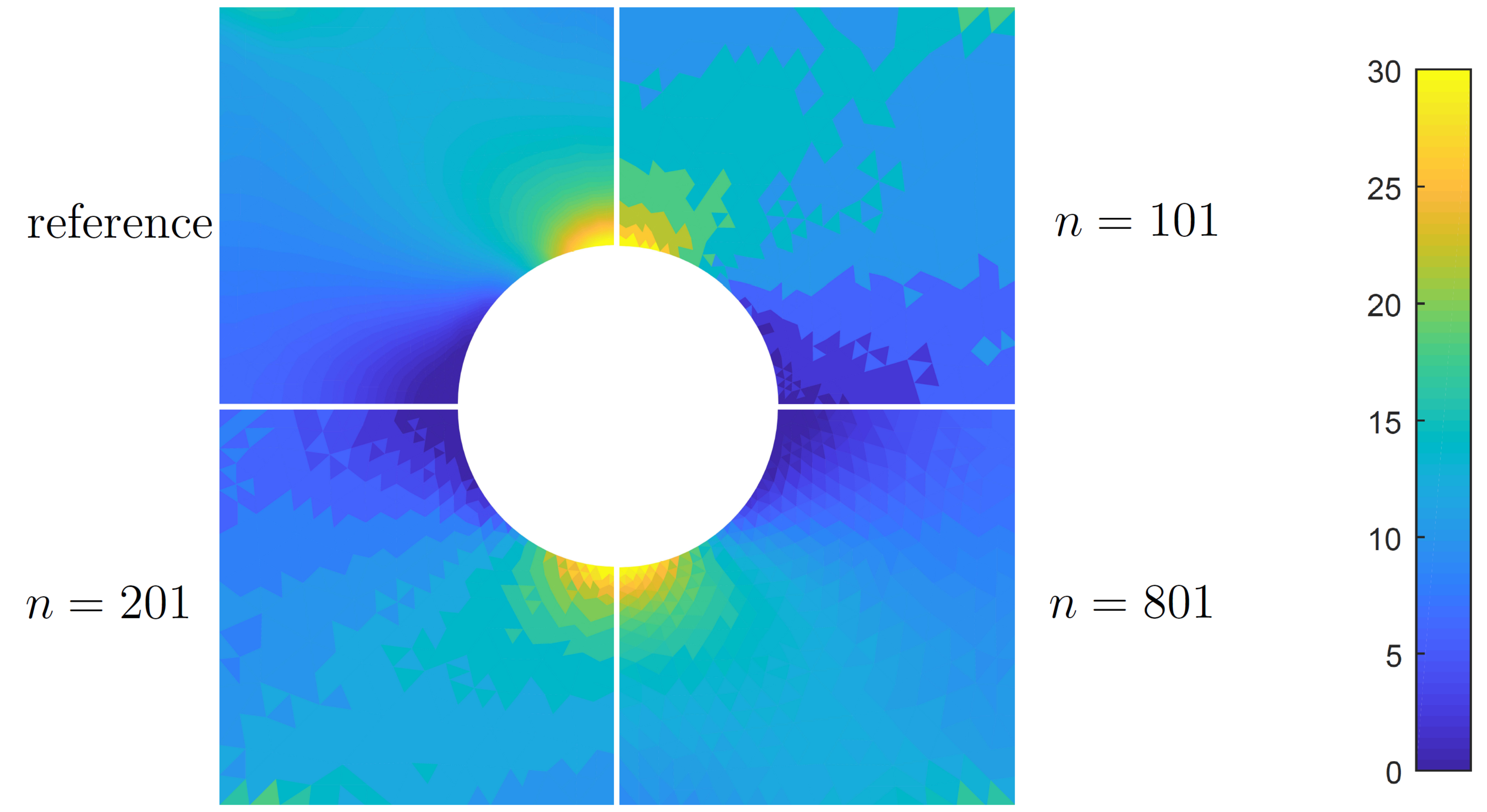}
\caption{Perforated plate problem. Comparison of exact (top left) and Data-Driven horizontal stress component $\sigma_{x}$ for $101^3$, $201^3$ and $801^3$ points in the local material data set.}\label{fig:plate3DataSets}
\end{center}
\end{figure}

As a two-dimensional example, we consider an isotropic plate with a circular perforation of radius $a=200\,$mm deforming under the action of remote tension in the horizontal direction, Fig.~\ref{fig:lochscheibeModell}a.
A quadrant of a square computational cell of size $b=500$ mm is discretized into $720$ linear triangular elements, Fig.~\ref{fig:lochscheibeModell}b.
For linear-elastic material, the exact solution of the problem is known analytically \cite{Sokolnikoff:1956} and the corresponding tractions are applied to the boundary of the computational cell. Local material data sets $D_e$ are generated by sampling without noise Hooke's law with Lam\'e constants $\lambda=0$ and $\mu=500\,$MPa over a regular Cartesian grid spanning a stress range of $\pm 0.1$ to $\pm 100\,$MPa. Choosing a strain-distance discrepancy function without regularization, cf.~Section~\ref{M921rQ} and Example~\ref{y60Z7n2}, the data define a non-cooperative Data-Driven problem that is solved by dynamic relaxation \cite{OakleyKnight1995}. As in the preceding spinning-rod example, the unregularized effective stress functions $\hat{\sigma}_e(\epsilon_e)$ are piecewise constant over the Voronoi cells of the sampled strains.

Fig.~\ref{fig:plate3DataSets} compares the Data-Driven horizontal stress component $\sigma_{x}$ for three material data sets of size $101^3$, $201^3$ and $801^3$, respectively. The exact analytical solution is also shown for comparison. We note that the Data-Driven stresses in the figure are shown exactly as computed at the element level and are not in any way smoothed, which accounts for the patchy look of the figure. Despite the non-informative nature of the material data, a general trend towards convergence of the Data-Driven solutions is evident from the figure, as expected from Prop.~\ref{propuniform}.

\subsubsection{Torsion of a cube}
\label{ssec:cube}

As a final three-dimensional example, we consider the torsion of a cube whose bottom face is clamped while a planar rotation is imposed on the top face. The cube is meshed using $4635$ linear tetrahedral elements. Horizontal ($x-y$ plane) displacements of nodes located on the top face are prescribed corresponding to a rotation of 10$^{-3}\pi$ rad around the central axis. Vertical displacements are left free. We further consider a material governed by the non-linear elastic relation
\begin{equation}\label{eq:NLelast}
    \sigma
    =
    K\left(1+\Tr[\epsilon]^2\right)\Tr[\epsilon]\delta
    +
    G(1+\Dev[\epsilon]:\Dev[\epsilon]) \Dev[\epsilon] .
\end{equation}
Data sets were generated by sampling this relation with Young modulus of $10^{11}$ Pa and a Poisson ratio of 0.35 over the intervals $[-0.004,0.004]$ for $\epsilon_{xx}$, $\gamma_{xy}$, $\epsilon_{yy}$, $\epsilon_{zz}$, and $[-0.008,0.008]$ for $\gamma_{xz}$ and $\gamma_{yz}$, cf.~Table \ref{tab:3d-sampling}, yielding data sets containing $75625$, $405769$ and $1476225$ pairs of stress and strain tensors, respectively.

\begin{table}[hbt]
    \centering
    \begin{tabular}{|r|c|c|c|c|c|c|}
      \hline
         \# sampling points & $\epsilon_{xx}$ & $\gamma_{xy}$ & $\epsilon_{yy}$ & $\epsilon_{zz}$ & $\gamma_{xz}$ & $\gamma_{yz}$ \\ \hline
         75625 ($D_1$) & 5 & 5 & 5 & 5 & 11 & 11 \\
         405769 ($D_2$) & 7 & 7 & 7 & 7 & 13 & 13 \\
         1476225 ($D_3$) & 9 & 9 & 9 & 9 & 15 & 15 \\
      \hline
    \end{tabular}
    \caption{3D strain tensor sampling schemes.}
    \label{tab:3d-sampling}
\end{table}

\begin{figure}
    \centering
    \includegraphics[width=0.475\linewidth]{./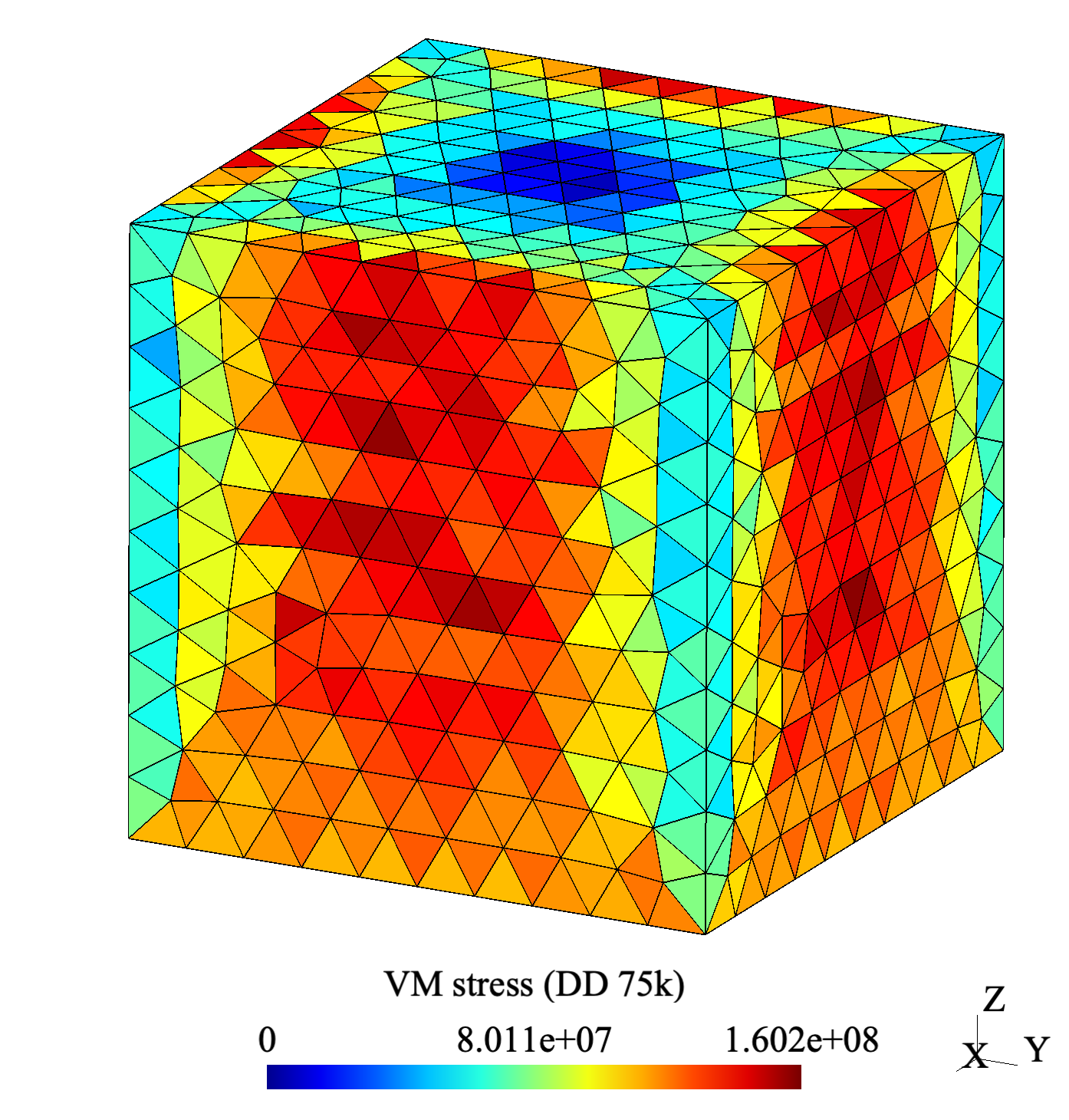}
    \includegraphics[width=0.475\linewidth]{./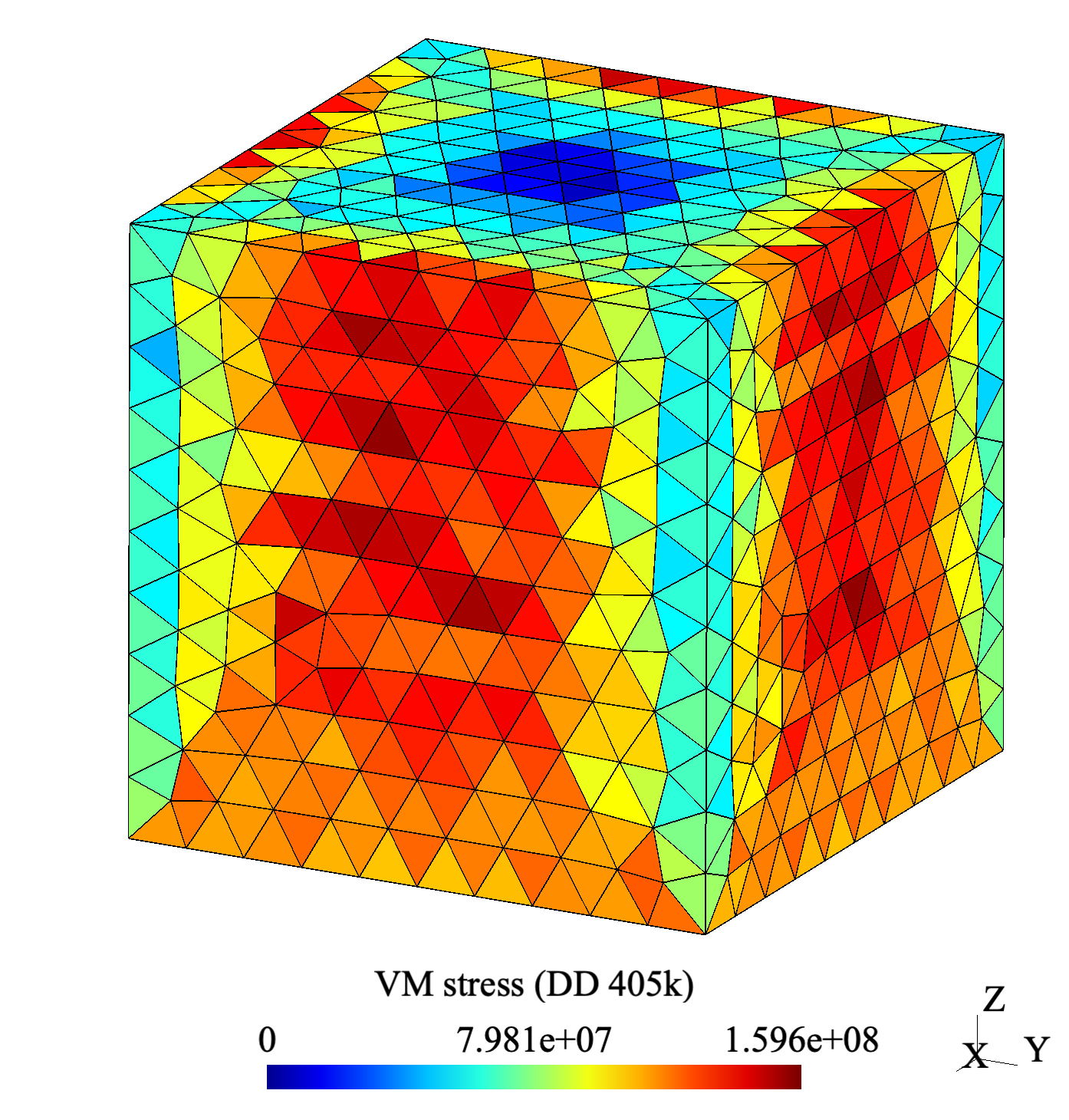}\\
    \includegraphics[width=0.475\linewidth]{./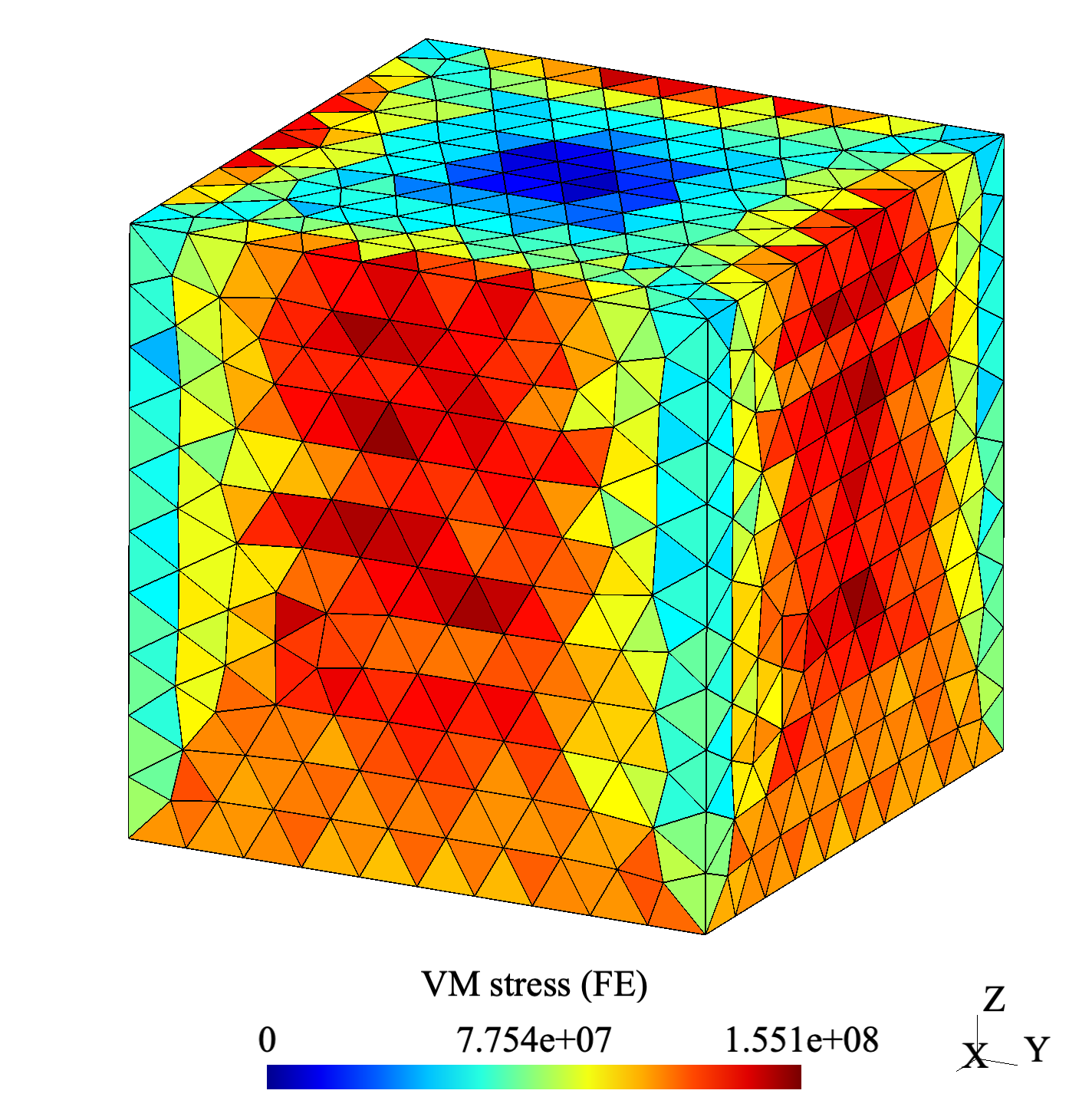}
    \includegraphics[width=0.475\linewidth]{./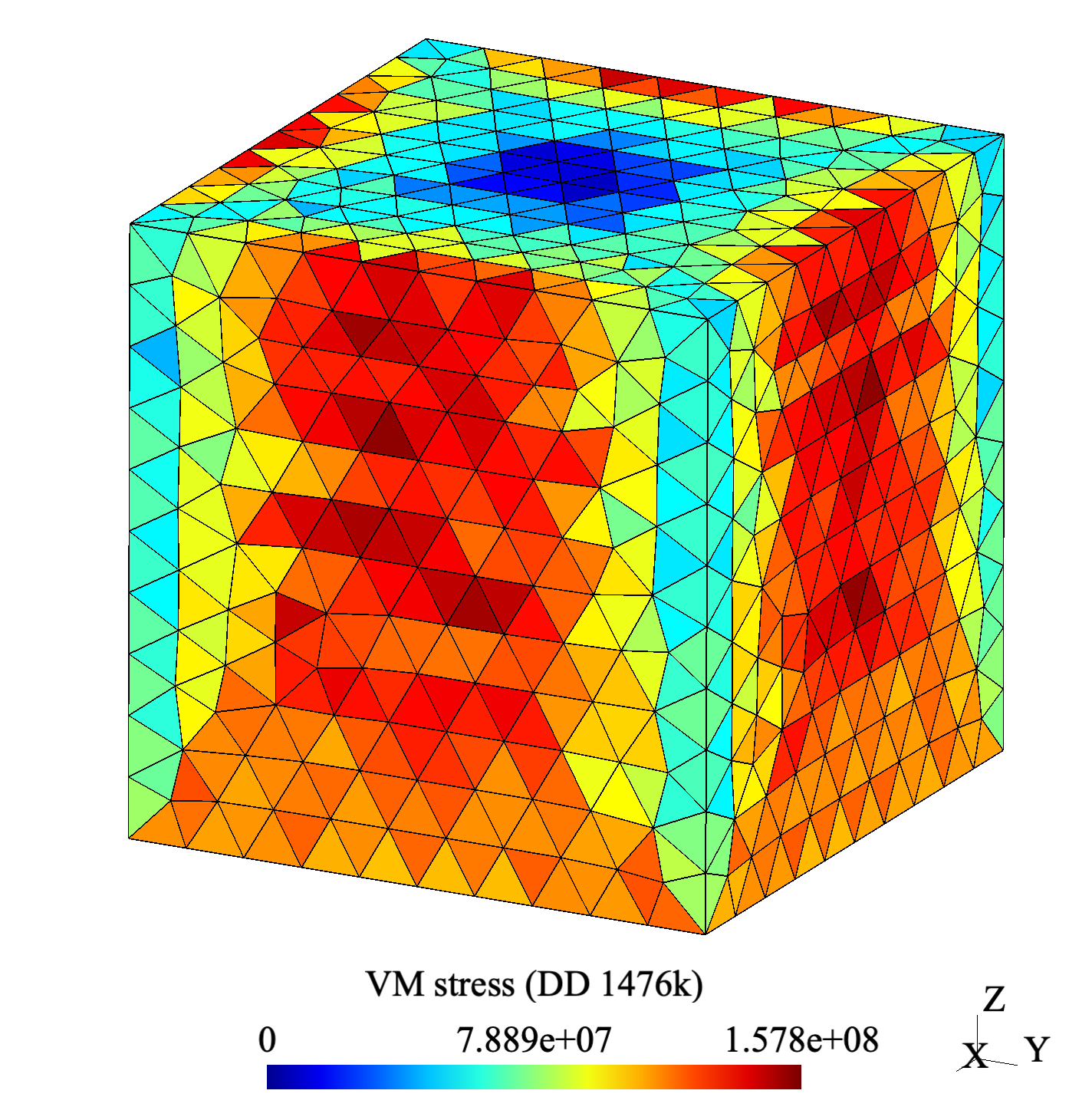}
    \caption{Torsion of a non-linear elastic cube, equivalent von-Mises strain. Clockwise: Data-Driven solutions for material data sets of sizes $75625$, $405769$ and $1476225$, and reference finite-element solution.}
    \label{fig:cube3d-VM}
\end{figure}

These data sets were used to solve the problem using a regularized Data-Driven formulation, cf.~Section~\ref{T5RuNG}. The smoothness afforded by the regularization allows to use non-linear solvers such as Newton-Raphson or iterative gradient descent algorithms. In calculations, we use a Krylov-GMRES algorithm from SciPy 1.10.1 \cite{SciPy-NMeth}), initially preconditioned by the tangent computed at zero strain. The solution is then obtained in 4 or 5 Krylov iterations. Note that, in order to obtain this performance, the regularization parameter $\beta$ needs to be carefully chosen. In the calculations, we specifically use values of $\beta$ ranging from $1.2\times 10^{-6}$ to $2.4\times 10^{-6}$, increasing with data density, in combination with a distance defined by an isotropic elasticity tensor computed from a Young modulus of 10$^{12}$ Pa and a Poisson ration of 0.3 (chose different from the linearized elastic response at zero strain to avoid any bias).

\begin{figure}
    \centering
    \includegraphics[width=0.475\linewidth]{./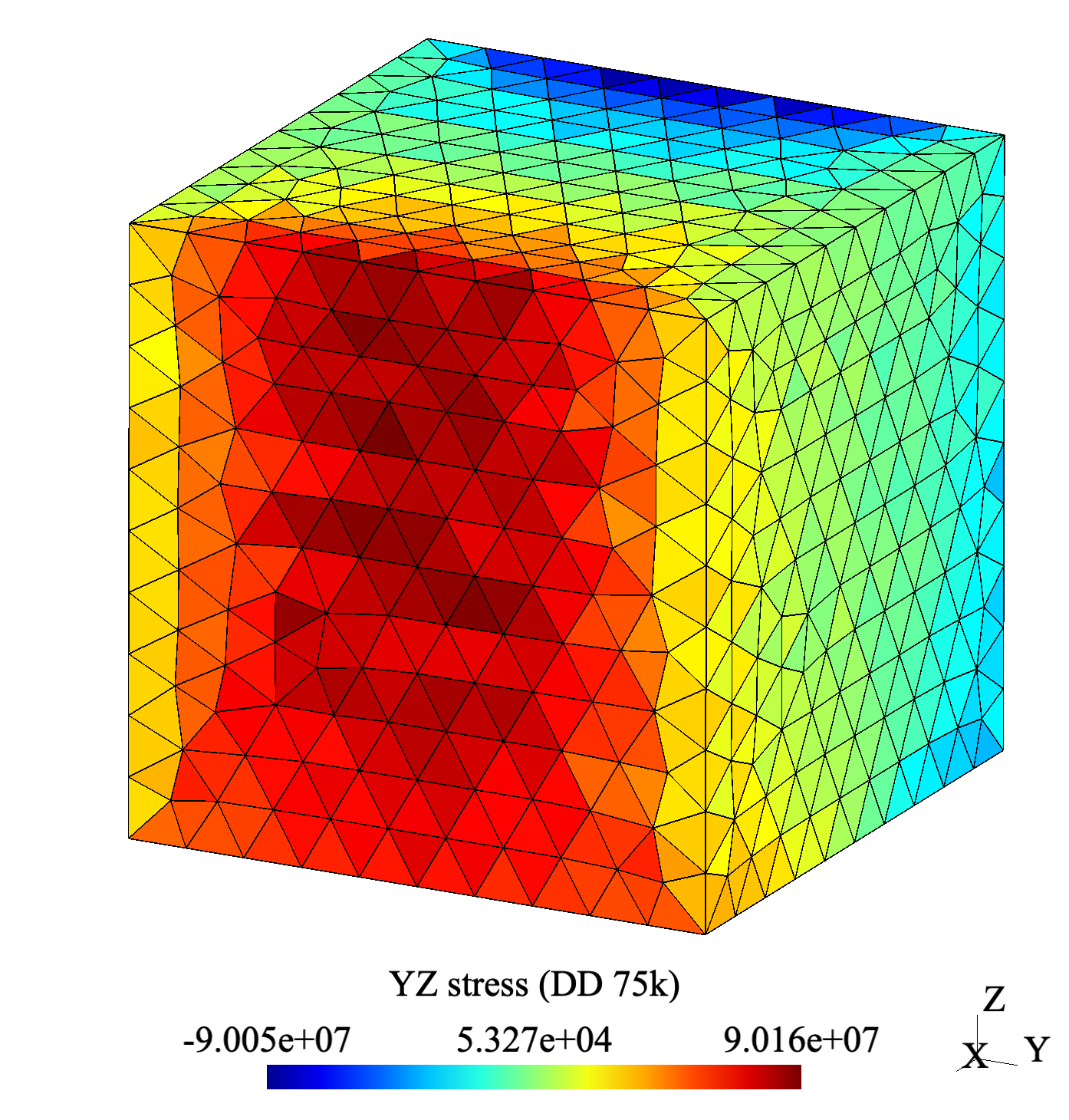}
    \includegraphics[width=0.475\linewidth]{./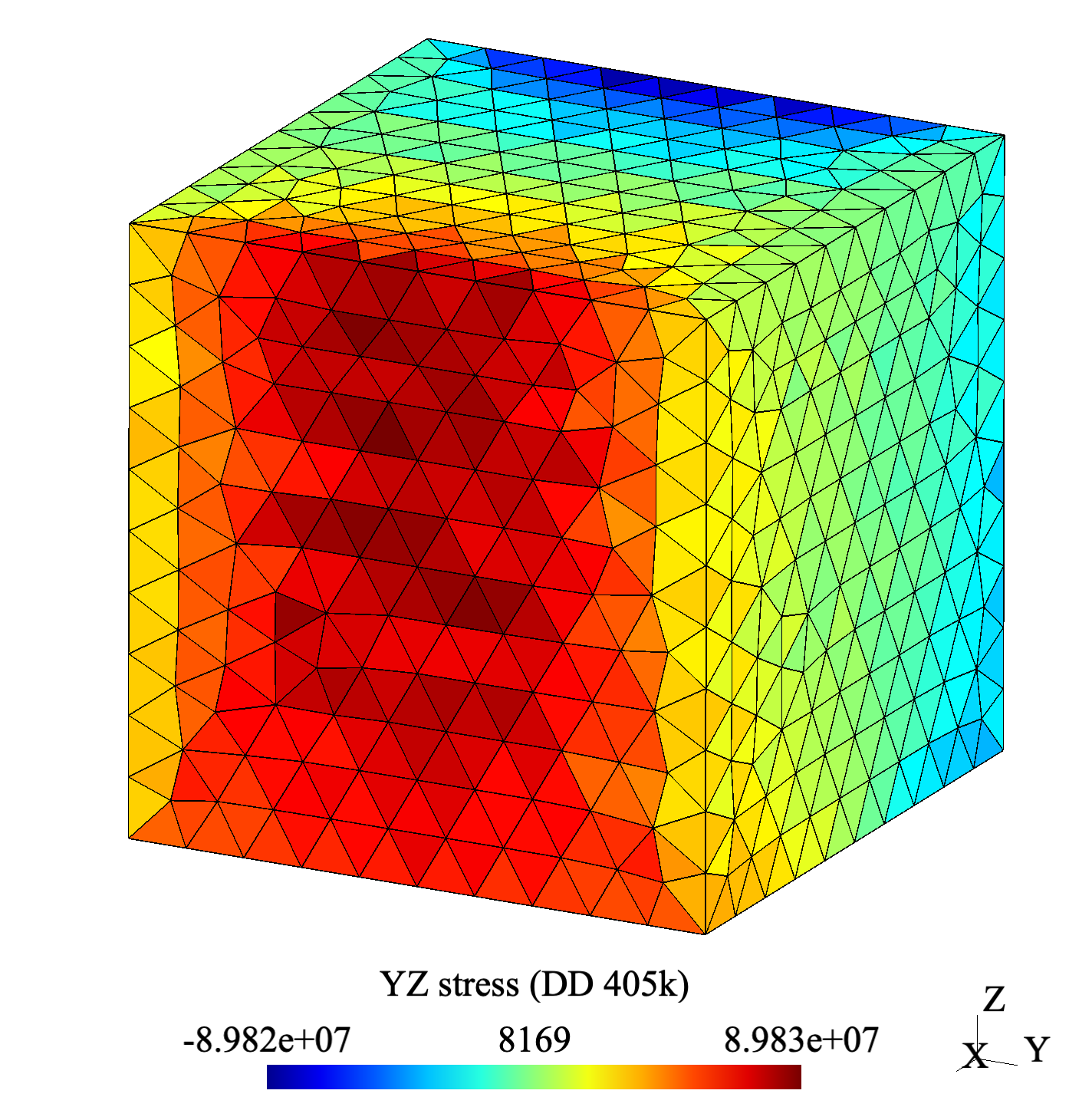}\\
    \includegraphics[width=0.475\linewidth]{./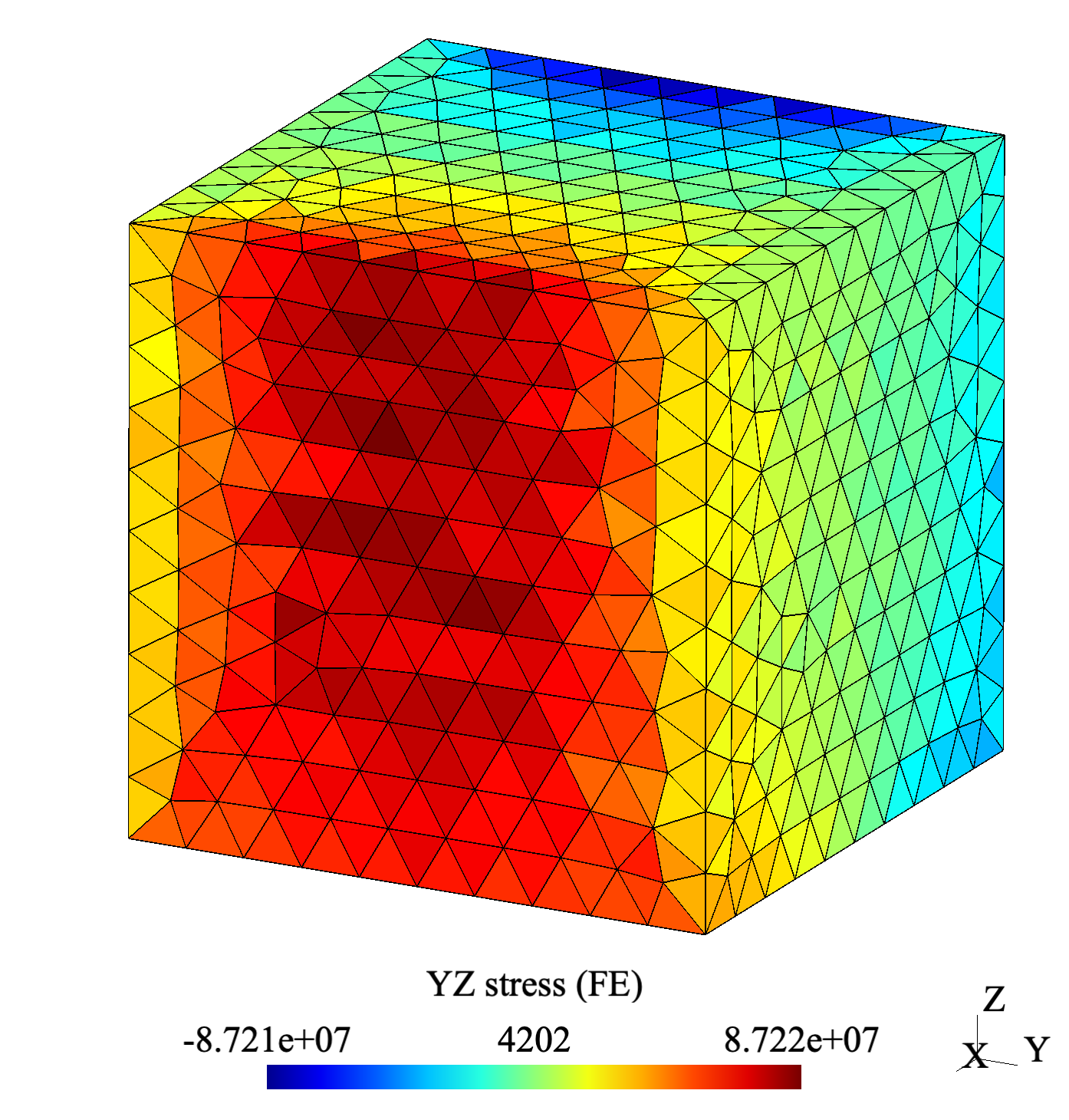}
    \includegraphics[width=0.475\linewidth]{./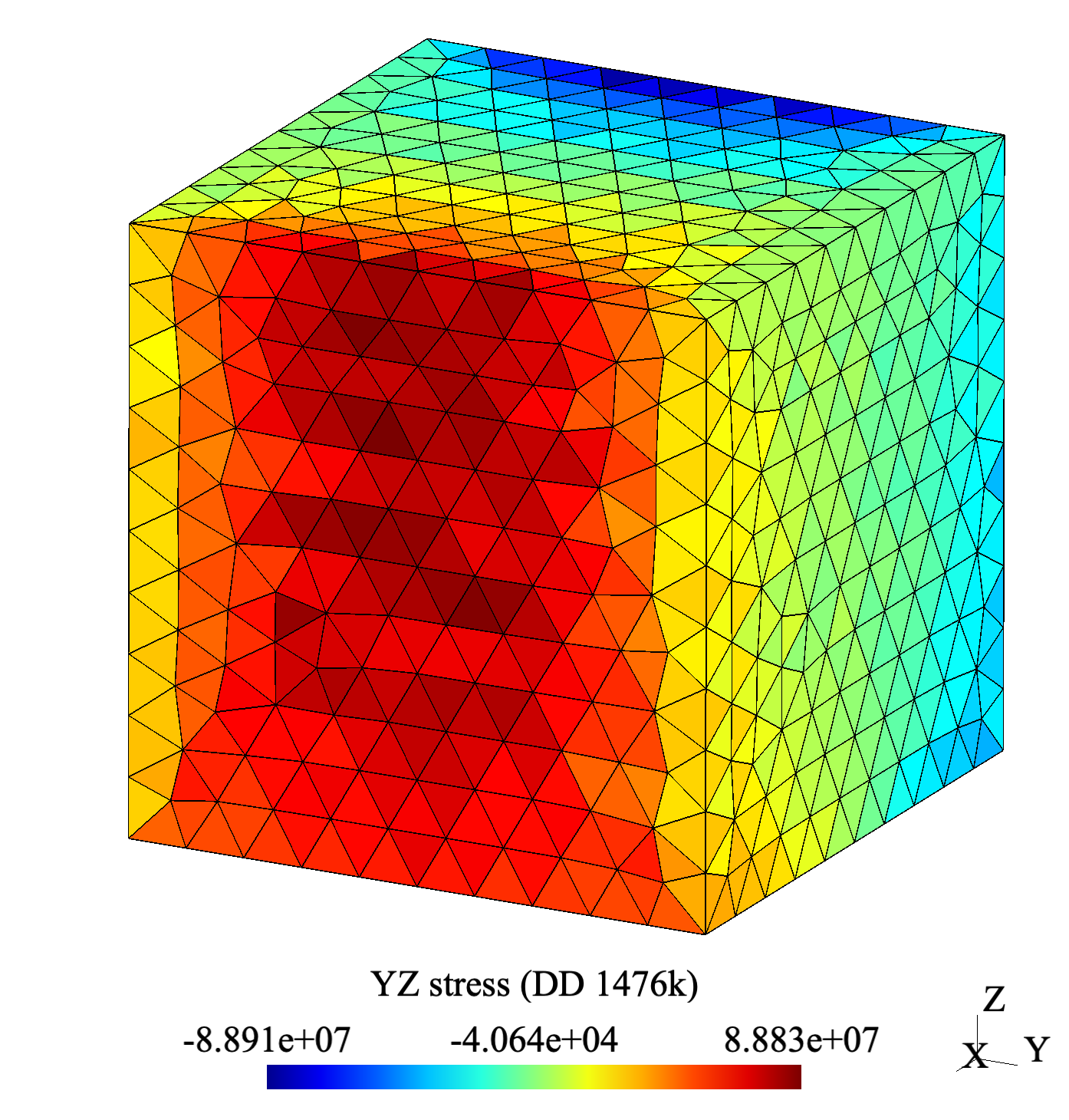}
    \caption{Torsion of a non-linear elastic cube, $\sigma_{yz}$ stress. Clockwise: Data-Driven solutions for material data sets of sizes $75625$, $405769$ and $1476225$, and reference finite-element solution.}
    \label{fig:cube3d-YZ}
\end{figure}

The resulting Data-Driven von Mises shear strain and $yz$ shear stresses for the three material data sets are shown in Figs.~\ref{fig:cube3d-VM} and \ref{fig:cube3d-YZ}, respectively. A standard non-linear finite element solution computed from the limiting material law (\ref{eq:NLelast}) is also shown in the figures for comparison. A general trends towards convergence towards the reference solution can be observed in the figures, with stress errors under 1\% for the largest data set.

\bigskip

\section{Summary and concluding remarks}
\label{I41hqu}

We have resorted to game-theoretical concepts to formulate Data-Driven methods for solid mechanics in which stress and strain are adversarial players and pursue different--and competing--objectives: the objective of the stress player is to minimize the discrepancy to a material data set that characterizes material behavior; the objective of the strain player is to ensure the admissibility of the mechanical state, in the sense of satisfying compatibility and equilibrium. The main properties of the proposed non-cooperative Data-Driven games are:

\begin{enumerate}
\item Unlike the cooperative Data-Driven games proposed in the past \cite{Kirchdoerfer:2016, Kirchdoerfer:2017, conti:2018, Prume:2023, ContiHoffmannOrtiz2023}, the new non-cooperative Data-Driven games identify an effective material law from the material data.
\item The data from which the effective material law is identified is {\sl fundamental}, i.~e., a set of observed material states in stress-strain, or {\sl phase}, space. In particular, the material data is not subordinate to---or parameterized in terms of---a particular, {\sl a priori} surmised, material model.
\item The effective material law follows directly from the material data, {\sl all} the material data, and {\sl nothing but} the material data, together with material-independent notions of distance and proximity in stress-strain space.
\item At no time during the calculations---or during the evaluation of the effective material law---is the data set replaced by an {\sl ad hoc} parameterized funcion, e.~g., a neural network, fitted to the data, which inevitably results in biasing and loss of information.
\item The proposed non-cooperative Data-Driven games reduce to conventional displacement boundary-value problems in which the material is characterized by the effective material law.
\item The displacement boundary-value problems set forth by the proposed non-cooperative Data-Driven games can be implemented within a standard finite-element framework in which the effective material law is evaluated at each material or quadrature point, e.~g., through a user-supplied material law.
\item For structurally stable systems, the convergence of the Data-Driven solutions with respect the material data can be analyzed and ascertained entirely at the local material point level in terms of the convergence of the effective material law.
\end{enumerate}

We have presented rigorous analyses that elucidate sufficient conditions for convergence of the Data-Driven solutions with respect to the material data in two scenarios: i) {\sl Uniformly convergent data}, in which the sampling error decreases as data is added to the material-data set in a uniform manner controlled by strict upper bounds, and ii) {\sl noisy data with outliers}, in which the data concentrates around the limiting material law in a weak or average sense that allows for the presence of outliers. These results supply conditions under which the proposed non-cooperative Data-Driven games and well-posed in the sense of convergence of the solutions with respect to the material data to a limiting material law.

We have also presented selected examples of implementation and application that demonstrate how the proposed non-cooperative Data-Driven games can be conveniently combined with the finite-element method for solving specific boundary-value problems. When the material data is sampled from an underlying 'true' material law, presumed unknown, we find that the Data-Driven solutions exhibit accuracy comparable to that of a reference finite-element solution formulated in terms of the underlying material law, even for relatively sparse and non-adaptive material sets.

In closing we note that, while not based on regression, e.~g., to neural networks, the proposed approach may be considered a form of unsupervised, set-oriented, {\sl machine learning}, in which the objective is to learn the structure of the material data set. A number of issues pertaining to this viewpoint arise immediately, among which are the following.

\underline{Data structures and fast searches}.
Evidently, for large material-data sets the speed of the search algorithms used to interrogate the material data sets while evaluating the effective material law is a computational bottleneck and requires careful attention in order to ensure adequate performance. For regularized Data-Driven problems, the rapid decay of exponential functions appearing in \eqref{xKSL3U} can be exploited to limit evaluations to nearest-neighbours. These local neighborhoods can be searched for in large data sets by means of efficient techniques such as kd-tree data structures or approximate nearest neighbours (ANN) algorithms. A review of recent developments and an assessment of a number of search algorithms in the context of Data-Drivem methods can be found in \cite{Eggersmann:2021}.

\underline{Data sampling and adaptive learning}.
A central issue in data science concerns the efficient sampling of the data. In general, the dimensionality stress-strain or phase space is too large to allow for unstructured uniform sampling of the material behavior and more finely-tuned techniques need to be adopted in which the sampling is adapted to particular problems. One such technique is {\sl active} or {\sl adaptive learning} \cite{Settles:2012}. In this setting, a coarse material data set covers the regions of phase space which are relevant to a specific problem of interest. In subsequent iterations, the material data set is augmented with additional data sampled in regions covered by the solution to achieve increased accuracy. In addition, non-informative data far from the phase-space set covered by the solution are excluded from the data set in order to speed up subsequent iterations. Further details of implementation can be found in \cite{Korzeniowski:2021a} and applications of adaptive learning to Data-Driven problems can be found in \cite{Korzeniowski:2021b, Bulin:2022, Gorgogianni:2023}.

\section*{Acknowledgements}

KW gratefully acknowledges support from the DFG through project WE2525/14-1 as part of the priority program SPP 2256 (no. 422730790). 
MO gratefully acknowledges additional support from the DFG through project 390685813--GZ 2047/1--HCM. LS gratefully acknowledges the financial support of the French Agence Nationale de la Recherche (ANR) through project ANR-19-CE46-0012 within the French-German Collaboration for Joint Projects in Natural, Life and Engineering (NLE) Sciences, as well as NExT ISite program of Nantes Universit\'e through International Research Project (IRP) iDDrEAM.

\appendix
\section{Approximation via point sets}
\label{9tpjYN}
\renewcommand{\thesection}{\Alph{section}}

In this appendix, we provide rigorous statements and proofs of the propositions enunciated heuristically in Section~\ref{BR7thb}.

\subsection{Uniform approximation}
\label{secuniformapprox}

We revisit the scenario considered in Section~\ref{dQi63n}.  In order to make it possible to work with finite sets $D_{e,h}$, we use a cutoff procedure with extrapolation to infinity. We first define a map $f_{e,h}:\R^d\to\R^d$, and then regularize it to obtain $\hat\sigma_{e,h}$. Fix a large number $R_h>0$,  which will diverge in the limit. Working with the distance-based discrepancy function, for bounded strains $\|\epsilon_e\|_e<R_h$ we consider the stress corresponding to the closest pair in $D_{e,h}$, in the sense that $s_{e,h}(\epsilon_e)=\eta_e$, with $\eta_e$ defined as in \eqref{kP26d1b}. For large strains $\|\epsilon_e\|_e\ge R_h$, we instead set
\begin{equation}\label{eqdeffehtail}
    s_{e,h}(\epsilon_e)=\C_e \epsilon_e
\end{equation}
where $\C_e\in \R^{d\times d}$ is an arbitrary positive definite matrix. The result does not depend on the choice of $\C_e$. We then select a small regularization distance $\delta_h>0$, and define
\begin{equation}\label{eqdefhatsigmaeh}
    \hat\sigma_{e,h}(\epsilon_e)= (\psi_{\delta_h}\ast s_{e,h})(\epsilon_e)
\end{equation}
where $\psi_\delta\in C^\infty_c(B_\delta;[0,\infty))$ is an even mollification kernel  on scale $\delta$.
This definition ensures continuity of $\hat\sigma_{e,h}$ (whereas the function $f_{e,h}$ is not continuous, even for small strains). Further, it only requires sampling the data set in a bounded region, and therefore a good approximation can be obtained with finite sets $D_{e,h}$.
We shall show that if $D_{e,h}$ is a good approximation of the graph of $\hat\sigma_e$ then the solutions converge.

We quantify the goodness of the approximation via three separate parameters. First, $t_h\to0$ gives the scale of the uniformity of the approximation, in the sense that no points contains an error larger than $t_h$. Second,
$R_h\to\infty$ gives the size of the region which is covered by the approximation. Third,
$\rho_h\to0$ quantifies the fineness of the approximation, in the sense that any strain smaller than $R_h$ has been tested, up to a precision $\rho_h$. Finally, $\delta_h\to0$ is the scale of the mollification in \eqref{eqdefhatsigmaeh}.

\begin{prop}[Uniform approximation]\label{propuniform}
Let $w_e > 0$ and let $\hat{\sigma}_e : \mathbb{R}^d \to \mathbb{R}^d$ be continuous functions
which obey material stability, in the sense of
\eqref{eqsigmastable}, and which are locally Lipschitz continuous, in the sense that for any $R>0$ there is $L_R>0$ such that
\begin{equation}\label{eqhatsigmaloclip}
    \|\hat\sigma_e(\epsilon)-\hat\sigma_e(\epsilon')\|_e
    \le
    L_R\|\epsilon-\epsilon'\|_e
    \text{ for all }
    \epsilon,\epsilon'\text{ with } \|\epsilon\|_e<R, \|\epsilon'\|_e<R.
\end{equation}
Let $B : \mathbb{R}^n \to \mathbb{R}^{m d}$ obey structural stability, in the sense of \eqref{eqBstructurstable}.

For every $h\in\N$, consider discrete set of points $D_{e,h}\subseteq Z_e$.
Assume that each $D_{e,h}$ is a locally uniform approximation, in the sense that there are sequences $\paramunifh\to0$, $\paramfineh\to0$, $\delta_h\to0$ and $R_h\to\infty$
such that
\begin{enumerate}
 \item\label{propuniformge} $D_{e,h}\subseteq (G_e)_{\paramunifh}$,
 meaning that for any $y_e\in D_{e,h}$ there is $z_e\in G_e$ with $\|z_e-y_e\|_e<\paramunifh$, where $G_e$ is the graph of $\hat\sigma_e$, defined as in \eqref{eqdefgraph};
 \item\label{propuniformeps} for every $\epsilon_e\in Z_e$ with $\|\epsilon_e\|_e\le 2R_h$ there is  $y_e\in D_{e,h}$
 with $\|(\epsilon_e,\hat\sigma_e(\epsilon_e))-y_e\|_e<\paramfineh$.
\item\label{propuniformlim}
$\lim_{h\to\infty} L_{2R_h}(\paramunifh+\paramfineh+\delta_h)=0$.
 \end{enumerate}
Let $\hat\sigma_{e,h}$ be defined as in \eqref{eqdefhatsigmaeh} above, and let $\hat\sigma_h=(\hat\sigma_{1,h},\dots,\hat\sigma_{N,h})$.

Then for sufficiently large $h$, for any
$f \in \mathbb{R}^n$ the problem
\begin{equation}\label{wpL8eAh}
    w B^T \hat{\sigma}_h(B u_h) = f
\end{equation}
has a solution $u_h$, and  up to subsequences the solutions $u_h$ converge to a solution $u$ of the continuous problem \eqref{wpL8eA}. If the $\hat\sigma_e$ are strictly monotone in the sense of~\eqref{eqsigmastrictmonotone}
then the solution $u$ of the limiting problem is unique and the entire sequence $u_h$ converges to $u$.
\end{prop}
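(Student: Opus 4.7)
The plan is to reduce the claim to Prop.~\ref{abAx6h} by verifying its three hypotheses for the smoothed cut-off stress functions $\hat\sigma_{e,h}$ of \eqref{eqdefhatsigmaeh}. Hypothesis (ii) of that proposition is the structural stability of $B$, which is assumed here. What has to be produced is (a) uniform material stability of $\hat\sigma_{e,h}$ with $h$-independent constants, and (b) uniform convergence $\hat\sigma_{e,h}\to\hat\sigma_e$ on every compact set. Once these are in place, Prop.~\ref{abAx6h} hands us existence and subsequential convergence for free, and only the uniqueness clause remains.

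The technically delicate step is (b). Fix a compact set $K\subseteq\R^d$ and take $h$ large enough that $K$ sits well inside the ball of radius $R_h$. First I would estimate $s_{e,h}-\hat\sigma_e$ on $K$ by a nearest-neighbour chase. Given $\epsilon_e\in K$, hypothesis~\ref{propuniformeps} yields a point $(\xi_e,\eta_e)\in D_{e,h}$ with $\|\xi_e-\epsilon_e\|_e+\|\eta_e-\hat\sigma_e(\epsilon_e)\|_e<C\paramfineh$. The actual minimizer $(\xi_e^\ast,\eta_e^\ast)$ entering the definition of $s_{e,h}(\epsilon_e)$ then satisfies $\|\xi_e^\ast-\epsilon_e\|_e\le\|\xi_e-\epsilon_e\|_e$, so $\xi_e^\ast$ still lies in $B_{2R_h}$. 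Hypothesis~\ref{propuniformge} supplies $\alpha_e$ with $\|\xi_e^\ast-\alpha_e\|_e+\|\eta_e^\ast-\hat\sigma_e(\alpha_e)\|_e<C\paramunifh$, and the local Lipschitz bound \eqref{eqhatsigmaloclip} on $B_{2R_h}$ converts all this into
\begin{equation}
\|s_{e,h}(\epsilon_e)-\hat\sigma_e(\epsilon_e)\|_e
\le C\bigl(\paramunifh+L_{2R_h}(\paramunifh+\paramfineh)\bigr),
\end{equation}
which tends to zero by hypothesis~\ref{propuniformlim}. The mollification error is then controlled by
$\|\hat\sigma_{e,h}(\epsilon_e)-s_{e,h}(\epsilon_e)\|_e
\le \sup_{\|\epsilon_e'-\epsilon_e\|_e<\delta_h}\|s_{e,h}(\epsilon_e')-s_{e,h}(\epsilon_e)\|_e$, which is bounded by applying the same nearest-neighbour estimate at $\epsilon_e$ and $\epsilon_e'$ and invoking Lipschitz continuity of $\hat\sigma_e$, so the total error is of order $\paramunifh+L_{2R_h}(\paramunifh+\paramfineh+\delta_h)\to 0$ uniformly on $K$.

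For the uniform material stability (a), I would split into the regions $\|\epsilon_e\|_e\le R_h$ and $\|\epsilon_e\|_e> R_h$. On the inner region, writing
$\hat\sigma_{e,h}(\epsilon_e)\cdot\epsilon_e = \hat\sigma_e(\epsilon_e)\cdot\epsilon_e + \bigl(\hat\sigma_{e,h}(\epsilon_e)-\hat\sigma_e(\epsilon_e)\bigr)\cdot\epsilon_e$, the first summand is bounded below by $a\|\epsilon_e\|_e^2-b$ via \eqref{eqsigmastable}, and the cross term is absorbed by Young's inequality using the uniform bound just obtained, giving a coercivity constant $a/2$ and a modified offset $b'$ independent of $h$ for $h$ large. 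On the outer region the definition \eqref{eqdeffehtail} makes $s_{e,h}$ linear and coercive, and convolution with the non-negative even kernel $\psi_{\delta_h}$ of small support preserves coercivity up to a controlled lower-order correction coming from the transition layer of width $\delta_h$ around $\|\epsilon_e\|_e=R_h$, where the two branches are glued.

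With (a) and (b) established, Prop.~\ref{abAx6h} yields a solution $u_h$ of \eqref{wpL8eAh} for all sufficiently large $h$ and the existence of a subsequence converging to a solution $u$ of \eqref{wpL8eA}. For uniqueness under \eqref{eqsigmastrictmonotone}, I would take two limit solutions $u,u'$, test $wB^T(\hat\sigma(Bu)-\hat\sigma(Bu'))=0$ against $u-u'$, and combine \eqref{eqsigmastrictmonotone} with \eqref{eqBstructurstable} exactly as in the proof of Prop.~\ref{8CJa9S}(iii) to conclude $u=u'$; since every subsequential limit then equals this unique $u$, the whole sequence converges. The main obstacle I expect is the careful bookkeeping in the nearest-neighbour estimate for $s_{e,h}$: it requires chaining hypotheses~\ref{propuniformge} and~\ref{propuniformeps} with the $R_h$-dependent Lipschitz constant $L_{2R_h}$, and verifying that the convolution step neither destroys the uniform coercivity near the cut-off radius nor inflates the approximation error beyond what condition~\ref{propuniformlim} absorbs.
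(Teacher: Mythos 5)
Your proposal is correct and follows essentially the same route as the paper's proof: verify uniform coercivity and locally uniform convergence of $\hat\sigma_{e,h}$ via the chained nearest-neighbour estimate (hypothesis \ref{propuniformeps} plus minimality to control $\|\xi_e^\ast-\epsilon_e\|_e$, hypothesis \ref{propuniformge} plus the local Lipschitz bound $L_{2R_h}$ to control the stress error), handle the cut-off region by the linear extension and the mollification by convexity of the averaging, and then invoke Prop.~\ref{abAx6h} together with the monotonicity argument of Prop.~\ref{8CJa9S} for uniqueness. The only cosmetic difference is that the paper states its key estimate \eqref{eqdistfehhatsig} directly for $s_{e,h}(\epsilon_e')$ against $\hat\sigma_e(\epsilon_e)$ for all $\epsilon_e'$ in the $\delta_h$-ball, which makes both the coercivity and the convolution-error steps immediate, whereas you estimate at $\epsilon_e$ and $\epsilon_e'$ separately and then glue; both versions close.
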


\begin{proof}
We first show that, for sufficiently large $h$,
the functions $\hat\sigma_{e,h}$ obey the material stability condition \eqref{eqsigmastable} uniformly. As they are continuous  by construction, and structural stability holds, existence of solutions $u_h$ for each $h$ will then follow from Proposition~\ref{8CJa9S}.

Fix $\epsilon_e$, and consider some $\epsilon_e'$ with
$\|\epsilon_e'-\epsilon_e\|_e<\delta_h$. We distinguish two cases. Assume first
$\|\epsilon_e'\|_e\ge R_h$, then
\begin{equation}\label{eqfehoutside}
\begin{split}
 s_{e,h}(\epsilon_e')\cdot \epsilon_e
 = &\C_e\epsilon_e'\cdot \epsilon_e
 \ge \C_e\epsilon_e\cdot \epsilon_e -\|\C_e\| \,\|\epsilon_e\|_e\,\|\epsilon_e'-\epsilon_e\|_e\\
 \ge&  \hat a\|\epsilon_e\|_e^2 -\|\C_e\| \,\|\epsilon_e\|_e\delta_h\ge
\frac12  \hat a\|\epsilon_e\|_e^2 -\frac12 \hat a
 \end{split}
\end{equation}
where in the last step we assumed that $h$ is sufficiently large that $\|\C_e\| \delta_h\le \hat a$ and used $2x\le x^2+1$.

Assume now that $\|\epsilon_e'\|_e<R_h$.
By definition of $s_{e,h}$, there is $\xi_e'$
such that $(\xi_e', s_{e,h}(\epsilon_e'))\in D_{e,h}$ and $\xi_e'$ is (one of the) points which minimizes the distance to $\epsilon_e'$.

By \ref{propuniformeps}  there is $(\xi_e,\eta_e) \in D_{e,h}$
such that $\|(\xi_e,\eta_e) -(\epsilon_e,\hat\sigma_e(\epsilon_e))\|_e<\paramfineh$,
so that by minimality
\begin{equation}
\|\xi_e'-\epsilon_e'\|_e\le \|\xi_e-\epsilon_e'\|_e
\le \|\xi_e-\epsilon_e\|_e+\|\epsilon_e-\epsilon_e'\|_e<\paramfineh+\delta_h.
\end{equation}
By \ref{propuniformge}, there is $\xi_e''$ such that
\begin{equation}\label{eqrhohas}
 \| (\xi_e'',\hat\sigma_e(\xi_e''))-(\xi_e', s_{e,h}(\epsilon_e'))\|_e
 <\paramunifh.
\end{equation}
Combining the two,
\begin{equation}
\|\xi_e''-\epsilon_e'\|_e\le
 \|\xi_e''-\xi_e'\|_e
 +\|\xi_e'-\epsilon_e'\|_e\le
 \paramunifh+\paramfineh+\delta_h
\end{equation}
and
\begin{equation}
 \|\xi_e''-\epsilon_e\|_e\le
 \|\xi_e''-\epsilon_e'\|_e
 +\|\epsilon_e'-\epsilon_e\|_e\le
 \paramunifh+\paramfineh+2\delta_h.
\end{equation}
In particular, $\|\xi_e''\|_e\le R_h+\paramunifh+\paramfineh+\delta_h\le 2R_h$,
and the same for $\epsilon_e$ (for $h$ sufficiently large).
Therefore, using \eqref{eqrhohas} and \eqref{eqhatsigmaloclip},
\begin{equation}\label{eqdistfehhatsig}
\begin{split}
\|s_{e,h}(\epsilon_e')-\hat\sigma_e(\epsilon_e)\|_e\le&
 \|s_{e,h}(\epsilon_e')-\hat\sigma_e(\xi_e'')\|_e
+ \|\hat\sigma_e(\xi_e'')-\hat\sigma_e(\epsilon_e)\|_e\\
\le&
\paramunifh
+ L_{2R_h} \|\xi_e''-\epsilon_e\|_e
\le
(1+L_{2R_h})(\paramunifh+\paramfineh+2\delta_h).
\end{split}
\end{equation}
In particular, this implies
\begin{equation}\begin{split}
s_{e,h}(\epsilon_e')
\cdot\epsilon_e\ge &
\hat\sigma_e(\epsilon_e)\cdot\epsilon_e
-(1+L_{2R_h})(\paramunifh +\paramfineh+2\delta_h) \|\epsilon_e\|_e\\
\ge & a \|\epsilon_e\|^2-b-\sqrt{ab} \|\epsilon_e\|_e\\
\ge &
\frac12 a \|\epsilon_e\|_e^2-2b.
\end{split}
\end{equation}
where in the second step we used
\eqref{eqsigmastable}, and
assumed that $h$ is sufficiently large that
$(1+L_{2R_h})(\paramunifh +\paramfineh+2\delta_h)\le \sqrt{ab}$ (this is possible by assumption \ref{propuniformlim}).
Recalling \eqref{eqfehoutside}, the same estimate holds for $\|\epsilon_e'\|_e\ge R_h$, possibly with different values of the constants $a$ and $b$. Therefore it holds for all $\epsilon_e'$ with $\|\epsilon_e-\epsilon_e'\|_e<\delta_h$, and by convexity of the mollification we conclude
\begin{equation}
 \hat\sigma_{e,h}(\epsilon_e)\cdot\epsilon_e \ge a'\|\epsilon_e\|_e^2-b'
\end{equation}
for all $\epsilon_e$, with $a':=\min\{\frac12 a,\frac12\hat a\}$ and
$b':=\max\{2b,\frac12\hat a\}$.
 This proves material stability and hence existence of solutions for each fixed $h$.

Finally, from \eqref{eqdistfehhatsig} we obtain locally uniform convergence.
In particular, fix $M>0$. For $h$ sufficiently large we have $M+\delta_h<R_h$. Therefore
\eqref{eqdistfehhatsig} holds for all pairs
$\epsilon_e$, $\epsilon_e'$ with
$\|\epsilon_e\|_e<M$ and
$\|\epsilon_e-\epsilon_e'\|_e<\delta_h$. Again, by convexity
\begin{equation}
\begin{split}
 \|\hat\sigma_{e,h}(\epsilon_e)-\hat\sigma_e(\epsilon_e)\|_e
\le
(1+L_{2R_h})(\paramunifh+\paramfineh+2\delta_h)
\end{split}
\end{equation}
for all
$\epsilon_e$  with
$\|\epsilon_e\|_e<M$.
By \ref{propuniformlim}, we obtain uniform convergence. The rest follows from Proposition~\ref{abAx6h}.
\end{proof}

\subsection{Noisy data with outliers}
\label{secappappoutl}

We revisit the scenario considered in Section~\ref{cbb7Jh}.  As in the preceding section, in order to make it possible to work with finite sets $D_{e,h}$, we use a cutoff at infinity. Precisely, we fix a function $\psi\in C^1_c(B_1;[0,1])$ with $\psi=1$ on $B_{1/2}$ and set
\begin{equation}\label{eqdefhatsigmahweigts}
 \hat\sigma_{e,h}(\epsilon_e)
 := \psi\left(\frac1{R_h}\|\epsilon_e\|_e\right)
 \sigma^*_{e,h}(\epsilon_e)
 + \left(1-\psi\left(\frac1{R_h}\|\epsilon_e\|_e\right)\right)
 \C_e \epsilon_e
\end{equation}
where $\C_e$ is a fixed positive definite matrix as in \eqref{eqdeffehtail}
and
\begin{equation}\label{eqdefsigmaehapp}
 \sigma_{e,h}^*(\epsilon_e)
    =
    \sum_{i=1}^{N_e}
    p_{e,h,i}^*(\epsilon_e;\beta_h) \sigma_{e,h,i} .
\end{equation}
Also in this case, we shall show that the limit does not depend on the choice of $\C_e$. The proof uses some ideas from \cite{ContiHoffmannOrtiz2023, ContiHoffmannOrtizb}.

\begin{prop}[Discrete approximation with outliers]\label{propoutliers}
Let $w_e > 0$ and let $\hat{\sigma}_e : \mathbb{R}^d \to \mathbb{R}^d$ be continuous functions which obey material stability, in the sense of \eqref{eqsigmastable}, and which are locally Lipschitz, in the sense that for any $R>0$ there is $L_R>0$ such that \eqref{eqhatsigmaloclip} holds. Let $B : \mathbb{R}^n \to \mathbb{R}^{m d}$ obey structural stability, in the sense of \eqref{eqBstructurstable}.

For every $h\in\N$, consider a finite set of data points $D_{e,h}\subseteq Z_e$.
Assume that $D_{e,h}$ is a good approximation with few outliers, in the sense that there are sequences $\paramunifh$, $\paramfineh$, $\delta_h\to0$ and $C_h$, $N_h$, $M_h$, $R_h$, $R_h^*\to\infty$
such that
\begin{enumerate}
 \item\label{propoutliersout}
 For every $\epsilon_e$ with $\|\epsilon_e\|_e<R_h$ one has
 \begin{equation}
  \# \{ (\xi_e,\eta_e)\in D_{e,h}:
\xi_e\in  \epsilon_e + [0,\paramfineh )^d, \|\eta_e-\hat\sigma_e(\epsilon_e)\|_e>\paramunifh \} < M_h;
 \end{equation}
 \item\label{propoutliersunif}
 For every $\epsilon_e$  one has
 \begin{equation}
  \# \{ (\xi_e,\eta_e)\in D_{e,h}:
\xi_e\in  \epsilon_e + [0,\paramfineh )^d\} < C_h;
 \end{equation}
 \item\label{propoutliersbound} For any $y_e\in D_{e,h}$ one has
 $\|y_e\|_e\le R_h^*$;
 \item\label{propoutliersdenst}
  For every $\epsilon_e$ with $\|\epsilon_e\|_e<2R_h$ one has
 \begin{equation}
  \# \{ (\xi_e,\eta_e)\in D_{e,h}: \xi_e
  \in \epsilon_e + [0,\paramfineh )^d \} \ge  N_h;
 \end{equation}
\item\label{propoutlierslimits}
$\lim_{h\to\infty} \beta_h\paramfineh^2=0$,
$\lim_{h\to\infty} L_{R_h+R_h^*}C_h/(\beta_h^{1/2}N_h)=0$, and
$\lim_{h\to\infty} (2+L_{R_h+R_h^*})R_h^*M_h/N_h=0$.
\end{enumerate}
Let $\hat\sigma_{e,h}$ be defined as in \eqref{eqdefhatsigmahweigts} above, and let $\hat\sigma_h=(\hat\sigma_{1,h},\dots,\hat\sigma_{N,h})$.

Then for sufficiently large $h$, for any
$f \in \mathbb{R}^n$ the problem
\begin{equation}\label{wpL8eAh2}
    w B^T \hat{\sigma}_h(B u_h) = f
\end{equation}
has a solution $u_h$, and  up to subsequences the solutions $u_h$ converge to a solution $u$ of the continuous problem \eqref{wpL8eA}. If the functions $\hat\sigma_e$ are strictly monotone in the sense of~\eqref{eqsigmastrictmonotone}
then the solution $u$ of the limiting problem is unique and the entire sequence $u_h$ converges to $u$.
\end{prop}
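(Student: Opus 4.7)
My plan is to reduce Proposition~\ref{propoutliers} to the abstract approximation result, Proposition~\ref{abAx6h}. Continuity of $\hat\sigma_{e,h}$ follows immediately from the finiteness of $D_{e,h}$ and the smoothness of both the Gibbs weights and the cut-off $\psi$ appearing in \eqref{eqdefhatsigmahweigts}. What remains is to verify (a) that $\hat\sigma_{e,h}$ satisfies the material-stability estimate \eqref{eqsigmastable} uniformly in $h$ and (b) that $\hat\sigma_{e,h}\to\hat\sigma_e$ locally uniformly on $\R^d$. Once (a) and (b) are in hand, existence of $u_h$ and subsequential convergence to a solution of \eqref{wpL8eA} follow from Proposition~\ref{abAx6h}; uniqueness and convergence of the full sequence under strict monotonicity then follow from Proposition~\ref{8CJa9S}.

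\textbf{The core estimate.} The heart of the argument is a quantitative bound on $\|\sigma^*_{e,h}(\epsilon_e)-\hat\sigma_e(\epsilon_e)\|_e$ for $\|\epsilon_e\|_e<R_h/2$. Writing $(\xi_{e,h,i},\eta_{e,h,i})$ for the data points and $p^*_{e,h,i}$ for the Gibbs weights, I would decompose
\begin{equation*}
\sigma^*_{e,h}(\epsilon_e)-\hat\sigma_e(\epsilon_e)
=\sum_i p^*_{e,h,i}\bigl(\eta_{e,h,i}-\hat\sigma_e(\xi_{e,h,i})\bigr)
+\sum_i p^*_{e,h,i}\bigl(\hat\sigma_e(\xi_{e,h,i})-\hat\sigma_e(\epsilon_e)\bigr),
\end{equation*}
and bound the two pieces separately. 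The backbone of both bounds is a two-sided control of the partition function $Z_{e,h}$: from below, hypothesis (iv) and optimizing the sampling radius at $r\sim\beta_h^{-1/2}$ (which lies inside $R_h$ for $h$ large) yield $Z_{e,h}(\epsilon_e;\beta_h)\gtrsim N_h(\paramfineh\beta_h^{1/2})^{-d}$; from above, partitioning $\R^d$ into cubes of side $\paramfineh$ and using hypothesis (ii) together with a Gaussian-integral comparison gives $Z_{e,h}\lesssim C_h(\paramfineh\beta_h^{1/2})^{-d}$.

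\textbf{Lipschitz piece and outliers.} For the second summand, the local Lipschitz estimate \eqref{eqhatsigmaloclip} bounds the integrand by $L_{R_h+R_h^*}\|\xi_{e,h,i}-\epsilon_e\|_e$; the analogous tail-moment estimate for the discrete Gibbs distribution yields $\sum_i p^*_{e,h,i}\|\xi_{e,h,i}-\epsilon_e\|_e\lesssim (C_h/N_h)\beta_h^{-1/2}$, and multiplication by $L_{R_h+R_h^*}$ vanishes by the final hypothesis. For the first summand, I would split the index set into outliers and non-outliers in the sense of hypothesis (i). Non-outliers contribute at most $\paramunifh$, because $\|\eta_{e,h,i}-\hat\sigma_e(\xi_{e,h,i})\|_e\le\paramunifh$ and the weights sum to one. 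Outliers contribute at most $2R_h^*$ times their total Gibbs weight; applying the $M_h$-per-cube bound from hypothesis (i) together with the same two-sided partition-function control, this total weight is $\lesssim M_h/N_h$, so the full outlier contribution is $\lesssim R_h^*M_h/N_h$, which again vanishes by the final hypothesis.

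\textbf{Material stability and closing.} For uniform material stability, I would treat three regions separately. On $\|\epsilon_e\|_e\ge R_h$, the definition \eqref{eqdefhatsigmahweigts} reduces to $\hat\sigma_{e,h}=\C_e\epsilon_e$, which is trivially coercive. On $\|\epsilon_e\|_e<R_h/2$, the convergence estimate above gives $\hat\sigma_{e,h}(\epsilon_e)\cdot\epsilon_e\ge\hat\sigma_e(\epsilon_e)\cdot\epsilon_e-\varepsilon_h\|\epsilon_e\|_e$ with $\varepsilon_h\to0$; Young's inequality combined with the stability \eqref{eqsigmastable} of $\hat\sigma_e$ then furnishes a uniform bound of the form $a'\|\epsilon_e\|_e^2-b'$. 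On the transition annulus $R_h/2\le\|\epsilon_e\|_e<R_h$, the convex combination of two coercive fields in \eqref{eqdefhatsigmahweigts} remains coercive. The main obstacle I anticipate is the careful tracking of the partition-function bounds in the regime $\beta_h\paramfineh^2\to0$, where the Gibbs kernel is much wider than a single cube and many data points contribute comparably; in particular, ensuring that the constants in the two-sided bounds match up cleanly, so that the outlier fraction $M_h/N_h$ genuinely controls the residual error, is where most of the technical work lies.
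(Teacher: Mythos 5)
Your proposal follows essentially the same route as the paper's proof: the same decomposition of $\sigma^*_{e,h}-\hat\sigma_e$ into an outlier term and a Lipschitz term, the same two-sided partition-function bounds obtained by tiling strain space with cubes of side $\paramfineh$ and comparing with Gaussian integrals, the same $O(R_h^*M_h/N_h)$ and $O(L\,C_h\beta_h^{-1/2}/N_h)$ controls matched to hypothesis (v), and the same closing appeal to Propositions~\ref{8CJa9S} and~\ref{abAx6h}. The only slight imprecision is the claimed per-outlier bound of $2R_h^*$, which should be $(2+L_{R_h+R_h^*})R_h^*$ since only the data point, not the graph value $\hat\sigma_e(\xi_e)$, is bounded by $R_h^*$ a priori; hypothesis (v) already carries this factor, so the conclusion is unaffected.
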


\newcommand\I{\mathrm{I}}
\newcommand\II{\mathrm{II}}
\newcommand\III{\mathrm{III}}
\begin{proof}
Fix a possible strain $\epsilon_e\in\R^d$.
Assume first that $\|\epsilon_e\|_e\le R_h$.
From \eqref{eqdefsigmaehapp},
using $\sum_i p_{e,h,i}^*(\epsilon_e;\beta_h)=1$,
\begin{equation}\begin{split}
 \left\|\sigma_{e,h}^*(\epsilon_e)
 -\hat\sigma_e(\epsilon_e)\right\|_e
    =&\left\|
    \sum_{i=1}^{N_{e,h}}
    p_{e,h,i}^*(\epsilon_e;\beta_h) (\sigma_{e,h,i}-\hat\sigma_e(\epsilon_e)) \right\|_e\\
    \le &
    \sum_{i=1}^{N_{e,h}}
    p_{e,h,i}^*(\epsilon_e;\beta_h) \left\|\sigma_{e,h,i}-\hat\sigma_e(\epsilon_e)\right\|_e.
    \end{split}
\end{equation}
With another triangular inequality, \eqref{eqhatsigmaloclip}, and \ref{propoutliersbound}, writing $L_h:=L_{R_h+R_h^*}$ for brevity,
\begin{equation}\begin{split}
\|\sigma_{e,h,i}-\hat\sigma_e(\epsilon_e)\|_e
\le&
\|\sigma_{e,h,i}-\hat\sigma_e(\epsilon_{e,h,i})\|_e+
\|\hat\sigma_e(\epsilon_{e,h,i})-\hat\sigma_e(\epsilon_e)\|_e\\
\le&
\|\sigma_{e,h,i}-\hat\sigma_e(\epsilon_{e,h,i})\|_e+
L_h\|\epsilon_{e,h,i}-\epsilon_e\|_e.
\end{split}
\end{equation}
We separate the two terms, and for the first one treat separately the values of $i$ for which it is larger than $\paramunifh$. We obtain, recalling that
$\sum_i p_{e,h,i}^*(\epsilon_e;\beta_h)=1$,
\begin{equation}\label{sigmanimsigma}\begin{split}
\| \sigma_{e,h}^*(\epsilon_e)
 -\hat\sigma_e(\epsilon_e)\|_e\le
&\paramunifh+\I+\II,
    \end{split}
\end{equation}
where
\begin{equation}
\label{eqdefIuII}
\begin{split}
\I:= &
     \sum_{i: \|
\sigma_{e,h,i}-
\hat\sigma_e(\epsilon_{e,h,i})\|_e> \paramunifh}
    p_{e,h,i}^*(\epsilon_e;\beta_h)
\|\sigma_{e,h,i}-\hat\sigma_e(\epsilon_{e,h,i})\|_e,\\
\II:=
&      L_h
    \sum_{i=1}^{N_{e,h}}
    p_{e,h,i}^*(\epsilon_e;\beta_h)
       \|\epsilon_{e,h,i}-\epsilon_e\|_e.
    \end{split}
\end{equation}
At this point we localize.
Let $P:=\paramfineh \Z^d$ and
$Q_p:=p+[0,\paramfineh)^d$. Obviously the cubes $\{Q_p\}_{p\in P}$ form a disjoint cover of $\R^d$.
For any $p\in P$, any $\epsilon_e$,
and any $q$, $q'\in Q_p$
we have
\begin{equation}
| \|q-\epsilon_e\|_e-
\|q'-\epsilon_e\|_e|\le \|q'-q\|_e\le
\mathrm{diam} (Q_p)=
\paramfineh k_e,
\end{equation}
where $k_e:=\mathrm{diam}([0,1]^d)$ is a constant that depends only on $d$ and on the norm $\|\cdot\|_e$. Therefore
\begin{equation}
 \frac12 \|q-\epsilon_e\|_e^2
-k_e^2\paramfineh^2\le
\|q'-\epsilon_e\|_e^2\le
2\|q-\epsilon_e\|_e^2+2k_e^2\paramfineh^2.
\end{equation}
Averaging over $Q_p$, we obtain that  for any $q'\in Q_p$ and any $\beta>0$
\begin{equation}\label{eqdisccont1}
\frac{{\rm e}^{-2k_e^2\beta \paramfineh^2 }}{|Q_p|}
\int_{Q_p}
{\rm e}^{- 2\beta \|q-\epsilon_e\|_e^2} dq
\le
{\rm e}^{-\beta \|q'-\epsilon_e\|_e^2}
\le
\frac{{\rm e}^{k_e^2\beta  \paramfineh^2 }}{|Q_p|}
\int_{Q_p}
{\rm e}^{-\frac12\beta \|q-\epsilon_e\|_e^2} dq.
\end{equation}
Therefore, using the definition \eqref{eqdefzeh},
then \eqref{eqdisccont1} and
\ref{propoutliersunif}
give
\begin{equation}\label{eqzetaupper}
\begin{split}
 Z_{e,h}(\epsilon_e;\beta)&=
 \sum_{p\in P}
  \sum_{i: \epsilon_{e,h,i}\in Q_p}
{\rm e}^{- \beta \|\epsilon_{e,h,i}-\epsilon_e\|_e^2}\\
&\le  \frac{{\rm e}^{ k_e^2 \beta \paramfineh^2}}{|Q_p|}
\sum_{p\in P}
\sum_{i: \epsilon_{e,h,i}\in Q_p}
\int_{Q_p}
{\rm e}^{- \frac12\beta \|q-\epsilon_e\|_e^2}dq
\\
&\le
\frac{C_h{\rm e}^{  k_e^2\beta \paramfineh^2 }}{|Q_p|}
\sum_{p\in P}
\int_{Q_p}
{\rm e}^{- \frac12\beta \|q-\epsilon_e\|_e^2}dq
\\
&=
\frac{C_hc_e}{|Q_p|}
{\rm e}^{  k_e^2\beta \paramfineh^2}
(\beta/2)^{-d/2},
\end{split}\end{equation}
where in the last step we used that by scaling there is a constant $c_e>0$ such that
\begin{equation}\label{eqconstgauss}
 \int_{\R^d} {\rm e}^{-\beta\|q\|_e^2} dq = c_e \beta^{-d/2},
\end{equation}
for any $\beta>0$.

A similar computation using \ref{propoutliersdenst} instead of \ref{propoutliersunif}  leads to
\begin{equation}\label{eqzetalower}
\begin{split}
Z_{e,h}(\epsilon_e;\beta_h)=&
 \sum_{p\in P}
  \sum_{i: \epsilon_{e,h,i}\in Q_p}
{\rm e}^{- \beta_h \|\epsilon_{e,h,i}-\epsilon_e\|_e^2}\\
\ge&
\frac{N_h{\rm e}^{-2 d\beta_h \paramfineh^2}}{|Q_p|}
 \sum_{p\in P, |p|< 2R_h-\paramfineh k_e}
  \int_{Q_p} {\rm e}^{- 2\beta_h \|q-\epsilon_e\|_e^2} dq
\\
\ge&
\frac{c_e' N_h{\rm e}^{-2 k_e^2\beta_h \paramfineh^2}}{|Q_p|}(2\beta_h)^{-d/2},
\end{split}\end{equation}
where in the last step we used that
$\|\epsilon_e\|\le R_h$ and
that there is $c_e'>0$ such that
\begin{equation}
 \int_{B_R} {\rm e}^{-\beta\|q\|_e^2} dq \ge c_e'
\beta^{-d/2}
\end{equation}
whenever $R\ge1$ and $\beta\ge1$.

We next estimate the two terms introduced in \eqref{eqdefIuII}.
We write
\begin{equation}
 \II= L_h \frac{1}{Z_{e,h}(\epsilon_e;\beta_h)} \sum_{i=1}^{N_{e,h}}
{\rm e}^{- \beta_h \|\epsilon_{e,h,i}-\epsilon_e\|_e^2} \|\epsilon_{e,h,i}-\epsilon_e\|_e.
\end{equation}
We use that $t{\rm e}^{-t^2}\le {\rm e}^{-t^2/2}$ for all $t\ge 0$ to obtain
\begin{equation}\label{eqestII}
\begin{split}
 \II\le& L_h \beta_h^{-1/2}\frac{1}{Z_{e,h}(\epsilon_e;\beta_h)}
 \sum_{i=1}^{N_{e,h}}
{\rm e}^{- \frac12 \beta_h \|\epsilon_{e,h,i}-\epsilon_e\|^2_e}\\
=&  L_h \beta_h^{-1/2}\frac{Z_{e,h}(\epsilon_e;\frac12\beta_h)}{Z_{e,h}(\epsilon_e;\beta_h)}\\
\le &   L_h \beta_h^{-1/2}\frac{C_hc_e 4^d {\rm e}^{3k_e^2 \beta_h\paramfineh^2} }{N_hc_e'},
\end{split}\end{equation}
where in the last step we estimated the partition function with \eqref{eqzetalower} and \eqref{eqzetaupper}.
We remark that the last bound depends only on $h$ and is uniform in $\epsilon_e$.

For the term $\I$ we argue similarly. We use the uniform bound on the stresses in \eqref{propoutliersbound}, the Lipschitz condition,
and choose $h$ sufficiently large that
$\|\hat\sigma_e(0)\|_e\le R_h^*$ to write
\begin{equation}\begin{split}
 \|\sigma_{e,h,i}-\hat\sigma_e(\epsilon_{e,h,i})\|_e
 \le&
 \|\sigma_{e,h,i}\|_e+\|\hat\sigma_e(\epsilon_{e,h,i})-\hat\sigma_e(0)\|_e+\|\hat\sigma_e(0)\|_e\\
 \le& R_h^*+L_hR_h^*+\|\hat\sigma_e(0)\|_e
 \le (2+L_h)R_h^*.
 \end{split}
 \end{equation}
With the bound on the number of outliers in \ref{propoutliersout} and \eqref{eqdisccont1}
 we obtain, with a computation similar to \eqref{eqestII},
\begin{equation}\begin{split}
 \I\le &(2+L_h)R_h^*\frac{1}{Z_{e,h}(\epsilon_e;\beta_h)}\sum_{p\in P}
  \sum_{i: \epsilon_{e,h,i}\in Q_p,
  \atop  {\|\sigma_{e,h,i}-\hat\sigma_e(\epsilon_{e,h,i})
  \|_e> \paramunifh}
}
{\rm e}^{- \beta_h \|\epsilon_{e,h,i}-\epsilon_e\|_e^2}\\
\le&  \frac{(2+L_h)R_h^*M_h{\rm e}^{k_e^2\beta_h\paramfineh^2}}{|Q_p|} \frac{1}{Z_{e,h}(\epsilon_e;\beta_h)}
\sum_{p\in P}
\int_{Q_p}
{\rm e}^{- \frac12 \beta_h \|q-\epsilon_e\|_e^2}dq.
\end{split}\end{equation}
With \eqref{eqconstgauss} and
\eqref{eqzetalower} we conclude
\begin{equation}\label{eqestI}\begin{split}
 \I
\le& (2+L_h)R_h^*
\frac{c_eM_h 2^{d}{\rm e}^{3k_e^2\beta_h\paramfineh^2} }{c_e'N_h}.
\end{split}\end{equation}
Collecting terms,
\eqref{sigmanimsigma}, \eqref{eqestII} and
\eqref{eqestI} give
\begin{equation}
\| \sigma_{e,h}^*(\epsilon_e)
 -\hat\sigma_e(\epsilon_e)\|_e\le
\paramunifh+c_e''
(1+L_h)R_h^*
\frac{ M_h {\rm e}^{3k_e^2\beta_h\paramfineh^2} }{N_h}+c_e''' \frac{L_h C_h {\rm e}^{3k_e^2\beta_h\paramfineh^2} }{\beta_h ^{1/2}N_h},
\end{equation}
with $c_e''$ and $c_e'''$ depending only on the dimension $d$ and the norm $\|\cdot\|_e$.
Recalling the assumptions in \ref{propoutlierslimits},
\begin{equation}
\omega_h:=\sup_{\epsilon_e: \|\epsilon_e\|_e<R_h} \| \sigma_{e,h}^*(\epsilon_e)
 -\hat\sigma_e(\epsilon_e)\|_e
\end{equation}
tends to zero as $h\to\infty$.
In particular, from \eqref{eqsigmastable},
\begin{equation}
 \sigma_{e,h}^*(\epsilon_e)\cdot\epsilon_e
 \ge
 \hat\sigma_{e}(\epsilon_e)\cdot\epsilon_e
 -\omega_h\|\epsilon_e\|_e
 \ge a\|\epsilon_e\|_e^2-b-\omega_h\|\epsilon_e\|_e
 \ge \frac12 a\|\epsilon_e\|_e^2-2b
\end{equation}
provided $h$ is sufficiently large to have $\omega_h\le\sqrt{ab}$.
As $\C_e$ was chosen positive definite, $\C_e\epsilon_e\cdot\epsilon_e\ge \hat a\|\epsilon_e\|^2$, and using  \eqref{eqdefhatsigmahweigts}
and $\psi\in[0,1]$ we conclude that
$\hat\sigma_{e,h}$ is coercive. Existence of solutions
follows then from Proposition~\ref{8CJa9S}.

Finally, the conditions $R_h\to\infty$ and $\omega_h\to0$ imply uniform convergence of
$\sigma_{e,h}^*$ to $\hat\sigma_e$, and with $R_h\to\infty$ and
 \eqref{eqdefhatsigmahweigts}
the same holds for
$\hat\sigma_{e,h}$.
By Proposition~\ref{abAx6h} the solutions converge.
 \end{proof}


\end{document}